\documentclass[12pt,draftcls,onecolumn]{IEEEtran}
\usepackage{CJK}
\usepackage{amssymb,amsmath,mathrsfs,amsopn,amsfonts,graphicx,color,amsthm}
\usepackage{algorithmic}
\usepackage{graphicx}
\usepackage{mathtools}
\usepackage{subfig}
\usepackage{float}
\usepackage{multirow}
\usepackage{textcomp}
\usepackage{algorithm}
\usepackage{booktabs}
\usepackage{epstopdf}

\newtheorem{remark}{Remark}
\newtheorem{theorem}{Theorem}
\newtheorem{lemma}{Lemma}

\newtheorem{assumption}{Assumption}
\newtheorem{definition}{Definition}
\newtheorem{corollary}{Corollary}

\theoremstyle{condition}
\newtheorem{condition}{Condition}[]

\allowdisplaybreaks

\def\({\Big(}
\def\){\Big)}

\def\a{\alpha}

\def\ba{\begin{array}}
\def\ea{\end{array}}
\def\ban{\begin{eqnarray*}}
\def\ean{\end{eqnarray*}}
\def\bann{\begin{eqnarray*}}
\def\eann{\end{eqnarray*}}
\def\bd{\begin{description}}
\def\ed{\end{description}}
\def\be{\begin{equation}}
\def\ee{\end{equation}}
\def\bna{\begin{eqnarray}}
\def\ena{\end{eqnarray}}

\def\ban{\begin{eqnarray*}}
\def\ean{\end{eqnarray*}}
\def\bna{\begin{eqnarray}}
\def\ena{\end{eqnarray}}
\def\bnaa{\begin{eqnarray}}
\def\enaa{\end{eqnarray}}
\def\bann{\begin{eqnarray*}}
\def\eann{\end{eqnarray*}}

\ifCLASSINFOpdf
\else
\fi
\begin{document}
\begin{CJK}{GBK}{song}

\title{Continuous-Time Decentralized Online Estimation With Additive Noises}
%
%
%

\author{Xiaozheng Fu,~Yan Chen and Tao Li
\thanks{This work was supported by the National Natural Science Foundation
of China under Grant 62261136550. \emph{(Corresponding author: Tao Li.)} }
\thanks{Xiaozheng Fu is with the School of Mathematics and Statistics, Ningbo University, Ningbo 315211, China (e-mail: fuxiaozheng@nbu.edu.cn).}
\thanks{Yan Chen is with the State Key Laboratory of Mathematical Sciences, Academy of
Mathematics and Systems Science, Chinese Academy of Sciences,
Beijing 100190, China (e-mail: yanchen2026@amss.ac.cn).}
\thanks{Tao Li is with the Key Laboratory of Management, Decision and Information
Systems, Institute of Systems Science, Academy of Mathematics and
Systems Science, Chinese Academy of Sciences, Beijing 100190, China (e-mail: litao@amss.ac.cn).}}

%
%

\markboth{Journal of \LaTeX\ Class Files, May~2023}%
{Shell \MakeLowercase{\textit{et al.}}: Bare Demo of IEEEtran.cls for IEEE Journals}
%



\maketitle

\begin{abstract}
We study a decentralized online estimation problem with additive communication noises over the fixed digraph. Each node has a linear measurement of an unknown parameter with random measurement matrices and runs a continuous-time online estimation algorithm.
We transform the convergence analysis of the algorithm into the stability analysis of the non-autonomous linear stochastic differential equation (SDE) with random time-varying coefficients, and develop the asymptotic stability by numerical approximation theory.
Based on the stability results, we show that the algorithm gains can be properly designed to ensure mean square convergence if the measurement matrices and the communication graph satisfy the stochastic spatial-temporal persistence of excitation condition. Furthermore, a special case where the measurement matrices contain a Markov chain is investigated, and the theoretical results are demonstrated by a numerical example.
\end{abstract}

\begin{IEEEkeywords}
Decentralized online estimation, continuous-time algorithm, asymptotic convergence, stochastic differential equation, random time-varying coefficient.
\end{IEEEkeywords}

\section{Introduction}
\label{Asec:introduction}
\IEEEPARstart{T}{he} decentralized parameter estimation of multi-agent systems is widely used in wireless sensor networks \cite{RAV2021TC}, unmanned aerial vehicle formation \cite{DTZ2021TASE} and radar detection \cite{CWA2022AESM}, and therefore, the design and analysis of algorithms have become a hot topic in control theory. Compared with the centralized algorithms with an information fusion center, the decentralized algorithms are more robust in the case of partial sensor failure, and can reduce the communicating and computing costs of sensors.

Up till now, there are some researches on discrete-time decentralized estimation algorithm, e.g. \cite{KMP2013SICON,MWL2019IJRNC,XG2018SICON}.
{Continuous-time signals are widely used in many practical scenarios, such as electrical and speech signals, and their dynamics are often modeled by (stochastic) differential equations according to physical laws \cite{S2001SSBM}.
For example, in radar detection, the target dynamics naturally evolve in continuous time, with key states such as position and velocity varying smoothly over time. 
The continuous-time decentralized online estimation is therefore well aligned with the intrinsic physical characteristics of radar target motion.} 
In recent years, continuous-time {decentralized} estimation algorithms have attracted much attention, e.g.,  \cite{NS2011ASI,CWH2014TAC,ZZ2012TAC,ZGL2022CSL,ZGL2024JSSC}.
Nascimento and Sayed \cite{NS2011ASI} studied the exponential stability of the continuous-time diffusion decentralized least-mean squares algorithm without noise, and assumed that the {measurement} matrices satisfy the persistence of excitation condition, that is, the integral of the {measurement} matrices over a fixed-length time interval has upper and lower bounds.
Chen et al. \cite{CWH2014TAC} studied the uniformly exponential convergence of the continuous-time decentralized cooperative identification algorithm, and required the {measurement} matrix to be uniformly bounded and satisfy the cooperative persistence of excitation condition.
In addition, measurement losses or node sensing failures can be modeled by random measurement  matrices \cite{WLZ2021TIT}.
Zhang and Zhang \cite{ZZ2012TAC} studied the continuous-time decentralized estimation algorithm, where the measurement matrices were assumed to satisfy the global observability with the known expectations, and proved the mean square convergence. 
Zhu et al. \cite{ZGL2022CSL} proposed the least squares algorithm for a single node based on sampling data, where the regression vectors were required to satisfy the Lipschitz condition for all sample paths, and proved the almost surely asymptotic convergence.
Furthermore, they proposed a decentralized least squares algorithm based on sampling data in \cite{ZGL2024JSSC}, where the regression vectors are required to satisfy the Lipschitz condition for all sample paths and the cooperative excitation condition, and proved the almost surely asymptotic convergence.
Note that the {measurement} matrices are deterministic or random with the known expectation or satisfy the Lipschitz condition in the above literature.


Most of the above works supposed that the communication between agents is ideal, that is, agents can receive accurate measurement information from neighbors.
In the realistic network, the communication between nodes is usually interfered by noises.
The additive noise is an important noise in the communication process of sensor networks, which can be used to model the thermal noise \cite{LZ2010TAC}. The characteristic of the additive noise is that its intensity is independent of the states of agents. Until now, some works have been devoted to the decentralized online estimation with additive noises, e.g.,  \cite{ZZ2012TAC,JVB2023SICON,KMR2012TIT}.
For the discrete-time algorithm, Jakovetic et al. \cite{JVB2023SICON} studied the almost sure convergence of the decentralized estimation with a zero-mean additive communication noise, where the measurement matrix is deterministic.
In \cite{KMR2012TIT}, the regressors are assumed to be i.i.d. with the known expectation and the finite second moment, and be independent with the additive communication noise.
For the continuous-time algorithm, Zhang and Zhang \cite{ZZ2012TAC} considered additive communication noises, which are independent with the random measurement matrices with the known expectation.

Most of the above researches considered the effect of the random {measurement}  matrices and the additive communication noises for the continuous-time estimation problem separately. In real networks, different uncertainties may exist at the same time.
For this purpose, we study a continuous-time decentralized cooperative online estimation with the random {measurement} matrices and the additive communication noises. Each node has a linear measurement of an unknown parameter with Markovian switching measurement matrices and runs a continuous-time online estimation algorithm consisting of an innovation term processing the new measurement and a consensus term taking a weighted sum of its estimate and its neighbours' estimates with the additive communication noises.
Based on the algebraic graph theory and the matrix theory, we transform the convergence of the algorithm into the asymptotic stability of linear  stochastic differential equation (SDEs) with random time-varying coefficients.   

The It\^{o} SDEs with random coefficients \cite{Bishop2019,Yong1999,friedman1975stochastic},
in which both random coefficients and Brownian motions are considered,  are closely related to the system control, estimation and filtering, etc., and are widely used in economic, financial, physical and engineering systems \cite{Lim2002,Soong1973}.
Up to now, SDEs with deterministic drift and diffusion coefficients have been extensively studied, e.g.,
\cite{Higham2003,Higham2007,Khasminskii2011,Mao2007,Mao2015}, 
and the ordinary differential equations with random coefficients have been investigated in the past decades, e.g.,
\cite{Blankenship1977,Geman1979,csorgHo2010stability,edsinger1970mean,Zhang2025SCIS}.
However, there is still lack of the study on the asymptotic stability of SDEs with random time-varying coefficients.

The SDEs with the additive noises in the diffusion terms are important and some well-known stochastic processes are the solutions of this kind of equations, such as the Ornstein-Uhlenbeck process, the Brownian bridge process, etc \cite{Mao2007}. The intensity of additive noises is independent of the state of the system.
If the intensity of additive noises does not tend to zero, i.e. the noises are non-decaying, then the solution cannot converge to the trivial solution even if the homogeneous equation is stable. It is possible that the solution is asymptotically stable only if the additive noises are decaying.
The case with non-decaying additive noises  was studied in \cite{Baccouch2016,Buckwar2011,Tanwani2021,JIANG2023890}, and the case with decaying additive noises was studied in \cite{Bishop2019,Hernandez1992,Cruz2020,Zong2018}.
The estimation of the state transition matrix requires that the norm of the difference between the drift coefficient and its limit decay at least at an exponential rate in \cite{Bishop2019}.

{For the study of the estimation problems, 
the cumulative prediction error needs to be minimized to derive the least squares estimate in \cite{ZGL2022CSL,ZGL2024JSSC}.}
For the study of the SDEs, the non-autonomous SDEs and the autonomous SDEs with Markovian switching were discussed in \cite{Mao2007} and \cite{Mao2006}, respectively.
In \cite{Mao2006}, the SDEs with Markovian switching were considered, and the exponential stability and the asymptotic stability in distribution were studied by numerical solutions.
The mean square asymptotic stability of the SDEs with deterministic time-varying coefficients and additive noises was investigated in \cite{Zong2018}. 
{The asymptotically mean square stability of the stochastic Markovian jump systems with the control input were studied in \cite{WSH2023SCIS}.} 
Due to the coexistence of the random coefficients and the additive noises, the above Lyapunov function method and numerical approximate solutions cannot be used to study the asymptotic stability directly in this paper.

{
Considering the coexistence of the random {measurement} matrices and additive communication noises, we devote to dealing with the continuous-time decentralized online estimation by numerical approximation method in this paper. The main contributions of this study are outlined as follows.
\begin{itemize}
  \item In \cite{Zong2018} and \cite{Mao2006}, the non-autonomous SDEs with deterministic time-varying coefficients and the autonomous SDEs with Markovian switching were studied, respectively. Different from \cite{Zong2018,Mao2006}, we consider non-autonomous SDEs with random time-varying coefficients. 
As the analytical solutions of these equations cannot be obtained, we investigate the asymptotic stability of the trivial solution by the numerical approximate solutions.
    We assume that the drift and diffusion coefficients are upper bounded by two squared integrable functions almost surely, and the upper bound of the diffusion coefficient monotonically decreases to zero. 
We give a stochastic persistence of excitation condition that the induced matrix measure of the conditional expectation of the integral concerning the drift coefficient over a fixed-length interval is upper bounded by a sequence whose summation is minus infinity.
    Under the above assumption and condition, we get the asymptotic stability of the true solution and the numerical approximate solutions.
    Especially, we show that if the drift and diffusion coefficients are $\mathcal O\left(1/(t+1)^{\frac{1}{2}+\varepsilon_1}\right)$ and $\mathcal O\left(1/(t+1)^{\frac{1}{2}+\varepsilon_2}\right)$, respectively,
    and the drift coefficient satisfies some persistence of excitation condition, i.e. the induced matrix measure of the conditional expectation of the integral concerning the drift coefficient  over a fixed-length interval is less than a sequence $-c(m)$ satisfying $\liminf\limits_{m\to\infty} c(m)(1+mh\Delta)^{\frac{1}{2}+\varepsilon_1}\!>\!0$,
    then the mean square asymptotic stability of the true solution is achieved.
  \item 
In \cite{NS2011ASI,CWH2014TAC,ZZ2012TAC}, the {measurement} matrices are assumed to be deterministic or their expectations are known. However, these assumptions are often difficult to be satisfied in practical systems, since the observations may be disturbed by uncertainties and their statistical expectations are frequently unavailable in dynamic environments. 
Different from \cite{NS2011ASI,CWH2014TAC,ZZ2012TAC}, we consider random time-varying {measurement} matrices, which aligns more closely with the characteristics of many practical systems.  
  Based on the stability analysis of the SDEs with random time-varying coefficients, we prove that if the measurement matrices and the graph satisfy the stochastic spatial-temporal persistence of excitation condition, then the algorithm gains can be designed to guarantee mean square convergence of the continuous-time algorithm.
     We further obtain the mean square convergence of the continuous-time algorithm for the case where the {measurement} matrices contain a Markov chain with strongly 1-exponential ergodicity.
\end{itemize}}

This paper is organized as follows:
the continuous-time decentralized online estimation algorithms and the construction of the
the SDEs with random time-varying coefficients 
are presented in Section \ref{sec:probform}.
The asymptotic stability of the SDEs is given in Section \ref{sec:stability}.
The asymptotic convergence of the algorithm is given in Section \ref{sec:convergence}.
A numerical example is given in Section \ref{sec:example}.
The whole paper is concluded in Section \ref{sec:conclusions}.

Symbols and notations: 
$\mathbb{R}^+$: set of positive real numbers; 
$\mathbb R^n$: $n$-dimensional real vector space; $\mathbb{R}^{m\times n}$: $m\times n$-dimensional real matrix space; 
$\lfloor x\rfloor$: biggest integer less than or equal to $x$; 
$|a|$: absolute value of the real number $a$; 
$\mathbf{0}_n$: $n$-dimensional zero vector; 
$\mathbf{1}_N$: $N$-dimensional vector with all elements being $1$; 
$\mathbf{0}_{n\times n}$: $n$-dimensional zero matrix;
$I_n$: $n$-dimensional identity matrix;
$\otimes$: Kronecker product; 
$\textbf{diag}(A_1,\cdots,A_n)$: block diagonal matrix whose diagonal elements are $A_1,\cdots,A_n$; 
$A^{\top}$: transpose of matrix $A$; 
$\lambda_{\max}(A)$: maximum eigenvalue of the real symmetric matrix $A$; 
$\lambda_{\min}(A)$: minimum eigenvalue of the real symmetric matrix $A$; 
$\rho(A)$: spectral radius of matrix $A$; 
$\|A\|_2$: 2-norm of matrix $A$; $\mu_2(A)=\lambda_{\max}\left(\frac{A+A^{\top}}{2}\right)$: matrix measure induced by the 2-norm; 
$(\Omega, \mathcal F, P)$: complete probability space; $\{\mathcal F(t): t\geq 0\}$: $\sigma$-algebraic flow on $(\Omega, \mathcal F, P)$ satisfying the usual conditions that $\mathcal{F}(t)$ is right continuous and $\mathcal{F}(0)$ contains all zero probability sets; 
$\mathbb{E}[\xi]$: mathematical expectation of the random variable $\xi$; 
$L_{\mathcal{F}(t)}^p(\Omega, \mathbb{R}^{n\times n})$: the family of $\mathbb{R}^{n\times n}$-valued $\mathcal{F}(t)$-measurable random variables $\xi$ with $\mathbb{E}[\|\xi\|_2^p]<\infty$; 
$a_n=\mathcal O(b_n)$: $\lim\sup_{n\to\infty}\frac{|a_n|}{b_n}<\infty$, where $\{a_n,n\ge 0\}$ is the real sequence and $\{b_n,n\ge 0\}$ is the sequence of positive real numbers; 
$a_n=o(b_n)$: $\lim_{n\to\infty}\frac{a_n}{b_n}=0$; 
for a sequence of $n$-dimensional matrices $\{Y(k),k\geq 0\}$, denote
$$
\Phi_Y(j,i)=\left\{
\begin{aligned}
&Y(j)\cdots Y(i), \quad & j\geq i,  \\
&I_n, \quad & j<i.
\end{aligned}
\right.
$$
For a series of scalars $\{c(k), k\geq 0\}$, denote
$$ \prod_{k=i}^jc(k)=\left\{
\begin{aligned}
&c(j)\cdots c(i), \ & j\geq i,  \\
&1, \ & j<i,
\end{aligned}
\right.
\quad
\sum_{k=i}^jc(k)=\left\{
\begin{aligned}
&c(i)+\cdots+ c(j), \ & j\geq i,  \\
&0, \ & j<i.
\end{aligned}
\right.
$$

\section{Problem Formulations}\label{sec:probform}

Consider a balanced fixed digraph consisting of $N$ nodes. Suppose that the relationships among nodes are described by the graph $\mathcal{G}=\{\mathcal{V}, \mathcal{E}_{\mathcal{G}}, \mathcal{A}_{\mathcal{G}}\}$,
where $\mathcal{V}=\{1,2,\cdots, N\}$ is the set of nodes, $\mathcal{E}_{\mathcal{G}}$ is the set of edges, and $\mathcal{A}_{\mathcal{G}}=[a_{ij}]_{i,j=1}^N$ is the adjacency matrix.
Denote the neighbors of the $i$th node by $\mathcal{N}_i=\{j\in \mathcal{V}|(j,i)\in\mathcal{E}_{\mathcal{G}}\}$, the degree matrix of $\mathcal{G}$ by  $\mathcal{D}_{\mathcal{G}}=\textbf{diag}\left(\sum_{j=1}^Na_{1j},\sum_{j=1}^Na_{2j},\cdots,\sum_{j=1}^N a_{Nj}\right)$, and the Laplacian matrix by  $\mathcal{L}_{\mathcal{G}}=\mathcal{D}_{\mathcal{G}}-\mathcal{A}_{\mathcal{G}}$.
All nodes over the network cooperatively estimate the
unknown parameter vector {$\theta$}  by information exchange among nodes. For each node $i\in\{1, 2, \cdots, N\}$, we assume that its measurement of {$\theta$} is a linear function, i.e. the measurement of node $i$ at instant $t$ satisfies
{\setlength\abovedisplayskip{3pt}
\setlength\belowdisplayskip{3pt}
\bna\label{model1}
&&\hspace{-0.4cm}dz_i(t)=H_i(t){\theta}dt,
~i=1,2,\cdots,N,~t\in\mathbb{R}^+,
\ena
where $H_i(t)\in\mathbb{R}^{n_i\times n}(n_i\leq n)$ are the random measurement matrices.}

For node $i$, we consider the following continuous-time decentralized cooperative online estimation algorithm
{\setlength\abovedisplayskip{3pt}
\setlength\belowdisplayskip{1pt}
\bna\label{model3}
&&\hspace{-1.4cm}d{\theta_i(t)}=\alpha(t)H_i^{\top}(t)[dz_i(t)-H_i(t){\theta_i(t)}dt]
+\beta(t)\sum\limits_{j\in {\mathcal{N}_i}}a_{ij}{dy_{ji}(s)},
\ena
{with
$
dy_{ji}(t)=({\theta_j(t)-\theta_i(t)})dt+\sigma_{ji}dw_{ji}(t) 
$ 
denoting the measurement of relative states by agent $i$ from its
neighbor $j\in\mathcal{N}_i$.} Here, ${\theta_{i}(t)}\in\mathbb{R}^n$ is the state of node $i$, representing its estimate of {$\theta$},    $\{w_{ji}(t),i,j=1,2,\cdots, N\}$ are independent Brownian motions,  $\sigma_{ji}\in\mathbb{R}^n$ is the intensity coefficient of the additive measurement noise, and $\alpha(t)$ and $\beta(t)$ are the algorithm gains.}
{The detailed steps of the continuous-time decentralized online estimation algorithm are summarized in Algorithm \ref{alg:CTDOEA}.}


Denote the $\sigma$-field
$\mathcal{F}(t)=\sigma(H_i(s),w_{ji}(s),j,i=1,2,\cdots, N,0\leq s\leq t)$,
\begin{align*}
&z(t)=[z_1^{\top}(t), z_2^{\top}(t), \cdots, z_N^{\top}(t)]^{\top},
H(t)=[H_1^{\top}(t), H_2^{\top}(t), \cdots, H_N^{\top}(t)]^{\top},\\
&{\Theta(t)=[\theta_1^{\top}(t),\theta_2^{\top}(t),\cdots,\theta_N^{\top}(t)]^{\top}},
D=\textbf{diag}\{\alpha_1^{\top}\otimes I_n, \alpha_2^{\top}\otimes I_n,\cdots,\alpha_N^{\top}\otimes I_n\},\\
&\mathcal{H}(t)=\textbf{diag}\{H_1(t),H_2(t),\cdots,H_N(t)\},
\Sigma=\textbf{diag}\{\sigma_{11},\cdots,\sigma_{N1}, \cdots,  \sigma_{1N}, \cdots, \sigma_{NN}\},\\
&w(t)=[w_{11}(t), \cdots, w_{N1}(t),\cdots, w_{1N}(t),\cdots, w_{NN}(t)]^{\top},
\end{align*}
where $\alpha_i^{\top}$ is the $i$-th row of $\mathcal{A}_{\mathcal{G}}$.
Then (\ref{model1}) can be written as the following compact form
{\setlength\abovedisplayskip{5pt}
\setlength\belowdisplayskip{5pt}
\bna\label{model2}
&&\hspace{-0.4cm}dz(t)=H(t){\theta}dt,
~t\in\mathbb{R}^+,
\ena
and (\ref{model3}) can be written as the following compact form}
{\setlength\abovedisplayskip{5pt}
\setlength\belowdisplayskip{5pt}
\ban
&&\hspace{-0.5cm}d{\Theta(t)}=\big[-\alpha(t)\mathcal{H}^{\top}(t)\mathcal{H}(t)-\beta(t)(\mathcal{L}_{\mathcal{G}}\otimes I_n)\big]{\Theta(t)}dt+\alpha(t)\mathcal{H}^{\top}(t)dz(t)
+\beta(t)D\Sigma dw(t).
\ean
Denote the estimation error by $e(t)={\Theta(t)}-\mathbf{1}_N\otimes {\theta}$. From the above equation, (\ref{model2}), $(\mathcal{L}_{\mathcal{G}}\otimes I_n)(\mathbf{1}_N\otimes {\theta})=0$ and $\mathcal{H}(t)(\mathbf{1}_N\otimes {\theta})=H(t){\theta}$, we have}
{\setlength\abovedisplayskip{5pt}
\setlength\belowdisplayskip{3pt}
\begin{align}
de(t)
= &[-\alpha(t)\mathcal{H}^{\top}(t)\mathcal{H}(t)-\beta(t)(\mathcal{L}_{\mathcal{G}}\otimes I_n)]e(t)dt+[-\alpha(t)\mathcal{H}^{\top}(t)\mathcal{H}(t)\notag\\
&-\beta(t)(\mathcal{L}_{\mathcal{G}}\otimes I_n)]
(\mathbf{1}_N\otimes {\theta})dt+\alpha(t)\mathcal{H}^{\top}(t)H(t){\theta}dt+\beta(t)D\Sigma dw(t)\notag\\
=& [-\alpha(t)\mathcal{H}^{\top}(t)\mathcal{H}(t)-\beta(t)(\mathcal{L}_{\mathcal{G}}\otimes I_n)]e(t)dt+\beta(t)D\Sigma dw(t).\label{errorequation}
\end{align}}

The above error equation (\ref{errorequation}) comes down to the following linear SDEs with random time-varying coefficients and the additive noise:
%
\bna\label{eq1}
\left\{
\begin{array}{l}
dx(t)=A(t,\omega)x(t)dt+D(t,\omega)dw(t), ~~t\geq 0,\\
x(0)=x_0,
\end{array}\right.
\ena
where $x(t)\in\mathbb{R}^n$ is the state, $(w(t)\in\mathbb{R}^m, \mathcal{F}(t))$ is an $m$-dimensional standard Wiener process, $(A(t,\omega)\in\mathbb{R}^{n\times n}, \mathcal{F}(t))$ and $(D(t,\omega)\in\mathbb{R}^{n\times m}, \mathcal{F}(t))$ are matrix-valued adapted processes.
For convenience, the sample point $\omega$ is omitted.
\vskip 0.2cm

\begin{remark}
\rm{There are many results on the moment stability of SDEs, e.g.,  \cite{Higham2007,Mao2007,Mao2006,Zong2018}. In \cite{Zong2018}, the mean square asymptotic stability of the non-autonomous SDEs with deterministic time-varying coefficients {was} studied. In \cite{Mao2006}, the autonomous SDEs with Markovian switching were considered, and the exponential stability and the asymptotic stability in distribution were studied by numerical solutions. Motivated by the above results, we develop the mean square asymptotic stability of non-autonomous SDEs with random coefficients that can be used to solve decentralized estimation problems.}
\end{remark}

As the analytical solutions of (\ref{eq1}) cannot be obtained, we propose a numerical approximation method to construct new numerical approximate solutions and investigate the asymptotic stability of the trivial solution by the numerical approximate solutions.
\label{construction}
The equivalent integral form of (\ref{eq1}) is given by
{\setlength\abovedisplayskip{5pt}
\setlength\belowdisplayskip{5pt}
\ban
&&\hspace{-0.4cm}x(t)=x_0+\int_{0}^tA(s)x(s)ds+\int_{0}^tD(s)dw(s),~t\geq 0.
\ean
Given a step $\Delta>0$, define the following DTNAS}
{\setlength\abovedisplayskip{5pt}
\setlength\belowdisplayskip{5pt}\ban
&&\hspace{-0.4cm}X((k+1)\Delta)=X(k\Delta)+\int_{k\Delta}^{(k+1)\Delta}A(s)ds X(k\Delta)+\int_{k\Delta}^{(k+1)\Delta}D(s)dw(s),
\ean
i.e.}
\bna\label{EM5}
&&\hspace{-0.4cm}X((k+1)\Delta)=(I_n+\widetilde{A}(k+1))X(k\Delta)+\xi(k+1),~k=0,1,\cdots,
\ena
where $\widetilde{A}(k)=\int_{(k-1)\Delta}^{k\Delta}A(s)ds, \xi(k)=\int_{(k-1)\Delta}^{k\Delta}D(s)dw(s)$, $k=1,2,\cdots$, $\widetilde{A}(0)=\mathbf{0}_{n\times n}$, $\xi(0)=\mathbf{0}_{n}$.
Then $\{\widetilde{A}(k),\mathcal{F}(k\Delta), k=1,2,\cdots\}$ and $\{\xi(k),\mathcal{F}(k\Delta), k=1,2,\cdots\}$ are adapted sequences.
Denote $\overline{X}(t)=\sum_{k=0}^\infty X(k\Delta)I_{[k\Delta, (k+1)\Delta)}(t), t\geq 0$, and $n_{\Delta,T}=\lfloor\frac{T}{\Delta}\rfloor$, $T\geq 0$. Define the following CTNAS
{\setlength\abovedisplayskip{5pt}
\setlength\belowdisplayskip{5pt}
\bna\label{EM6}
&&\hspace{-0.4cm}\widehat{X}_{n_{\Delta,T}\Delta}(t)=x(n_{\Delta,T}\Delta)+\int_{n_{\Delta,T}\Delta}^tA(s)\overline{X}(s)ds+\int_{n_{\Delta,T}\Delta}^tD(s)dw(s),
\ena
with 
$\widehat{X}_{n_{\Delta,T}\Delta}(n_{\Delta,T}\Delta)=x(n_{\Delta,T}\Delta)$.
Obviously, $X(k\Delta)=\widehat{X}_{n_{\Delta,T}\Delta}(k\Delta)=\overline{X}(k\Delta)$, $k\geq n_{\Delta,T}$.
Specially, if $T=0$, $\widehat{X}_{n_{\Delta,T}\Delta}(t)$ degenerates to the CTNAS in \cite{Mao2007}.}
\vskip 0.2cm
In the following, we give the definitions of the mean square asymptotic stabilities of the solution of (\ref{eq1}), the DTNAS (\ref{EM5}) and the CTNAS (\ref{EM6}), respectively.

\begin{definition}[{See \cite{Mao2007}}]\label{definition1}
\rm{The solution of (\ref{eq1}) is asymptotically stable in mean square, if $\lim_{t\to \infty}\\\mathbb{E}\big[\|x(t)\|_2^2\big]=0$ for any initial value ${x_0}\in L_{\mathcal{F}(0)}^2(\Omega, \mathbb{R}^n)$.}
\end{definition}
                                                                
\begin{definition}\label{definition2}
\rm{For a given step $\Delta>0$, the discrete-time $\Delta-$numerical approximate solution (\ref{EM5}) is mean square asymptotically stable, if $\lim_{k\to \infty}\mathbb{E}\Big[\|X(k\Delta)\|_2^2\Big]=0$ for any initial value ${x_0}\in L_{\mathcal{F}(0)}^2(\Omega, \mathbb{R}^n)$.}
\end{definition}

\begin{definition}\label{definition3}
\rm{For a given step $\Delta>0$ and $T\geq 0$, the continuous-time $\Delta-$numerical approximate solution (\ref{EM6}) is mean square asymptotically stable, if $\lim_{t\to \infty}\mathbb{E}\Big[\big\|\widehat{X}_{n_{\Delta,T}\Delta}(t)\big\|_2^2\Big]=0$ for any initial value $x(n_{\Delta,T}\Delta)\in L_{\mathcal{F}(n_{\Delta,T}\Delta)}^2(\Omega, \mathbb{R}^n)$.}
\end{definition}

%
%
%
%
%
%
%
%

\begin{algorithm}[!h] 
    \caption{{Continuous-time decentralized online estimation algorithm}}
    \label{alg:CTDOEA}
    \renewcommand{\algorithmicrequire}{\textbf{Input:}}
    \renewcommand{\algorithmicensure}{\textbf{Output:}}
    
    \begin{algorithmic}[1]
        \REQUIRE  measurement matrices $H_i(t)$; measurements $z_i(t)$; 
gains $\alpha(t),\beta(t)$; adjacency matrix $\mathcal{A}_{\mathcal G}$;
measurements of relative states $y_{ji}(t)$;
step size $\Delta>0$.

        \STATE Initialize estimates ${\theta_i(0)}$

\FOR{$k = 0,1,2,\ldots$}
    \STATE $t_k \gets k\Delta$
    \FOR{$i = 1,2,\ldots,N$}


        \STATE \textbf{Compute local innovation term:}
        \STATE $g_i(t_k) \gets \int_{t_k}^{t_{k+1}}\alpha(s)H_i^{\top}(s)dz_i(s) - \int_{t_k}^{t_{k+1}}\alpha(s)H_i^{\top}(s)H_i(s)x_i(s)ds$

        \STATE \textbf{Compute consensus coupling term:}
        \STATE $c_i(t_k) \gets \int_{t_k}^{t_{k+1}}\beta(s)\sum_{j \in \mathcal{N}_i} a_{ij}dy_{ji}(s)$

        \STATE \textbf{Update the estimates:}
        \STATE ${\theta_i(t_{k+1})} \gets {\theta_i(t_k)}
        + g_i(t_k) + c_i(t_k)$

    \ENDFOR
\ENDFOR
        \ENSURE Estimates ${\theta_i(t)}$.
    \end{algorithmic}
\end{algorithm}
\section{Asymptotic stabilities of the solutions}
\label{sec:stability}

%
%

We have the following assumption on the coefficients in (\ref{eq1}).
\vskip 0.2cm
\begin{assumption}\label{as1}
\rm{$\{A(t),t\geq 0\}$ is independent of $\{D(t), w(t), t\geq 0\}$ and
there exist real-valued functions $a(t)$ and $d(t)$
with $\int_{0}^\infty a^2(t)dt<\infty$, $\int_{0}^\infty d^2(t)dt<\infty$, $d(t)$ monotonically decreases, and $d(t+\Delta)=\mathcal{O}(d(t))$, for a given $\Delta>0$, such that $\|A(t)\|_2\leq a(t), \|D(t)\|_2\leq d(t), \forall~t\geq 0~\rm{a.s.}$}
\end{assumption}
\vskip 0.1cm

\begin{remark}
\rm{
(i) The independence assumption in Assumption \ref{as1}, which requires
$\{A(t), t \ge 0\}$ to be independent of $\{D(t), w(t), t \ge 0\}$ in (\ref{eq1}), is reasonable.
Since (\ref{eq1}) includes (\ref{errorequation}) as a special case, for (\ref{errorequation}), this assumption is equivalent to requiring that
$\{\mathcal{H}(t), t \ge 0\}$ is independent of $\{w(t), t \ge 0\}$. 
This means that the randomness in the observation matrices is independent of the communication noises. 
This is natural because these two types of uncertainties originate from different sources:
the randomness in the observation matrices is generated locally by the sensing process, whereas the communication noises
are introduced by the network during information transmission.
Hence, there is no direct coupling between them, and assuming independence is well justified.
Note that similar assumptions are standard in the discrete-time decentralized parameter estimation literature \cite{ZZ2012TAC,ZTS2012TSP}. 
(ii) In Assumption \ref{as1}, the assumption that both the drift coefficient and the diffusion coefficient are bounded by a decaying sequence is also reasonable.
For (\ref{model3}), our objective is to ensure that the estimation error converges to zero, and thus it is natural to require the right-hand side of (\ref{model3}) to decay to zero as $t \to \infty$. 
Moreover, this assumption can be satisfied in practice by properly designing time-varying gains $\alpha(t)$ and $\beta(t)$.
}
\end{remark}

Then, we give the following conditions based on the drift coefficient, where Condition \ref{condition1} (i) is called a stochastic persistence of excitation condition.
\vskip -0.2cm
\begin{condition}\label{condition1}
For a given $\Delta>0$, there exists an integer $h>0$, a positive real sequence $\{c(m),m\geq 0\}$ tending to zero with $\sum_{m=0}^\infty c(m)=\infty$, 
and a positive real sequence $\{\rho(m),m\geq 0\}$ monotonically decreasing to zero with
$d(m\Delta)=\mathcal O(\rho(m))$ and $\rho^2(mh)=o(c(m))$, such that \\
\rm{(i)}~ $\mu_2\left(\mathbb{E}\Big[\int_{mh\Delta}^{(m+1)h\Delta}A(s)ds\Big|\mathcal{F}(mh\Delta)\Big]\right) \leq -c(m)$~\rm{a.s.},$~m=0,1,\cdots$,\\
\rm{(ii)}~ 
$\mathbb{E}\bigg[\bigg(\max\limits_{k\Delta\leq s< (k+1)\Delta}\|A(s)\|_2\bigg)^{2^{\max\{h,2\}}}\bigg|\mathcal{F}(k\Delta)\bigg]^{\frac{1}{2^{\max\{h,2\}}}}\leq \rho(k)$~\rm{a.s.},$~k=0,1, \cdots$.\\
\end{condition}



To prove the asymptotic stability of (\ref{eq1}), we need the following four steps.
Among the results, the proofs of Lemmas \ref{lemma5}, \ref{lemma2}, \ref{lemma6} and \ref{lemma3} are given in Appendix \ref{appendix:A}.
\vskip 0.2cm
(I) We give the conditions for the mean square asymptotic stability of the DTNAS (\ref{EM5}), and the conditions for the equivalence of the mean square asymptotic stabilities between the DTNAS (\ref{EM5}) and the CTNAS (\ref{EM6}).
%
%
\vskip 0.2cm

Firstly, we give sufficient conditions for the mean square asymptotic stability of the DTNAS (\ref{EM5}).
Denote $V(k\Delta)=X^{\top}(k\Delta)X(k\Delta)$ and $m_k=\lfloor\frac{k}{\Delta}\rfloor$.

\begin{lemma}\label{lemma5}
\rm{For (\ref{EM5}), if there exists a constant $\Delta>0$ such that Assumption \ref{as1} and Condition \ref{condition1} hold,
then the DTNAS (\ref{EM5}) is asymptotically stable in mean square and
{\setlength\abovedisplayskip{7pt}
\setlength\belowdisplayskip{7pt}
\begin{align*}
&~\mathbb{E}[V((k+1)\Delta)]\cr
\leq&~ \Big(1+2\Delta\rho(m_kh)+\Delta^2\rho^2(m_kh)\Big)^h\prod_{i=L}^{m_k-1}\bigg(1-\frac{1}{2}c(i)\bigg)\mathbb{E}[V(Lh\Delta)]\cr
&~+2h\Delta\Big(1+2\Delta\rho(m_kh)+\Delta^2\rho^2(m_kh)\Big)^h\sum_{i=L}^{m_k-1}d^2(ih\Delta) \prod_{j=i+1}^{m_k-1}\bigg(1-\frac{1}{2}c(j)\bigg)\cr
&~+h\Delta d^2(m_kh\Delta)\Big(1+2\Delta\rho(m_kh)+\Delta^2\rho^2(m_kh)\Big)^h,~L\geq0.
\end{align*}}}
\end{lemma}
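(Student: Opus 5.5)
The plan is to iterate the DTNAS (\ref{EM5}) over blocks of length $h$ and compare each block with a contraction driven by Condition \ref{condition1}(i). Write $\Phi(m)=\Phi_{I_n+\widetilde A}((m+1)h,mh+1)$ for the $h$-step transition matrix of the homogeneous part on the block $[mh\Delta,(m+1)h\Delta]$. Iterating (\ref{EM5}) gives
\begin{align*}
X((m+1)h\Delta)=\Phi(m)X(mh\Delta)+\sum_{j=mh+1}^{(m+1)h}\Phi_{I_n+\widetilde A}((m+1)h,j+1)\,\xi(j).
\end{align*}

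The independence of $\{A(t)\}$ from $\{D(t),w(t)\}$ in Assumption \ref{as1} kills the cross terms between $X(mh\Delta)$ and the noise. The reason is that $\xi(j)$ is a stochastic integral whose integrand is independent of the $A$-part, and it has zero conditional mean given $\mathcal F((j-1)\Delta)$ together with the whole $A$-path. By It\^o's isometry, $\mathbb E\|\xi(j)\|_2^2\le \Delta d^2((j-1)\Delta)$. Products of $\xi$'s at different indices also vanish by the martingale property. The noise contribution over one block is therefore at most $h\Delta\, d^2(mh\Delta)$ times the squared norm of the partial transition products, using the monotonicity of $d$.

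Each factor satisfies $\|I+\widetilde A(j)\|_2\le 1+\Delta\max\|A\|_2$. Its square has conditional moments bounded via Condition \ref{condition1}(ii) by $1+2\Delta\rho+\Delta^2\rho^2$. To pass from products of conditional moments to a bound on the product, use repeated Cauchy--Schwarz/H\"older conditioning; this is why the exponent $2^{\max\{h,2\}}$ appears. This step accounts for the prefactor $(1+2\Delta\rho(m_kh)+\Delta^2\rho^2(m_kh))^h$, with $\rho$ monotone.

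The key step is the homogeneous contraction $\mathbb E[\|\Phi(m)x\|_2^2\mid\mathcal F(mh\Delta)]\le (1-\tfrac12 c(m))\|x\|_2^2$ for large $m$. Expand
\begin{align*}
\Phi(m)=I+\sum_{j=mh+1}^{(m+1)h}\widetilde A(j)+R(m),
\end{align*}
where $R(m)$ collects all products of at least two $\widetilde A$'s. Then
\begin{align*}
\Phi^\top\Phi=I+\Big(\int A+\int A^\top\Big)+\text{(terms quadratic or higher in }\widetilde A).
\end{align*}
Conditioning on $\mathcal F(mh\Delta)$, the linear term contributes $x^\top(\mathbb E[\int A\mid\cdot]+\mathbb E[\int A\mid\cdot]^\top)x\le -2c(m)\|x\|^2$ by (i) and the definition of $\mu_2$. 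The higher-order terms are bounded in conditional expectation by $C\rho^2(mh)$, again by (ii) and H\"older. Since $\rho^2(mh)=o(c(m))$, for $m\ge L$ large we get the factor $1-2c(m)+C\rho^2(mh)\le 1-\tfrac12 c(m)$. I expect this to be the main obstacle. The $\widetilde A(j)$ inside the block are only conditioned at $mh\Delta$, not at their own times, so the bound has to be made uniform in the intermediate steps. This is exactly where the high moment $2^{\max\{h,2\}}$ in (ii) is used, by peeling off one factor at a time with Cauchy--Schwarz.

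Combining these, $\mathbb E V((m+1)h\Delta)\le(1-\tfrac12c(m))\mathbb EV(mh\Delta)+2h\Delta d^2(mh\Delta)$. Unrolling from $L$ to $m_k-1$ and then handling the partial block from $m_kh\Delta$ to $(k+1)\Delta$ with the crude growth factor gives the displayed inequality. Mean square stability then follows. The condition $c(m)\to0$ with $\sum c(m)=\infty$ forces $\prod(1-\tfrac12c(i))\to0$. The weighted sum tends to zero by a standard Toeplitz-type lemma, since $d^2(ih\Delta)=\mathcal O(\rho^2(ih))=o(c(i))$. Finally, $d(m_kh\Delta)\to0$.
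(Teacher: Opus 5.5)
Your proposal follows the paper's proof essentially step for step: a block recursion over windows of length $h$ obtained by expanding $\Phi_P^{\top}\Phi_P$, with the linear term controlled by Condition~\ref{condition1}(i) and the higher-order terms by (ii) and conditional H\"older and absorbed via $\rho^2(mh)=o(c(m))$; cross terms removed using the independence in Assumption~\ref{as1}; block noise bounded by $2h\Delta d^2(mh\Delta)$; a one-step growth bound for the partial block; and Lemma~\ref{lemma1} for the limits. Your small variations (bounding the quadratic form $X^{\top}(mh\Delta)\,\mathbb{E}[\Phi_P^{\top}\Phi_P\mid\mathcal{F}(mh\Delta)]\,X(mh\Delta)$ directly instead of an operator norm, and conditioning on the whole $A$-path to kill cross terms) are harmless; note that, as in the paper's own proof, the recursion only holds for $m$ beyond some threshold, so the displayed bound is really justified for sufficiently large $L$ rather than for every $L\geq 0$ as the statement claims.
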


In the following, we give sufficient conditions for the equivalence of the mean square asymptotic stabilities between the DTNAS (\ref{EM5}) and the CTNAS (\ref{EM6}).

\vskip 0.2cm
\begin{lemma}\label{lemma2}
\rm{For (\ref{eq1}), if there exists $\Delta>0$ such that Assumption \ref{as1} holds,
then the mean square asymptotic stabilities of the DTNAS (\ref{EM5}) and the CTNAS (\ref{EM6})  are equivalent.}
\end{lemma}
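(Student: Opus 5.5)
The equivalence of the two stabilities should follow from comparing $\widehat{X}_{n_{\Delta,T}\Delta}(t)$ with $X(k\Delta)$ on each interval $[k\Delta,(k+1)\Delta)$, $k\ge n_{\Delta,T}$. On such an interval $\overline{X}(s)=X(k\Delta)$, so (\ref{EM6}) gives the explicit representation
\begin{equation*}
\widehat{X}_{n_{\Delta,T}\Delta}(t)=X(k\Delta)+\int_{k\Delta}^{t}A(s)ds\,X(k\Delta)+\int_{k\Delta}^{t}D(s)dw(s),\quad t\in[k\Delta,(k+1)\Delta).
\end{equation*}
This holds because the CTNAS agrees with the DTNAS at grid points, as noted after (\ref{EM6}). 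Both directions of the equivalence will be read off from this formula.

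For ``CTNAS stable $\Rightarrow$ DTNAS stable'', I would simply observe that $X(k\Delta)=\widehat{X}_{n_{\Delta,T}\Delta}(k\Delta)$ for $k\ge n_{\Delta,T}$. Hence $\mathbb{E}\|X(k\Delta)\|_2^2\to0$ is the restriction of $\mathbb{E}\|\widehat{X}_{n_{\Delta,T}\Delta}(t)\|_2^2\to 0$ to the grid $t=k\Delta$. The initial values are compatible: a CTNAS started from $x(n_{\Delta,T}\Delta)$ corresponds to the DTNAS started at step $n_{\Delta,T}$, and with $T=0$ the two initial conditions coincide exactly.

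For the converse, I would use $\|a+b+c\|^2\le 3(\|a\|^2+\|b\|^2+\|c\|^2)$ together with the bound $\|\int_{k\Delta}^tA(s)ds\|_2\le\int_{k\Delta}^{(k+1)\Delta}a(s)ds\le \sqrt{\Delta}\big(\int_{k\Delta}^{(k+1)\Delta}a^2(s)ds\big)^{1/2}$. This is a deterministic constant that is uniformly bounded in $k$ (indeed it tends to $0$), because $a\in L^2$. For the stochastic integral I would apply the Itô isometry, which gives $\mathbb{E}\|\int_{k\Delta}^tD(s)dw(s)\|_2^2=\mathbb{E}\int_{k\Delta}^t\|D(s)\|_F^2ds\le m\int_{k\Delta}^{(k+1)\Delta}d^2(s)ds\le m\Delta d^2(k\Delta)$, using the monotonicity of $d$ from Assumption \ref{as1}. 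Since $d$ is monotone and square-integrable, $d(t)\to0$. Combining these,
\begin{equation*}
\sup_{t\in[k\Delta,(k+1)\Delta)}\mathbb{E}\big\|\widehat{X}_{n_{\Delta,T}\Delta}(t)\big\|_2^2\le 3\Big(1+\Delta\int_{k\Delta}^{(k+1)\Delta}a^2(s)ds\Big)\mathbb{E}\|X(k\Delta)\|_2^2+3m\Delta d^2(k\Delta),
\end{equation*}
and the right-hand side tends to $0$ as $k\to\infty$.

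The main point needing care is the integrability and adaptedness bookkeeping, which ensures the Itô isometry applies and the cross terms are harmless. Here the independence of $A$ from $(D,w)$ in Assumption \ref{as1} is not even needed, because the crude inequality $3(\cdot)$ avoids the cross terms. Finiteness of $\mathbb{E}\|X(k\Delta)\|_2^2$ for every $k$ follows inductively from (\ref{EM5}), since $\|I_n+\widetilde{A}(k+1)\|_2$ is almost surely bounded by a constant and $\xi(k+1)$ has finite second moment. This also keeps the equivalence valid for every initial value in $L^2$.
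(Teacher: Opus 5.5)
Your proposal is correct and follows essentially the same route as the paper. The easy direction is the grid-point identity $X(k\Delta)=\widehat{X}_{n_{\Delta,T}\Delta}(k\Delta)$. For the converse, the paper likewise splits $\widehat{X}_{n_{\Delta,T}\Delta}(t)$ on $[k\Delta,(k+1)\Delta)$ into three terms, uses $3(\cdot)$ together with the H\"older inequality and the It\^o isometry, and reaches the same bound (\ref{L2-4}), with $n\Delta$ in place of your $\Delta$; it makes the noise term vanish via $\int_{k\Delta}^{(k+1)\Delta}d^2(s)ds\to0$ rather than via $d(k\Delta)\to 0$, and your Frobenius-norm factor $m$ and finiteness check on $\mathbb{E}\|X(k\Delta)\|_2^2$ are, if anything, slightly more careful than the paper's version.
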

\vskip 0.2cm

(II) Based on Lemma \ref{lemma5} and Lemma \ref{lemma2}, we give sufficient conditions for the mean square asymptotic stability of the CTNAS (\ref{EM6}) and the mean square convergence rate.
\begin{lemma}\label{lemma6}
\rm{For (\ref{EM6}),
if there exists $\Delta>0$ such that Assumption \ref{as1}  and Condition \ref{condition1} hold,
then the CTNAS (\ref{EM6}) is mean square asymptotically stable and
{\setlength\abovedisplayskip{9pt}
\setlength\belowdisplayskip{9pt}
\begin{align*}
&~\mathbb{E}\left[\Big\|\widehat{X}_{n_{\Delta,T}\Delta}(t)\Big\|_2^2\right]\cr
\leq&~
\varpi(t)\Bigg[
\mathbb{E}[V(n_{\Delta,T}\Delta h)]
+2\int_{\frac{n_{\Delta,T}}{h}}^{\frac{t}{h\Delta}}\iota(sh\Delta)\exp\Bigg(-\int_{s+1}^{\frac{t}{h\Delta}-4}\vartheta(\tau)d\tau\Bigg)ds\cr
&~+C\iota(t-(h+1)\Delta)\Bigg],~t\geq n_{
\Delta,T}\Delta,~T\geq4 h\Delta,
\end{align*}
where}
{\setlength\abovedisplayskip{8pt}
\setlength\belowdisplayskip{3pt}
\begin{align}\label{l6-1}
&\vartheta(t)=\frac{1}{2}c(k),~t\in[k\Delta,(k+1)\Delta),~k\geq n_{\Delta,T},\cr
&\iota(t)=h\Delta d^2(k\Delta),~t\in[k\Delta,(k+1)\Delta),~k\geq n_{\Delta,T},\cr
&\varpi(t)=3\Bigg(1+n\Delta\int_{k\Delta}^{(k+1)\Delta}a^2(s)ds\Bigg)
\big(1+2\Delta\rho(k-h-1)
+\Delta^2\rho^2(k-h-1)\big)^h,\cr
&\qquad \qquad \qquad \qquad \qquad \qquad \qquad \qquad \qquad \qquad ~~t\in[k\Delta,(k+1)\Delta),~k\geq n_{\Delta,T}.
\end{align}}}
\end{lemma}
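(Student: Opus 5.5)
The plan is to bound the CTNAS on each subinterval $[k\Delta,(k+1)\Delta)$ by its value at the left grid point plus a local error. Then we insert the discrete bound of Lemma \ref{lemma5} and turn the discrete products and sums into the continuous expressions $\exp(-\int\vartheta)$ and $\int\iota$. Mean square asymptotic stability then follows either from the explicit bound or directly from Lemma \ref{lemma5} together with Lemma \ref{lemma2}.

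\textbf{Local estimate.} Fix $t\in[k\Delta,(k+1)\Delta)$ with $k\ge n_{\Delta,T}$. By (\ref{EM6}) and $\overline{X}(s)=X(k\Delta)$ on this interval,
\begin{align*}
\widehat{X}_{n_{\Delta,T}\Delta}(t)=X(k\Delta)+\int_{k\Delta}^tA(s)ds\,X(k\Delta)+\int_{k\Delta}^tD(s)dw(s).
\end{align*}
The elementary inequality $\|a+b+c\|^2\le 3(\|a\|^2+\|b\|^2+\|c\|^2)$ explains the factor $3$. For the middle term, Cauchy--Schwarz gives $\|\int_{k\Delta}^t A ds\|_2^2\le \Delta\int_{k\Delta}^{(k+1)\Delta}a^2(s)ds$; bounding this in Frobenius norm loses at most a factor $n$. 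Together with the first term this produces $(1+n\Delta\int_{k\Delta}^{(k+1)\Delta}a^2)\mathbb{E}[V(k\Delta)]$. For the stochastic integral, the Itô isometry and the independence in Assumption \ref{as1} give $\mathbb{E}\|\int_{k\Delta}^tD\,dw\|_2^2\le \Delta d^2(k\Delta)\le \iota(t)$, using that $d$ is monotone decreasing.

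\textbf{Inserting Lemma \ref{lemma5}.} Next we bound $\mathbb{E}[V(k\Delta)]$ by Lemma \ref{lemma5} applied with $k-1$ in place of $k$, taking $L=n_{\Delta,T}$ (or $\lfloor n_{\Delta,T}/h\rfloor$ so that $Lh\Delta$ matches the stated initial term). The factor $(1+2\Delta\rho+\Delta^2\rho^2)^h$ has argument $m_{k-1}h$. I expect to dominate it by $\rho(k-h-1)$ using the monotonicity of $\rho$ and the relation between $m_kh$ and $k-h-1$; the indexing must be checked with some care. The product is converted using $1-x\le e^{-x}$:
\begin{align*}
\prod_{j=i+1}^{m-1}\Big(1-\tfrac12c(j)\Big)\le \exp\Big(-\sum_{j=i+1}^{m-1}\tfrac12 c(j)\Big),
\end{align*}
and the sum is written as $\int\vartheta(\tau)d\tau$ over the corresponding range. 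The sum $\sum_i h\Delta d^2(ih\Delta)\prod(\cdot)$ is rewritten as an integral over $s$ of the piecewise constant $\iota(sh\Delta)$ against that exponential. The lower limit $n_{\Delta,T}/h$ and upper limit $t/(h\Delta)$ come from the block indexing, and the shift by $4$ in $t/(h\Delta)-4$ absorbs floor effects. This is why $T\ge 4h\Delta$ is required: all indices then stay nonnegative and inside the range where $\vartheta,\iota$ are defined. The last term $h\Delta d^2(m_kh\Delta)(\cdots)^h$ combines with the local noise term into $C\iota(t-(h+1)\Delta)$. Here we use $d(t+\Delta)=\mathcal O(d(t))$ and monotonicity of $d$ to move between neighbouring grid points at the cost of a constant $C$.

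\textbf{Convergence and main obstacle.} To conclude that the right-hand side tends to zero:
\begin{itemize}
\item $\varpi(t)$ is bounded, since $a\in L^2$ and $\rho\to0$;
\item $\exp(-\int\vartheta)\to0$ because $\sum c(m)=\infty$;
\item $\iota\to0$;
\item the convolution integral tends to zero by a standard Toeplitz-type argument, splitting at an intermediate time and using $\iota\to0$ together with divergence of $\int\vartheta$.
\end{itemize}
Alternatively, Lemmas \ref{lemma5} and \ref{lemma2} already yield stability directly. The main obstacle is purely bookkeeping: aligning the block index $m_k$ of Lemma \ref{lemma5} with the fine index $k$ so that the continuous integral limits come out exactly as stated. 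I would first verify this on grid points and then extend to general $t$ by the monotonicity arguments above.
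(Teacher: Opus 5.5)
Your proposal follows the paper's proof essentially step for step. The shared steps are: the three-term splitting on $[k\Delta,(k+1)\Delta)$ giving the factor $3\big(1+n\Delta\int_{k\Delta}^{(k+1)\Delta}a^2(s)ds\big)$; insertion of Lemma \ref{lemma5} at index $k-1$ with $L=n_{\Delta,T}$, which already yields $\mathbb{E}[V(n_{\Delta,T}h\Delta)]$, so your alternative choice of $L$ is unnecessary; domination of $\rho(m_{k-1}h)$ by $\rho(k-h-1)$ via $m_{k-1}h\geq k-h$; absorption of the local noise term into $C\iota(t-(h+1)\Delta)$; and stability via Lemmas \ref{lemma5} and \ref{lemma2}.

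The sum-to-integral bookkeeping you defer is also done only loosely in the paper. When you carry it out, note that on $s\in[i,i+1)$ the bound $\iota(sh\Delta)\leq h\Delta d^2(ih\Delta)$ points the wrong way, so you need a one-block shift or $d(t+\Delta)=\mathcal{O}(d(t))$ and may lose a constant. Such constants do not affect the stability conclusion.
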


(III) In the following, we give an estimate of the mean square upper bound of the difference between the true solution of (\ref{eq1}) and the CTNAS (\ref{EM6}). 
\vskip 0.2cm
\begin{lemma}\label{lemma3}
\rm{For any $T\geq 0$, if there exists a constant $\Delta>0$ such that Assumption \ref{as1} holds, then the solution of (\ref{eq1}) and the CTNAS $\widehat{X}_{n_{\Delta,T}\Delta}(t)$ satisfy
{\setlength\abovedisplayskip{5pt}
\setlength\belowdisplayskip{5pt}
\bna\label{L3-1}
&&\hspace{0.2cm}\sup\limits_{n_{\Delta,T}\Delta\leq t\leq T'}\mathbb{E}\Bigg[\Big\|\widehat{X}_{n_{\Delta,T}\Delta}(t)-x(t)\Big\|_2^2\Bigg]\cr
&&\hspace{-0.4cm}\leq \Bigg[4n(T'-n_{\Delta,T}\Delta)\int_{n_{\Delta,T}\Delta}^{T'}a^2(s)\Bigg(n\Delta\int_{n_{\Delta,s}\Delta}^{(n_{\Delta,s}+1)\Delta}a^2(\tau)d\tau
\sup\limits_{n_{\Delta,T}\Delta\leq r\leq T'}\mathbb{E}\Bigg[\Big\|\widehat{X}_{n_{\Delta,T}\Delta}(r)\Big\|_2^2\Bigg]\cr
&&\hspace{-0.0cm}+\int_{n_{\Delta,s}\Delta}^{(n_{\Delta,s}+1)\Delta}d^2(\tau)d\tau\Bigg)ds\Bigg] 
\exp\Bigg(2n(T'-n_{\Delta,T}\Delta)\int_{n_{\Delta,T}\Delta}^{T'}a^2(s)ds\Bigg),\cr
&&\hspace{8.2cm}
~\forall~T'\geq n_{\Delta,T}\Delta\geq 0.
\ena}}
\end{lemma}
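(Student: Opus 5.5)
We compare $x(t)$ and $\widehat{X}_{n_{\Delta,T}\Delta}(t)$ on $[n_{\Delta,T}\Delta,T']$. Both start from $x(n_{\Delta,T}\Delta)$ and are driven by the same Wiener integral $\int D(s)dw(s)$, so subtracting (\ref{EM6}) from the integral form of (\ref{eq1}) cancels the noise. With $t_0=n_{\Delta,T}\Delta$ this gives
\begin{align*}
\widehat{X}_{t_0}(t)-x(t)=\int_{t_0}^tA(s)\big(\overline{X}(s)-\widehat{X}_{t_0}(s)\big)ds+\int_{t_0}^tA(s)\big(\widehat{X}_{t_0}(s)-x(s)\big)ds .
\end{align*}
We square this using $\|u+v\|^2\le 2\|u\|^2+2\|v\|^2$ and apply Cauchy--Schwarz in time to each integral, together with $\|A(s)\|_2\le a(s)$ a.s. from Assumption \ref{as1}. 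The result is
\begin{align*}
\mathbb{E}\|\widehat{X}_{t_0}(t)-x(t)\|_2^2\le 2(T'-t_0)\int_{t_0}^{T'}a^2(s)\,\mathbb{E}\|\overline{X}(s)-\widehat{X}_{t_0}(s)\|_2^2ds+2(T'-t_0)\int_{t_0}^{t}a^2(s)\,\mathbb{E}\|\widehat{X}_{t_0}(s)-x(s)\|_2^2ds .
\end{align*}
The factor $n$ in the statement comes from passing between the $2$-norm and componentwise or Frobenius estimates. I will carry $n$ through wherever I bound a matrix integral entrywise.

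The main step is the one-step discretisation error $\mathbb{E}\|\overline{X}(s)-\widehat{X}_{t_0}(s)\|_2^2$ for $s\in[k\Delta,(k+1)\Delta)$ with $k=n_{\Delta,s}$. Since $\overline{X}(s)=X(k\Delta)=\widehat{X}_{t_0}(k\Delta)$, we have
\begin{align*}
\widehat{X}_{t_0}(s)-\overline{X}(s)=\int_{k\Delta}^sA(\tau)d\tau\,X(k\Delta)+\int_{k\Delta}^sD(\tau)dw(\tau).
\end{align*}
Squaring again costs another factor $2$. For the first term I use Cauchy--Schwarz and the fact that $X(k\Delta)$ is a value of $\widehat{X}_{t_0}$, which bounds it by $n\Delta\int_{k\Delta}^{(k+1)\Delta}a^2(\tau)d\tau\,\sup_{r}\mathbb{E}\|\widehat{X}_{t_0}(r)\|_2^2$. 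Here I pull out the a.s. bound $a^2$ before taking expectations, so no independence argument is needed.

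For the second term, the Itô isometry gives $\mathbb{E}\|\int_{k\Delta}^sD\,dw\|_2^2\le\int_{k\Delta}^{(k+1)\Delta}\mathbb{E}\|D(\tau)\|_F^2d\tau$. Bounding $\|D\|_F^2$ by a dimension constant times $d^2(\tau)$ produces the $d^2$ integral in (\ref{L3-1}). This step is the only place where the adaptedness of $D$ to the filtration of $w$ matters, and Assumption \ref{as1} guarantees it.

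Substituting both bounds, we get $\psi(t)\le K+2n(T'-t_0)\int_{t_0}^t a^2(s)\psi(s)ds$, where $\psi(t)=\mathbb{E}\|\widehat{X}_{t_0}(t)-x(t)\|_2^2$ and $K$ is the bracketed term in (\ref{L3-1}), up to the $4n$ constant. $K$ does not depend on $t$, because the sup over $r$ has already been taken and the outer integral runs to $T'$. Gronwall's inequality then gives $\psi(t)\le K\exp\big(2n(T'-t_0)\int_{t_0}^{T'}a^2\big)$, and taking the sup over $t\in[t_0,T']$ completes the proof.

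I expect the main obstacle to be bookkeeping rather than any conceptual step: matching the constants $4n$ and $2n$ exactly, and making sure $\psi$ is finite so that Gronwall applies. Finiteness follows from the a.s. bounds on $A$ and $D$, which give linear growth on a bounded interval. If $\sup_r\mathbb{E}\|\widehat{X}_{t_0}(r)\|^2$ were infinite, the inequality would hold trivially.
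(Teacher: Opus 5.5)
Your proposal is correct and follows the paper's proof essentially step for step. You cancel the common stochastic integral and split through $\widehat{X}_{n_{\Delta,T}\Delta}(s)$. You then bound the one-step error $\widehat{X}_{n_{\Delta,T}\Delta}(s)-\overline{X}(s)$ via the a.s.\ bound $a$, the It\^{o} isometry and the supremum of $\mathbb{E}\|\widehat{X}_{n_{\Delta,T}\Delta}(r)\|_2^2$, and close with Gronwall.

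The constant bookkeeping you left open does close. Cauchy--Schwarz in time is dimension-free, so your outer constants are $4$ and $2$ rather than $4n$ and $2n$. The spare factor $n$ absorbs the $\min\{n,m\}$ that the It\^{o} isometry produces through $\|D\|_F^2\le\min\{n,m\}\|D\|_2^2$, a factor the paper glosses over by writing $\|D\|_2$ at that step.
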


(IV)
Finally, based on the mean square convergence rate of the CTNAS obtained in Lemma \ref{lemma6} and the estimate of the mean square upper bound of the difference between the true solution of (\ref{eq1}) and the CTNAS (\ref{EM6}) in Lemma \ref{lemma3},
we give sufficient conditions for the mean square asymptotic stability of the solution of (\ref{eq1}). 
For the simplicity of the description, denote
{\setlength\abovedisplayskip{5pt}
\setlength\belowdisplayskip{2pt}
\begin{align*}
D_{k}=&~
(T_{k+1}-T_{k})\int_{T_{k}\Delta}^{T_{k+1}\Delta}a^2(s)
\int_{n_{\Delta,s}\Delta}^{(n_{\Delta,s}+1)\Delta}\big(a^2(\tau)+d^2(\tau)\big)d\tau ds\exp\bigg(2n\Delta(T_{k+1}
-T_{k})\cr
&\times
\int_{T_{k}\Delta}^{T_{k+1}\Delta}a^2(s)ds\bigg)
+\int_{\frac{T_{k}}{h}}^{\frac{T_{k+1}}{h}}\exp\bigg(-\int_{s+1}^{\frac{T_{k}}{h}-4}\vartheta(\tau)d\tau\bigg)
\iota(sh\Delta)ds,\cr
&~\text{where}~ \vartheta(t)=\frac{1}{2}c(k), \iota(t)=h\Delta d^2(k\Delta), t\in[k\Delta,(k+1)\Delta),~k\geq \left\lfloor\frac{T}{\Delta}\right\rfloor. 
\end{align*}}
\begin{theorem}\label{theorem1}
\rm{For (\ref{eq1}), if there exists a constant $\Delta>0$ such that Assumption \ref{as1}  and Condition \ref{condition1} hold, 
and there exists an integer sequence $\{T_k\}_{k\geq 1}$ monotonically increasing to infinity with $T_1\geq 4h$, 
such that
$\lim_{k\to\infty}D_{k}=0,$
then the solution of (\ref{eq1}) is mean square asymptotically stable.}
\end{theorem}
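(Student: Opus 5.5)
We split $x(t)$ into the CTNAS plus the discretization error and control each piece separately. Fix $t$ large and choose $k$ with $T_{k}\Delta\le t< T_{k+1}\Delta$. Take $T=T_k\Delta$, so that $n_{\Delta,T}=T_k$; this is admissible in Lemma \ref{lemma6} since $T_k\ge T_1\ge 4h$. By $\|a+b\|^2\le 2\|a\|^2+2\|b\|^2$ we then have
\begin{align*}
\mathbb{E}\big[\|x(t)\|_2^2\big]\le 2\,\mathbb{E}\Big[\big\|\widehat{X}_{T_k\Delta}(t)\big\|_2^2\Big]+2\,\mathbb{E}\Big[\big\|\widehat{X}_{T_k\Delta}(t)-x(t)\big\|_2^2\Big].
\end{align*}
The goal is to show that both terms are bounded by a constant multiple of $D_k$, plus a term that vanishes as $k\to\infty$. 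The hypothesis $D_k\to0$ then gives the claim.

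\emph{Step 1 (CTNAS term).} Apply Lemma \ref{lemma6} with initial value $x(T_k\Delta)$. First, $\varpi(t)$ is uniformly bounded: $\int a^2<\infty$ makes the first factor bounded, and $\rho$ is decreasing, so $(1+2\Delta\rho+\Delta^2\rho^2)^h$ is bounded. Second, $\iota(t-(h+1)\Delta)\to0$, because $d$ is square integrable and monotone, hence $d\to0$. Third, the integral term in Lemma \ref{lemma6} is essentially the second summand of $D_k$, after comparing the exponent limits $t/(h\Delta)-4$ and $T_k/h-4$. Finally, the initial term $\mathbb{E}[V(T_k\Delta h)]$ should be read as $\mathbb{E}\|x(T_k\Delta)\|_2^2$ (or a nearby grid value). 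This requires a uniform a priori bound $\sup_{t}\mathbb{E}\|x(t)\|_2^2<\infty$. I would get it from It\^o's formula and Gronwall: $\frac{d}{dt}\mathbb{E}\|x\|^2\le 2a(t)\mathbb{E}\|x\|^2+d^2(t)$ looks insufficient, since $a$ is only square integrable and not integrable. A better route is to apply Lemma \ref{lemma3} on a window together with the bound from Lemma \ref{lemma6} started at time $0$. On its own, though, this only shows that the CTNAS term is bounded. To make it vanish, I would exploit the contraction factor $\prod(1-\frac12 c(i))$ implicit in Lemma \ref{lemma5}. Since $\sum c=\infty$ and we may pass to a subsequence of $\{T_k\}$, the effect of the initial data decays, and the part of the exponential in $D_k$ attached to the initial value is absorbed.

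\emph{Step 2 (difference term).} Apply Lemma \ref{lemma3} with $T'=T_{k+1}\Delta$. Bounding $\sup_r\mathbb{E}\|\widehat{X}(r)\|^2$ by a constant $C$ from Step 1 and using $n\Delta\ge$ const, the right side of (\ref{L3-1}) becomes at most a constant times
\begin{align*}
(T_{k+1}-T_k)\int_{T_k\Delta}^{T_{k+1}\Delta}a^2(s)\int_{n_{\Delta,s}\Delta}^{(n_{\Delta,s}+1)\Delta}\big(a^2(\tau)+d^2(\tau)\big)d\tau\, ds\,\exp\Big(2n\Delta(T_{k+1}-T_k)\int_{T_k\Delta}^{T_{k+1}\Delta}a^2\Big).
\end{align*}
This is exactly the first summand of $D_k$, up to constants.

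\emph{Main obstacle.} The hard step is the uniform second-moment bound on $x$ (equivalently, on the CTNAS restarted at $T_k\Delta$) that is needed to make the constants independent of $k$. I would attempt it by induction over the blocks $[T_k\Delta,T_{k+1}\Delta)$. Suppose $M_k=\sup_{[T_k\Delta,T_{k+1}\Delta]}\mathbb{E}\|x\|^2$. Combining Lemmas \ref{lemma6} and \ref{lemma3} gives $M_{k+1}\le C(1+D_k)M_k\theta_k+CD_k+o(1)$, where $\theta_k<1$ eventually comes from the decay in Lemma \ref{lemma6}. Since $D_k\to0$, this recursion yields boundedness and then $\limsup M_k=0$, which finishes the proof.
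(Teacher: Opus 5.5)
Your skeleton is the paper's. On $[T_k\Delta,T_{k+1}\Delta]$ the paper also splits $x=\widehat{X}_{T_k\Delta}+(x-\widehat{X}_{T_k\Delta})$, using weights $1+\alpha$ and $1+1/\alpha$ with $\alpha=1/\sqrt{\Delta}$ instead of $2,2$. It bounds the first piece by Lemma~\ref{lemma6} and the second by Lemma~\ref{lemma3}, and reads off the two summands of $D_k$; your Steps~1--2 reproduce this. You diverge only on the initial term. The paper disposes of it in one line (``From Lemma~\ref{lemma6}, we know that $\lim_{k\to\infty}\mathbb{E}[V(T_k\Delta h)]=0$''), with no a priori bound and no recursion. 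You are right that for the scheme restarted at $x(T_k\Delta)$ this term is essentially $\mathbb{E}\|x(T_k\Delta)\|_2^2$, so something genuinely has to be said there. The problem is that the recursion you propose in its place does not close.

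\textbf{Where the recursion fails.}

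(a) Lemma~\ref{lemma6} as stated carries no decay on the initial term: the block supremum is bounded by $\varpi(t)\,\mathbb{E}[V(\cdot)]$ with $\varpi\ge 3$. So there is no $\theta_k$ ``from the decay in Lemma~\ref{lemma6}''. The only contraction available is the block factor $\prod(1-\frac{1}{2}c(i))$ of Lemma~\ref{lemma5}, at grid points.

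(b) Because you recurse on block suprema, a constant $C>1$ (from $\varpi$ and your factor $2$) sits in front of $\theta_k$. Hence ``$\theta_k<1$ eventually'' yields nothing. You need $C\theta_k\le\bar\theta<1$ uniformly in $k$, i.e.\ a uniform positive lower bound on $\sum_{i\in\text{block }k}c(i)$.

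(c) The hypothesis $D_k\to0$ does not supply this. Take fixed-length blocks $T_{k+1}-T_k\equiv L$. Then $D_k\to0$ holds automatically under Assumption~\ref{as1}, since $\int_{T_k\Delta}^{T_{k+1}\Delta}a^2\to0$, $\vartheta$ is bounded and $\iota\to0$. Yet the block contraction factor tends to $1$ because $c(m)\to0$.

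(d) Nor may you pass to a subsequence of $\{T_k\}$. Merging blocks enlarges both summands of $D_k$: the block length multiplies $\int a^2$ in the prefactor and in the exponential, and $\exp(\int_{T_k/h-4}^{s+1}\vartheta)$ grows. So $D_k\to0$ is not preserved in general. For example, take $c(m)\sim 1/m$ and $a(t)\sim t^{-0.55}$, which is compatible with Condition~\ref{condition1}; blocks long enough to keep $\sum c$ bounded below then give $D_k\to\infty$.

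(e) Likewise, Lemma~\ref{lemma3} started at time $0$ cannot give $\sup_t\mathbb{E}\|x(t)\|_2^2<\infty$, because of the factor $\exp(2nT'\int_0^{T'}a^2)$.

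\textbf{How it could be rescued.} Recurse on the grid values $e_k=\mathbb{E}\|x(T_k\Delta)\|_2^2$. Use Lemma~\ref{lemma5} at $T_{k+1}\Delta$, whose coefficient is $(1+o(1))\prod(1-\frac{1}{2}c)$ rather than $\varpi$, and take a large fixed $\alpha$ in the weights $1+1/\alpha$, $1+\alpha$. This needs an extra hypothesis such as $\liminf_k\sum_{i\in\text{block }k}c(i)>0$. That hypothesis holds for the choice $T_k=4hk^{2+\eta}$ of Theorem~\ref{theorem2}, but it is not implied by $D_k\to0$ alone.
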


\proof
Let $\widehat{X}_{T_k\Delta}(t)$ be the CTNAS generated by the initial value $x(T_{k}\Delta)$. Then by Lemma \ref{lemma6}, we get
{\setlength\abovedisplayskip{7pt}
\setlength\belowdisplayskip{7pt}
\bna\label{T1-3}
&&\hspace{-0.0cm}\sup\limits_{T_{k}\Delta\leq t\leq T_{k+1}\Delta}\mathbb{E}\left[\Big\|\widehat{X}_{T_{k}\Delta}(t)\Big\|_2^2\right]\cr
&&\hspace{-0.4cm}\leq
\varpi(T_{k}\Delta)\Bigg[
\mathbb{E}[V(T_{k}\Delta h)]+2\int_{\frac{T_k}{h}}^{\frac{T_{k+1}}{h}}\iota(sh\Delta)
\exp\Bigg(-\int_{s+1}^{\frac{T_{k}}{h}-4}\vartheta(\tau)d\tau\Bigg)ds\cr
&&\hspace{-0.0cm}
+C\iota(T_{k}\Delta-(h+1)\Delta)\Bigg].
\ena
From the basic inequality $(x+y)^2\leq(1+\alpha)x^2+\Big(1+\frac{1}{\alpha}\Big)y^2$, for any $\alpha>0, x, y\in \mathbb{R}$, we have}
{\setlength\abovedisplayskip{5pt}
\setlength\belowdisplayskip{5pt}
\ban
&&\hspace{-0.6cm}\mathbb{E}\Big[\|x(t)\|_2^2\Big]\leq (1+\alpha)\mathbb{E}\left[\Big\|x(t)-\widehat{X}_{T_{k}\Delta}(t)\Big\|_2^2\right]+\Bigg(1+\frac{1}{\alpha}\Bigg)\mathbb{E}\left[\Big\|\widehat{X}_{T_{k}\Delta}(t)\Big\|_2^2\right].
\ean
This together with Lemma \ref{lemma3} and (\ref{T1-3}) yields}
{\setlength\abovedisplayskip{6pt}
\setlength\belowdisplayskip{6pt}
\begin{align}\label{T1-4}
&~\sup\limits_{T_{k}\Delta\leq t\leq T_{k+1}\Delta}\mathbb{E}\Big[\|x(t)\|_2^2\Big]\cr
\leq&~ (1+\alpha)\sup\limits_{T_{k}\Delta\leq t\leq T_{k+1}\Delta}\mathbb{E}\left[\Big\|x(t)-\widehat{X}_{T_{k}\Delta}(t)\Big\|_2^2\right]
+\Bigg(1+\frac{1}{\alpha}\Bigg)\sup\limits_{T_{k}\Delta\leq t\leq T_{k+1}\Delta}\mathbb{E}\left[\Big\|\widehat{X}_{T_{k}\Delta}(t)\Big\|_2^2\right]\cr
\leq&~ (1+\alpha)\Bigg[4n(T_{k+1}\Delta-T_{k}\Delta)\int_{T_k\Delta}^{T_{k+1}\Delta}a^2(s)\Bigg(n\Delta\int_{n_{\Delta,s}\Delta}^{(n_{\Delta,s}+1)\Delta}a^2(\tau)d\tau\cr
&~\times\sup\limits_{T_k\Delta\leq r\leq T_{k+1}\Delta}\mathbb{E}\left[\Big\|\widehat{X}_{T_{k}\Delta}(r)\Big\|_2^2\right]+\int_{n_{\Delta,s}\Delta}^{(n_{\Delta,s}+1)\Delta}d^2(\tau)d\tau\Bigg)ds\Bigg]
\exp\Bigg(2n(T_{k+1}\Delta\cr
&~-T_{k}\Delta)
\int_{T_{k}\Delta}^{T_{k+1}\Delta}a^2(s)ds\Bigg)
+\Bigg(1+\frac{1}{\alpha}\Bigg)\sup\limits_{T_{k}\Delta\leq t\leq T_{k+1}\Delta}\mathbb{E}\left[\Big\|\widehat{X}_{T_{k}\Delta}(t)\Big\|_2^2\right]\cr
\leq&~ (1+\alpha)4n(T_{k+1}\Delta-T_{k}\Delta)\int_{T_{k}\Delta}^{T_{k+1}\Delta}a^2(s)\Bigg[n\Delta\int_{n_{\Delta,s}\Delta}^{(n_{\Delta,s}+1)\Delta}a^2(\tau)d\tau
\varpi(T_{k}\Delta)\cr
&~\times\Bigg[\mathbb{E}[V(T_k\Delta h)]
+2\int_{\frac{T_{k}}{h}}^{\frac{T_{k+1}}{h}}\iota(rh\Delta)\exp\Bigg(-\int_{r+1}^{\frac{T_{k}}{h}-4}\vartheta(\tau)d\tau\Bigg)dr\cr
&~+C\iota(T_k\Delta-(h+1)\Delta)\Bigg]
+\int_{n_{\Delta,s}\Delta}^{(n_{\Delta,s}+1)\Delta}d^2(\tau)d\tau\Bigg]ds \exp\Bigg(2n(T_{k+1}\Delta-T_{k}\Delta)\cr
&~\times\int_{T_{k}\Delta}^{T_{k+1}\Delta}a^2(s)ds\Bigg)
+\Bigg(1+\frac{1}{\alpha}\Bigg)\varpi(T_{k}\Delta)
\Bigg[\mathbb{E}[V(T_k\Delta h)]\cr
&~+2\int_{\frac{T_{k}}{h}}^{\frac{T_{k+1}}{h}}\iota(sh\Delta)\exp\Bigg(-\int_{s+1}^{\frac{T_{k}}{h}-4}\vartheta(\tau)d\tau\Bigg)ds+C\iota(T_{k}\Delta-(h+1)\Delta)\Bigg]\cr
=&~ \Bigg[(1+\alpha)4n^2\Delta^2(T_{k+1}-T_{k})\int_{T_{k}\Delta}^{T_{k+1}\Delta}a^2(s)\int_{n_{\Delta,s}\Delta}^{(n_{\Delta,s}+1)\Delta}a^2(\tau)d\tau ds \cr
&~\times\exp\Bigg(2n(T_{k+1}\Delta-T_{k}\Delta)
\int_{T_{k}\Delta}^{T_{k+1}\Delta}a^2(s)ds\Bigg)+1+\frac{1}{\alpha}\Bigg]\varpi(T_k\Delta)\Bigg[\mathbb{E}[V(T_k\Delta h)]\cr
&~+2\int_{\frac{T_k}{h}}^{\frac{T_{k+1}}{h}}\iota(sh\Delta)
\exp\Bigg(-\int_{s+1}^{\frac{T_k}{h}-4}\vartheta(\tau)d\tau\Bigg)ds
+C\iota(T_k\Delta-(h+1)\Delta)\Bigg]\cr
&~+(1+\alpha)4n(T_{k+1}\Delta-T_{k}\Delta)
\int_{T_{k}\Delta}^{T_{k+1}\Delta}a^2(s)\int_{n_{\Delta,s}\Delta}^{(n_{\Delta,s}+1)\Delta}d^2(\tau)d\tau ds\cr
&~\times\exp\left(2n(T_{k+1}\Delta-T_{k}\Delta)\int_{T_{k}\Delta}^{T_{k+1}\Delta}a^2(s)ds\right).
\end{align}
Take $\alpha=\frac{1}{\sqrt{\Delta}}$.} 
It follows from $\lim_{k\to\infty}D_k=0$ that
{\setlength\abovedisplayskip{6pt}
\setlength\belowdisplayskip{6pt}
\begin{align*}
&\lim\limits_{k\to\infty}\left(1+\frac{1}{\sqrt{\Delta}}\right)4n^2\Delta^2(T_{k+1}-T_{k})\int_{T_{k}\Delta}^{T_{k+1}\Delta}a^2(s) \int_{n_{\Delta,s}\Delta}^{(n_{\Delta,s}+1)\Delta}a^2(\tau)d\tau ds \\ &~~~~~~~~~~\times
\exp\Big(2n\Delta(T_{k+1}-T_{k})
\int_{T_{k}\Delta}^{T_{k+1}\Delta}a^2(s)ds\Big)=0,\\
&\lim\limits_{k\to\infty}\left(1+\frac{1}{\sqrt{\Delta}}\right) 4n\Delta(T_{k+1}-T_{k})
\int_{T_{k}\Delta}^{T_{k+1}\Delta}a^2(s)\int_{n_{\Delta,s}\Delta}^{(n_{\Delta,s}+1)\Delta}d^2(\tau)d\tau ds\\
&~~~~~~~~~~\times
\exp\Big(2n\Delta(T_{k+1}-T_{k})
\int_{T_{k}\Delta}^{T_{k+1}\Delta}a^2(s)ds\Big)=0,\\
&\lim\limits_{k\to\infty}\int_{\frac{T_{k}}{h}}^{\frac{T_{k+1}}{h}}\iota(rh\Delta) \exp\Bigg(-\int_{r+1}^{\frac{T_{k}}{h}-4}\vartheta(\tau)d\tau\Bigg)dr=0.
\end{align*}
From the definition of $\varpi(t)$ and $\iota(t)$ and Assumption \ref{as1}, we know that $\varpi(T_{k}\Delta)$ is bounded and $\lim_{k\to\infty}\iota(T_{k}
\Delta-(h+1)\Delta)=0$. From Lemma \ref{lemma6}, we know that $\lim_{k\to\infty}\mathbb{E}[V(T_{k}\Delta h)]=0$.
Then from (\ref{T1-4}), we have $\lim_{k\to\infty}\sup_{T_k\Delta\leq t\leq T_{k+1}\Delta}\mathbb{E}\Big[\|x(t)\|_2^2\Big]=0$, and thus $\lim_{t\to\infty}\mathbb{E}\Big[\|x(t)\|_2^2\Big]=0$.}
\qed
\vskip 0.2cm


In the following, we give the more intuitive conditions for the mean square asymptotic stability of the solution of (\ref{eq1}).
\begin{theorem}\label{theorem2}
\rm{For (\ref{eq1}), if there exists a constant $\Delta>0$, an integer $h>0$ and constants $0<\varepsilon_1\leq \varepsilon_2\leq\frac{1}{2}$, 
and the random time-varying coefficient matrices of (\ref{eq1}) satisfy \\
\rm{(a.1)} $\{A(t),t\geq 0\}$ is independent of $\{D(t),t\geq 0\}$,
$\|A(t)\|_2\leq a(t), \|D(t)\|_2\leq d(t)$~\rm{a.s.}, $\forall~t\geq 0$, where $a(t)=\mathcal O\bigg(\frac{1}{(t+1)^{\frac{1}{2}+\varepsilon_1}}\bigg), d(t)=\mathcal O\bigg(\frac{1}{(t+1)^{\frac{1}{2}+\varepsilon_2}}\bigg)$, \\
\rm{(a.2)} $\mu_2\left(\mathbb{E}\Big[\int_{mh\Delta}^{(m+1)h\Delta}A(s)ds\Big|\mathcal{F}(mh\Delta)\Big]\right)\leq -c(m)$~\text{a.s.}, $m=0,1,\cdots$, and
$\liminf_{m\to\infty} c(m)
(1+mh\Delta)^{\frac{1}{2}+\varepsilon_1}>0$,\\
then the solution of (\ref{eq1}) is mean square asymptotically stable.}
\end{theorem}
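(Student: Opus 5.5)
The plan is to derive Theorem \ref{theorem2} from Theorem \ref{theorem1}. This requires checking Assumption \ref{as1} and Condition \ref{condition1} with concrete choices of $c(m)$ and $\rho(m)$, and then exhibiting a sequence $\{T_k\}$ with $\lim_{k\to\infty}D_k=0$.

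\emph{Assumption \ref{as1}.} Take $a(t)=C_a(t+1)^{-\frac12-\varepsilon_1}$ and $d(t)=C_d(t+1)^{-\frac12-\varepsilon_2}$.
\begin{itemize}
\item Both are square integrable, since $1+2\varepsilon_i>1$.
\item $d$ is monotonically decreasing.
\item $d(t+\Delta)\le d(t)$, so $d(t+\Delta)=\mathcal O(d(t))$.
\end{itemize}
The independence in (a.1) is read as the independence required in Assumption \ref{as1}, with $w$ included as in the error equation.

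\emph{Condition \ref{condition1}.}
\begin{itemize}
\item By (a.2), after shrinking $c(m)$ if necessary, we may take $c(m)=c_0(1+mh\Delta)^{-\frac12-\varepsilon_1}$ for large $m$. Then (i) holds (a smaller $c$ keeps the inequality), $c(m)\to0$, and $\sum_m c(m)=\infty$ because $\frac12+\varepsilon_1\le1$.
\item For (ii), the a.s. bound $\|A(s)\|_2\le a(s)$ makes the conditional moment at most $\max_{[k\Delta,(k+1)\Delta)}a=a(k\Delta)$. So set $\rho(k)=C(1+k\Delta)^{-\frac12-\varepsilon_1}$. This is decreasing to zero, and $d(k\Delta)=\mathcal O(\rho(k))$ because $\varepsilon_2\ge\varepsilon_1$.
\item Finally, $\rho^2(mh)\sim m^{-1-2\varepsilon_1}=o(m^{-\frac12-\varepsilon_1})=o(c(m))$, as required.
\end{itemize}

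\emph{Choice of $T_k$ and $D_k\to0$.} This is the main step. Take $T_k=k^{p}$ (rounded to integers, shifted so that $T_1\ge4h$) with $p>1$ chosen later.

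For the first term of $D_k$, note that $\int_{n_{\Delta,s}\Delta}^{(n_{\Delta,s}+1)\Delta}(a^2+d^2)\,d\tau\lesssim \Delta(s+1)^{-1-2\varepsilon_1}$. Also $\int_{T_k\Delta}^{T_{k+1}\Delta}a^2\lesssim (T_{k+1}-T_k)(T_k\Delta)^{-1-2\varepsilon_1}\lesssim k^{p-1}k^{-p(1+2\varepsilon_1)}$. The exponent in the exponential factor is then of order $(T_{k+1}-T_k)^2T_k^{-1-2\varepsilon_1}\sim k^{2p-2-p-2p\varepsilon_1}$. This is bounded provided $p(1-2\varepsilon_1)\le 2$, which holds e.g.\ for $p\le2$. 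The polynomial prefactor is of order $(T_{k+1}-T_k)^2T_k^{-2-4\varepsilon_1}\to0$. Hence the first term of $D_k$ tends to zero.

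For the second term, $\iota(sh\Delta)\lesssim (sh\Delta)^{-1-2\varepsilon_2}$. The worry is the factor $\exp\big(-\int_{s+1}^{T_k/h-4}\vartheta\big)$: its upper limit lies below the integration range $s\in[T_k/h,T_{k+1}/h]$, so the exponent is nonnegative and the factor is $\exp\big(\int_{T_k/h-4}^{s+1}\vartheta\big)$. Since $\vartheta\lesssim \tau^{-\frac12-\varepsilon_1}$, this is at most $\exp\big(C(T_{k+1}-T_k)T_k^{-\frac12-\varepsilon_1}\big)$. With $T_k=k^p$ the exponent is of order $k^{p-1-p(\frac12+\varepsilon_1)}$, which stays bounded when $p(\frac12-\varepsilon_1)\le1$; again $p\le2$ suffices. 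The remaining integral is then at most $\int_{T_k/h}^{\infty}s^{-1-2\varepsilon_2}ds\to0$. Hence the second term also tends to zero.

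So choosing, for instance, $p=2$ gives $D_k\to0$, and Theorem \ref{theorem1} yields mean square asymptotic stability. The delicate point is balancing the growth of the gaps $T_{k+1}-T_k$ against the decay of $a$ and $\vartheta$, so that both exponential factors stay bounded while the prefactors vanish. If the borderline case $\varepsilon_1$ small makes $p=2$ tight, I would take $p$ slightly above $1$ instead.
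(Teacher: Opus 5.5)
Your proposal is correct and follows essentially the same route as the paper: reduce to Theorem~\ref{theorem1} with $\rho(k)=a(k\Delta)$ and an explicit $c(m)\asymp(1+mh\Delta)^{-\frac12-\varepsilon_1}$, then pick a polynomial grid $T_k$ and check $D_k\to0$. The only real difference is the grid. The paper takes the borderline exponent $T_k=4hk^{2+\eta}$ with $2+\eta=\frac{2}{1-2\varepsilon_1}$, so both exponential factors converge to finite nonzero limits, which it computes exactly in Lemma~\ref{lemma7}. Your $p=2$ lies strictly below that threshold, so the exponentials tend to $1$ and crude order estimates suffice; your choice also covers $\varepsilon_1=\frac12$, where the paper's $\eta=\frac{4\varepsilon_1}{1-2\varepsilon_1}$ is undefined.
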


\proof
By $d(t)=\mathcal O\Big((t+1)^{-\frac{1}{2}-\varepsilon_2}\Big)$, we know that $d(t)$ monotonically decreases to zero and $d(t+\Delta)=\mathcal O(d(t))$.
From $\|A(t)\|_2\leq a(t)~\text{a.s.}$ and $a(t)=\mathcal O\Big((t+1)^{-\frac{1}{2}-\varepsilon_1}\Big)$, we have
{\setlength\abovedisplayskip{5pt}
\setlength\belowdisplayskip{5pt}
$$
\mathbb E\bigg[\Big(\max\limits_{k\Delta\leq s< (k+1)\Delta}\|A(s)\|_2\Big)^{2^{\max\{h,2\}}}\bigg|\mathcal F(k\Delta)\bigg]^{\frac{1}{2^{\max\{h,2\}}}}
\leq a(k\Delta).
$$
Denote $\rho(k)=a(k\Delta)$. It follows from $a(t)=\mathcal O\Big((t+1)^{-\frac{1}{2}-\varepsilon_1}\Big)$ that $\rho(k)=\mathcal O\Big((k\Delta+1)^{-\frac{1}{2}-\varepsilon_1}\Big)$.
Then from $d(t)=\mathcal O\Big((t+1)^{-\frac{1}{2}-\varepsilon_2}\Big)$,
$\liminf_{m\to\infty} c(m)(1+mh\Delta)^{\frac{1}{2}+\varepsilon_1}>0$ and $\varepsilon_2\geq \varepsilon_1$, we know that $\sum_{m=0}^{\infty}c(m)=\infty$, $d(k\Delta)=\mathcal O(\rho(k))$ and $\rho^2(kh)=o(c(k))$.
Denote $T_k=4h k^{2+\eta}$, where 
$\eta=\frac{4\varepsilon_1}{1-2\varepsilon_1}$.
From the conclusions (i)-(v) in Lemma \ref{lemma7}, it follows that $\lim_{k\to\infty}D_k=0$.
Then conditions in Theorem \ref{theorem1} are satisfied, so the solution of (\ref{eq1}) is mean square asymptotically stable.}
\qed

\vskip 0.2cm

\section{Asymptotic convergence of the algorithm}\label{sec:convergence}
Based on the results in Section \ref{sec:stability}, we study the asymptotic convergence of the algorithm (\ref{model3}).

\begin{theorem}\label{asycontheorem}
\rm{For (\ref{model3}), if there exist positive constants $\Delta$ and $\rho_0$, an integer $h>0$ and constants $0<\varepsilon_1\leq \varepsilon_2\leq\frac{1}{2}$, 
such that \\
\rm{(b.1)} 
$\alpha(t)=\mathcal O\Big(\frac{1}{(t+1)^{\frac{1}{2}+\varepsilon_1}}\Big)$,
$\beta(t)=\mathcal O\Big(\frac{1}{(t+1)^{\frac{1}{2}+\varepsilon_2}}\Big)$,\\
\rm{(b.2)} $\{\mathcal{H}(t),t\geq 0\}$ is independent of $\{w(t),t\geq 0\}$ and
$\|\mathcal{H}(t)\|_2\leq \rho_0$ a.s., \\
\rm{(b.3)} $\mu_2\left(-\mathbb{E}\Big[\int_{mh\Delta}^{(m+1)h\Delta}\alpha(s)\mathcal{H}^{\top}(s)\mathcal{H}(s)+\beta(s)(\mathcal{L_G}\otimes I_N)ds\Big|\mathcal{F}(mh\Delta)\Big]\right)\leq -c(m)$~\text{a.s.}, $m=0,1,\cdots$, and
$\liminf_{m\to\infty} c(m)(1+mh\Delta)^{\frac{1}{2}+\varepsilon_1}>0$,\\
then the algorithm (\ref{model3}) asymptotically converges in mean square.}
\end{theorem}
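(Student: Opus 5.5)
The plan is to reduce Theorem \ref{asycontheorem} to Theorem \ref{theorem2}, applied to the error equation (\ref{errorequation}). Mean square convergence of the algorithm (\ref{model3}) means $\lim_{t\to\infty}\mathbb{E}[\|e(t)\|_2^2]=0$. Since $e(t)$ solves (\ref{errorequation}), this is exactly the mean square asymptotic stability of (\ref{eq1}) with
\[
A(t)=-\alpha(t)\mathcal{H}^{\top}(t)\mathcal{H}(t)-\beta(t)(\mathcal{L}_{\mathcal{G}}\otimes I_n),\qquad D(t)=\beta(t)D\Sigma ,
\]
where the Wiener process is $w(t)$ and the filtration is $\mathcal{F}(t)$ as defined in Section \ref{sec:probform}. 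Both coefficients are adapted to $\mathcal{F}(t)$. The initial error $e(0)=\Theta(0)-\mathbf{1}_N\otimes\theta$ is deterministic, so it lies in $L^2_{\mathcal{F}(0)}$. It therefore remains to verify (a.1) and (a.2) of Theorem \ref{theorem2} with the same $\Delta,h,\varepsilon_1,\varepsilon_2$.

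For (a.1), note that $D(t)$ is deterministic. Hence it is trivially independent of $\{A(t)\}$, and (b.2) gives the independence of $\{\mathcal{H}(t)\}$ from $w$, which is what Assumption \ref{as1} actually uses. For the norm bounds, (b.2) and the triangle inequality give
\[
\|A(t)\|_2\le \rho_0^2|\alpha(t)|+\|\mathcal{L}_{\mathcal{G}}\|_2|\beta(t)| \quad\text{a.s.}
\]
By (b.1) and $\varepsilon_1\le\varepsilon_2$, the right-hand side is $\mathcal{O}\big((t+1)^{-\frac12-\varepsilon_1}\big)$, and I take it as $a(t)$. Likewise $\|D(t)\|_2\le |\beta(t)|\,\|D\|_2\|\Sigma\|_2=:d(t)$, and $d(t)=\mathcal{O}\big((t+1)^{-\frac12-\varepsilon_2}\big)$.

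The step I expect to be the main obstacle is the monotonicity of $d(t)$, since Assumption \ref{as1} needs it and (b.1) only gives an $\mathcal{O}$-bound. My first attempt is to replace $d(t)$ by the majorant $C(t+1)^{-\frac12-\varepsilon_2}$ with $C$ large. This works if the $\mathcal{O}$ bound holds for all $t\ge0$, which it does after enlarging $C$, provided $\beta$ is locally bounded. This majorant is monotone and satisfies $d(t+\Delta)=\mathcal{O}(d(t))$. I would apply the same replacement to $a(t)$, which is precisely the form of bound Theorem \ref{theorem2} requires.

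For (a.2), observe that $\mathbb{E}\big[\int_{mh\Delta}^{(m+1)h\Delta}A(s)ds\,\big|\,\mathcal{F}(mh\Delta)\big]$ is exactly the matrix inside $\mu_2$ in (b.3). Moving the minus sign inside the conditional expectation is justified by linearity and by the integrability that follows from the almost sure bounds. So (b.3) is literally (a.2), with the same $c(m)$ and the same $\liminf$ condition. Theorem \ref{theorem2} then yields $\lim_{t\to\infty}\mathbb{E}[\|e(t)\|_2^2]=0$, that is, $\mathbb{E}[\|\theta_i(t)-\theta\|_2^2]\to0$ for every node $i$.
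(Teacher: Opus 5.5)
Your proposal is correct and follows the paper's proof. Like the paper, you bound $\|A(t)\|_2\le\rho_0^2\alpha(t)+\|\mathcal{L}_{\mathcal{G}}\|_2\beta(t)$ and $\|D(t)\|_2\le\|D\Sigma\|_2\beta(t)$ via (b.2), observe that (b.3) is exactly (a.2), and apply Theorem \ref{theorem2} to the error equation (\ref{errorequation}). Replacing $a(t)$ and $d(t)$ by the monotone majorants $C(t+1)^{-\frac12-\varepsilon_i}$ adds useful care, since the paper simply asserts in the proof of Theorem \ref{theorem2} that $d(t)$ is monotone because of its $\mathcal{O}$-bound.
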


\proof
By Condition (b.2), we have
\begin{align*}
&\|A(t)\|_2=\|\alpha(t)\mathcal{H}^{\top}(t)\mathcal{H}(t)+\beta(t)(\mathcal{L}_{\mathcal{G}}\otimes I_n)\|_2\leq\alpha(t)\rho_0^2+\beta(t)\|\mathcal{L}_{\mathcal{G}}\|_2,\\
&\|D(t)\|_2=\|\beta(t)D\Sigma\|_2\leq\beta(t)\|D\Sigma\|_2.
\end{align*}
Denote $a(t)=\rho_0^2\alpha(t)+\|\mathcal{L}_{\mathcal{G}}\|_2\beta(t)$, $d(t)=\|D\Sigma\|_2\beta(t)$. Then by Condition (b.1), we have
$a(t)=\mathcal O\Big(\frac{1}{(t+1)^{\frac{1}{2}+\varepsilon_1}}\Big)$, $\beta(t)=\mathcal O\Big(\frac{1}{(t+1)^{\frac{1}{2}+\varepsilon_2}}\Big)$.
Then the conditions in Theorem \ref{theorem2} are satisfied, so the algorithm converges in mean square.
\qed
\vskip 0.2cm

{To further study the convergence of the algorithm (\ref{model3}) where the measurement matrices contain a Markov chain, i.e., $H_i(t)=p_i(t)H_i$, where $\{p_i(t),~t\geq0\},~i=1,2,\cdots N$ are independent Markov chains, and $H_i\in\mathbb{R}^{n_i\times N}(n_i\leq N)$.}
At first, we give the definition of the strongly 1-exponential ergodicity of a Markov chain.

\begin{definition}[{See \cite{Prieto}}]\label{exponentialergodic}
\rm{A Markov chain $\{{r(t)},\ t \geq 0\}$ on a countable state space $S$ with the transition function $(P_{ij}(t))_{i, j \in S}$  is strongly 1-exponential ergodic if there exists a distribution $\mu$ on $S$, constants $R>0$ and $\delta_{0}>0$ such that
{\setlength\abovedisplayskip{5pt}
\setlength\belowdisplayskip{1pt}
\begin{align*}
\sum_{j\in S}\left|P_{ij}(t)-\mu(j)\right| \leq R e^{-\delta_{0} t},  \quad \forall \  i \in S \text { and } t \geq 0.
\end{align*}}}
\end{definition}

In the following, we give conditions on the Markov chain, the graph and the algorithm gains.
Denote $r(t)=\mathcal{H}^{\top}(t)\mathcal{H}(t)$.

\begin{condition}\label{condition3}
\rm{\begin{itemize}
\item[\text{\rm{(i)}}]
$\{r(t),\mathcal{F}(t),t\geq 0\}$ is a matrix-valued Markov chain with a countable state space $E=\{r_j,j=0,1,2,\cdots\}$, and the generator $\Gamma=(\gamma_{ij})$ such that
$
P\{r(t+\delta)=r_{j} \mid r(t)=r_{i}\}= \gamma_{i j} \delta+\mathrm{o}(\delta), i \neq j,
\text {and}\  P\{r(t+\delta)=r_{j} \mid r(t)=r_{i}\}=1+\gamma_{i j} \delta+\mathrm{o}(\delta), i=j,
$
with $\delta>0$. Here, $\gamma_{ij}\geq 0$ is the transition rate from $r_{i}$ to $r_{j}$ if $i\neq j$ and $\gamma_{ii}=-\sum_{i\neq j,\ r_{j}\in E} \gamma_{ij}$;
\item[\text{\rm{(ii)}}]
there exists $\alpha_1>0$ such that 
$r_l\leq\alpha_1I_{Nn}$, $\forall~r_l\in E$, 
$\sup_{r_l\in E}|\gamma_{ii}|<\infty$, and $\{r(t), t\geq 0\}$ is independent of $\{w(t), t\geq 0\}$ and is strongly $1$-exponentially ergodic, with the unique stationary distribution $\pi= [\pi_1,\pi_2,\cdots]$, $\pi_j\geq0$, $ \sum_{j=0}^{\infty}\pi_{j}=1$,
and $\lambda_{\min}\left(\sum_{j=0}^{\infty} \pi_{j}r_{j}\right)>0$;
\item[\text{\rm{(iii)}}] there exists $\alpha_{2}>0$ such that
$(\mathcal{L}_{\mathcal G}+\mathcal{L}_{\mathcal G}^{\top})\otimes I_n\geq-\alpha_{2}I_{Nn}$;
\item[\text{\rm{(iv)}}]
there exist positive constants $0<\varepsilon_1\leq \varepsilon_2\leq\frac{1}{2}$, such that
$\alpha(t)=\mathcal O\Big(\frac{1}{(t+1)^{\frac{1}{2}+\varepsilon_1}}\Big)$,
$\beta(t)=\mathcal O\Big(\frac{1}{(t+1)^{\frac{1}{2}+\varepsilon_2}}\Big)$.
\end{itemize}}
\end{condition}

Based on Condition \ref{condition3}, we give the sufficient condition for mean square convergence of the algorithm (\ref{model3}).
The proof of Corollary \ref{coro2} is given in Appendix \ref{appendix:A}.


\begin{corollary}\label{coro2}
\rm{For the continuous-time algorithm (\ref{model3}), if Condition \ref{condition3} holds, 
then the algorithm converges in mean square.}
\end{corollary}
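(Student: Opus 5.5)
The plan is to reduce Corollary \ref{coro2} to Theorem \ref{asycontheorem}. Conditions (b.1) and (b.2) are immediate from Condition \ref{condition3}. Item (iv) is exactly (b.1). Item (ii) gives $\|\mathcal{H}(t)\|_2^2=\|r(t)\|_2\le \alpha_1$ a.s., so $\rho_0=\sqrt{\alpha_1}$ works, and it also gives the independence of $\{r(t)\}$ from $\{w(t)\}$. Since the randomness of $\mathcal{H}(t)$ enters the error equation (\ref{errorequation}) only through $r(t)=\mathcal{H}^\top(t)\mathcal{H}(t)$, only $r$ matters. The whole work is therefore to verify the stochastic spatial-temporal persistence of excitation condition (b.3), with some $c(m)$ satisfying $\liminf_m c(m)(1+mh\Delta)^{1/2+\varepsilon_1}>0$.

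For (b.3), fix $\Delta$ and choose $h$ large; set $t_m=mh\Delta$ and $T=h\Delta$. By the Markov property,
\[
\mathbb{E}\Big[\int_{t_m}^{t_m+T}\alpha(s)r(s)ds\Big|\mathcal F(t_m)\Big]=\int_{t_m}^{t_m+T}\alpha(s)\sum_j P_{r(t_m)j}(s-t_m)r_j\,ds .
\]
Write $\bar r=\sum_j\pi_jr_j$. By strong 1-exponential ergodicity and $r_j\le\alpha_1 I$,
\[
\Big\|\sum_j (P_{ij}(u)-\pi_j)r_j\Big\|_2\le \alpha_1Re^{-\delta_0u}.
\]
Hence the conditional expectation equals $\bar r\int_{t_m}^{t_m+T}\alpha(s)ds+E_m$, with $\|E_m\|_2\le \alpha_1R\,\alpha_{\max,m}/\delta_0$, where $\alpha_{\max,m}$ is the sup of $\alpha$ on the window. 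The Laplacian term is deterministic. Using subadditivity of $\mu_2$ and item (iii), $\mu_2(-\beta(\mathcal L_{\mathcal G}\otimes I_n))\le \alpha_2\beta/2$. Therefore
\[
\mu_2(\cdot)\le -\lambda_{\min}(\bar r)\int_{t_m}^{t_m+T}\alpha(s)ds+\frac{\alpha_1R}{\delta_0}\alpha_{\max,m}+\frac{\alpha_2}{2}\int_{t_m}^{t_m+T}\beta(s)ds .
\]

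The main obstacle is to make the negative term dominate. The bound in (iv) is only an $\mathcal O$ upper bound, so I would interpret the gains as $\alpha(t)\asymp (t+1)^{-1/2-\varepsilon_1}$, i.e.\ also bounded below at that rate, and $\alpha$ slowly varying over windows of length $T$. This seems implicitly intended. Then $\int_{t_m}^{t_m+T}\alpha\ge c_0T(t_m+T+1)^{-1/2-\varepsilon_1}$, while the ergodic error is of order $(t_m+1)^{-1/2-\varepsilon_1}$ with a constant independent of $T$. Choosing $h$ (hence $T$) large enough makes $\lambda_{\min}(\bar r)c_0T$ exceed $2\alpha_1R/\delta_0$ times the comparison constant.

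The $\beta$-term is harder. When $\varepsilon_2>\varepsilon_1$ it is $o((t_m+1)^{-1/2-\varepsilon_1})$, so it is eventually absorbed. When $\varepsilon_2=\varepsilon_1$ one instead needs $\beta/\alpha$ to be eventually small relative to $\lambda_{\min}(\bar r)/\alpha_2$; I would state this as the required reading of the gain design. Alternatively, for balanced $\mathcal G$, $\mathcal L_{\mathcal G}+\mathcal L_{\mathcal G}^\top\ge0$ so the term is nonpositive and can simply be dropped. I would first try that route, since the graph was assumed balanced.

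These steps give (b.3) with $c(m)=\kappa(1+mh\Delta)^{-1/2-\varepsilon_1}$ for some $\kappa>0$ and all large $m$. Finitely many initial $m$ can be handled by absorbing them, since Condition \ref{condition1} is only used asymptotically, or by enlarging $h$. Theorem \ref{asycontheorem} then yields mean square convergence.
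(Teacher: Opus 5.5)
Your reduction to Theorem \ref{asycontheorem} matches the paper's, and your verification of (b.3) is correct, but it follows a different route.

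\textbf{How the paper verifies (b.3).} The paper discretizes each window. It splits $\int_{k\Delta}^{(k+1)\Delta}\mathbb{E}[r(s)|\mathcal F(mh\Delta)]ds$ into $\Delta\,\mathbb{E}[r(k\Delta)|\mathcal F(mh\Delta)]$ plus $\int\mathbb{E}[r(s)-r(k\Delta)|\mathcal F(mh\Delta)]ds$. The grid sum is handled by citing the discrete-time ergodic estimate of Theorem 2 in \cite{li2018distributed}, valid for $h\ge h_0$. The intra-step variation is controlled by the probability of a jump within time $\Delta$, namely $1-e^{\inf_i\gamma_{ii}\Delta}$. That step forces $\Delta\le\Delta_1$ and uses $\sup_i|\gamma_{ii}|<\infty$. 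The paper then pulls $\alpha$ out at $\alpha((m+1)h\Delta)$ and $\beta$ out at $\beta(mh\Delta)$, which implicitly assumes monotone gains.

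\textbf{What your route buys.} You integrate the continuous-time transition function directly. This bounds the ergodic error by $\alpha_1R\,\alpha_{\max,m}/\delta_0$, uniformly in the window length $T=h\Delta$. Dominance then comes simply from taking $h$ large, for any fixed $\Delta$. You need neither small $\Delta$, nor bounded jump rates, nor monotone gains, nor the external discrete-time lemma. This is cleaner and slightly more general. Enlarging $h$ also handles the initial windows, given your two-sided reading of $\alpha$. For $m\ge1$ one has $(t_m+T+1)/(t_m+1)\le 2$. For $m=0$, $\int_0^T\alpha\to\infty$. This covers the case where the $\beta$ term is dropped or dominated.

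\textbf{Your caveats apply to the paper too.} The paper's final claim is that $\liminf_m c(m)(1+mh\Delta)^{1/2+\varepsilon_1}>0$ follows from Condition \ref{condition3} (iv). That silently requires $\alpha(t)$ to be bounded \emph{below} at rate $(t+1)^{-1/2-\varepsilon_1}$. It also requires the term $\tfrac h2\alpha_2\Delta\beta(mh\Delta)$ to be dominated, which is not automatic when $\varepsilon_1=\varepsilon_2$. The paper also mixes $h_0$ and $h$ in its $c(m)$.

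\textbf{On the balancedness shortcut.} Dropping the $\beta$ term because $\mathcal L_{\mathcal G}+\mathcal L_{\mathcal G}^\top\ge0$ for a balanced digraph is valid under the standing assumption of Section \ref{sec:probform}. However, the paper keeps $\alpha_2>0$ in Condition \ref{condition3} (iii). Its own example graph is also not balanced: row $1$ of $\mathcal A_{\mathcal G}$ sums to $1.7$ while column $1$ sums to $2.4$. For that case it is your $\varepsilon_2>\varepsilon_1$ branch that actually applies.
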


\section{A numerical example}
\label{sec:example}
{Consider a balanced fixed digraph consisting of $10$ nodes. Suppose that the relationships among nodes are described by the graph $\mathcal{G}=\{\mathcal{V}, \mathcal{E}_{\mathcal{G}}, \mathcal{A}_{\mathcal{G}}\}$,
where $\mathcal{V}=\{1,2,\cdots,10\}$ is the set of nodes, $\mathcal{E}_{\mathcal{G}}$ is the set of edges, and $\mathcal{A}_{\mathcal{G}}=[a_{ij}]_{i,j=1}^{10}$ is the adjacency matrix with  
$a_{13}=0.4,\ a_{15}=0.7,\ a_{18}=0.6,\ a_{21}=0.9,\ a_{28}=0.7,\ a_{29}=0.5,\ a_{34}=0.4,\ a_{37}=0.8,\ a_{39}=0.9,\ a_{41}=0.9,\ a_{46}=0.2,\ a_{47}=0.6,\ a_{52}=0.1,\ a_{53}=0.6,\ a_{57}=0.9,\ a_{68}=0.8,\ a_{69}=0.7,\ a_{6,10}=0.4,\ a_{73}=0.5,\ a_{75}=0.9,\ a_{7,10}=0.1,\ a_{85}=0.8,\ a_{86}=0.3,\ a_{8,10}=0.5,\ a_{92}=0.9,\ a_{94}=0.3,\ a_{96}=0.3,\ a_{10,1}=0.6,\ a_{10,2}=0.5,\ a_{10,4}=0.7$. 
The communication topology is shown in Figure \ref{fig1}. 
All nodes over the network cooperatively estimate the
unknown parameter vector ${\theta}=[6,5,4]^{\top}$  by information exchange among nodes.
For each node $i\in\mathcal{V}$, its measurement of {$\theta$} follows (\ref{model1}) with $H_i(t)=p_i(t)H_i$, where $H_i\in\mathbb{R}^{3\times 3}$ are given as
\begin{align*}
&H_{1}=H_{4}=H_{7}=H_{10}=\begin{bmatrix}
  0 & 1.5 & 0 \\
  1.5 & 0 & 0 \\
  0 & 0 & 1.5
\end{bmatrix},
H_{2}=H_{5}=H_{8}=\begin{bmatrix}
  0 & -0.5 & 0 \\
  0 & 0 & 0.5 \\
  0.5 & 0 & 0
\end{bmatrix},\\
&H_{3}=H_{6}=H_{9}=\begin{bmatrix}
  0 & 0 & 0.5 \\
  0.5 & 0 & 0 \\
  0 & 0.5 & 0
\end{bmatrix},
\end{align*}
and $\{p_i(t),~t\geq0\},~i=1,2,\cdots,10$ are independent Markov chains with the states $0$ and $1$,  
whose transition rate matrices are
$\Gamma_1=\Gamma_4=\Gamma_7=\Gamma_{10}=\begin{bmatrix}
  -1 & 1 \\
  1 & -1
\end{bmatrix}$,
$\Gamma_2=\Gamma_5=\Gamma_8=\begin{bmatrix}
  -1 & 1 \\
  2 & -2
\end{bmatrix}$,
$\Gamma_3=\Gamma_6=\Gamma_9=\begin{bmatrix}
  -1 & 1 \\
  3 & -3
\end{bmatrix}$,
respectively. 
The update of the estimate by each node follows (\ref{model3}),
where the initial values  
${\theta_1(0)}=[3.1,\,8.5,\,2.7]^{\top},\ 
{\theta_2(0)}=[1.7,\,2.5,\,3.3]^{\top},\ 
{\theta_3(0)}=[3.3,\,5.5,\,4.4]^{\top},\ 
{\theta_4(0)}=[6.8,\,2.4,\,7.4]^{\top},\ 
{\theta_5(0)}=[0.5,\,7.6,\,4.3]^{\top},\ 
{\theta_6(0)}=[5.5,\,-0.9,\,1.2]^{\top},\ \\
{\theta_7(0)}=[8.5,\,11.1,\,4.2]^{\top},\ 
{\theta_8(0)}=[6.8,\,10.3,\,10.3]^{\top},\ 
{\theta_9(0)}=[-0.6,\,1.8,\,3.7]^{\top},\ 
{\theta_{10}(0)}=[9.2,\,0.9,\,4.1]^{\top}.$

Denote $r(t)=\mathcal{H}^{\top}(t)\mathcal{H}(t)$. It can be seen that 
$r(t)$ is an irreducible Markov chain. By Theorem 2.22 in \cite{Prieto}, we know that 
$r(t)$ is strongly 1-exponential ergodic. 
Here, 
$r_j\leq 2.25I_{30}$, $j=1,2,\cdots,1024$,  
and $\lambda_{\min}(\sum_{j=1}^{1024}\pi_jr_j)>0$. 
From the setting of the graph, we have  $(\mathcal{L}_{\mathcal{G}}+\mathcal{L}_{\mathcal{G}}^{\top})\otimes I_3\geq -0.2I_{30}$ and $\|\mathcal{L}_{\mathcal{G}}+\mathcal{L}_{\mathcal{G}}^{\top}\|_2\leq 6.2$. 
Take $h=1,~\Delta=0.1$. 
{The settings of the noise intensity and algorithm gains are listed in Table \ref{tab:settings}. 
Then Conditions in Corollary \ref{coro2} hold.
It can be seen from Figure \ref{fig2} and Figure \ref{fig3} that the larger noise intensity leads to slower convergence rates of the mean square errors of the estimations. From Figure \ref{fig3} and Figure \ref{fig4}, it can be seen that larger algorithm gain $\alpha(t)$ leads to faster convergence rates. }}

\begin{table}[htbp]
  \centering
  \caption{{Settings of the noise intensity and the algorithm gains.}}
  \label{tab:settings}
  \begin{tabular}{lccc}
    \toprule
     & $\sigma_{ji}$ & $\alpha(t)$ & $\beta(t)$ \\
    \midrule
    Setting 1  & $[0.1,0.1,0.1]^{\top}$ & $(t+1)^{-0.6}$ & $(t+1)^{-0.7}$ \\
    Setting 2  & $[2,2,2]^{\top}$ & $(t+1)^{-0.6}$ & $(t+1)^{-0.7}$ \\
    Setting 3  & $[2,2,2]^{\top}$ & $(t+1)^{-0.51}$ & $(t+1)^{-0.7}$ \\
    \bottomrule
  \end{tabular}
\end{table}

\begin{figure}[htbp]
  \begin{minipage}[t]{0.5\linewidth}
    \centering
    \includegraphics[width=8cm, height=5cm]{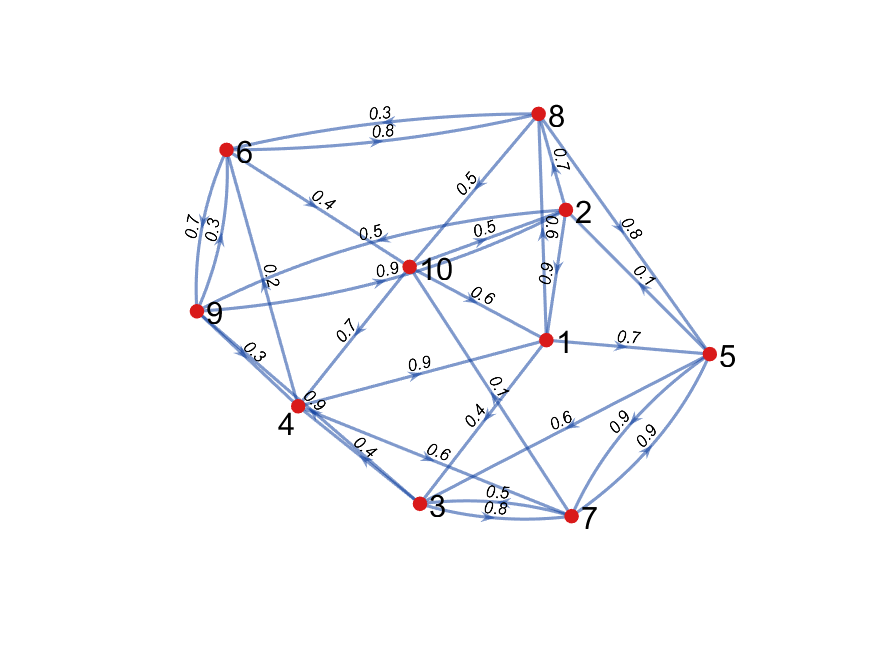}
    \caption{Communication topology.}
    \label{fig1}
  \end{minipage}%
  \hfill
  \begin{minipage}[t]{0.5\linewidth}
    \centering
    \includegraphics[width=8cm, height=5cm]{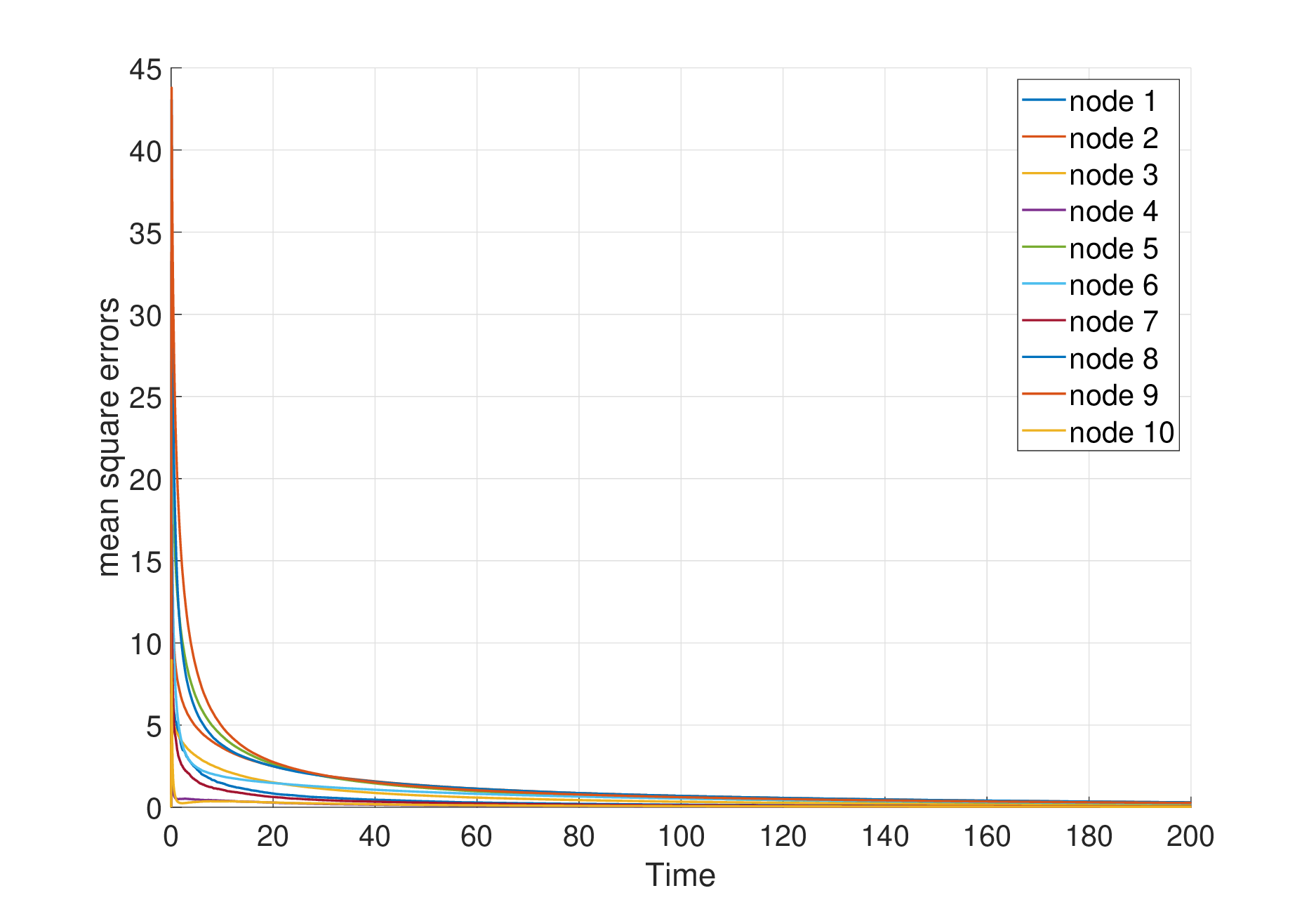}
    \caption{Trajectories of mean square errors \\for setting 1.}
    \label{fig2}
  \end{minipage}
  \par\vspace{1em} 
%
  \begin{minipage}[t]{0.5\linewidth}
    \centering
    \includegraphics[width=8cm, height=5cm]{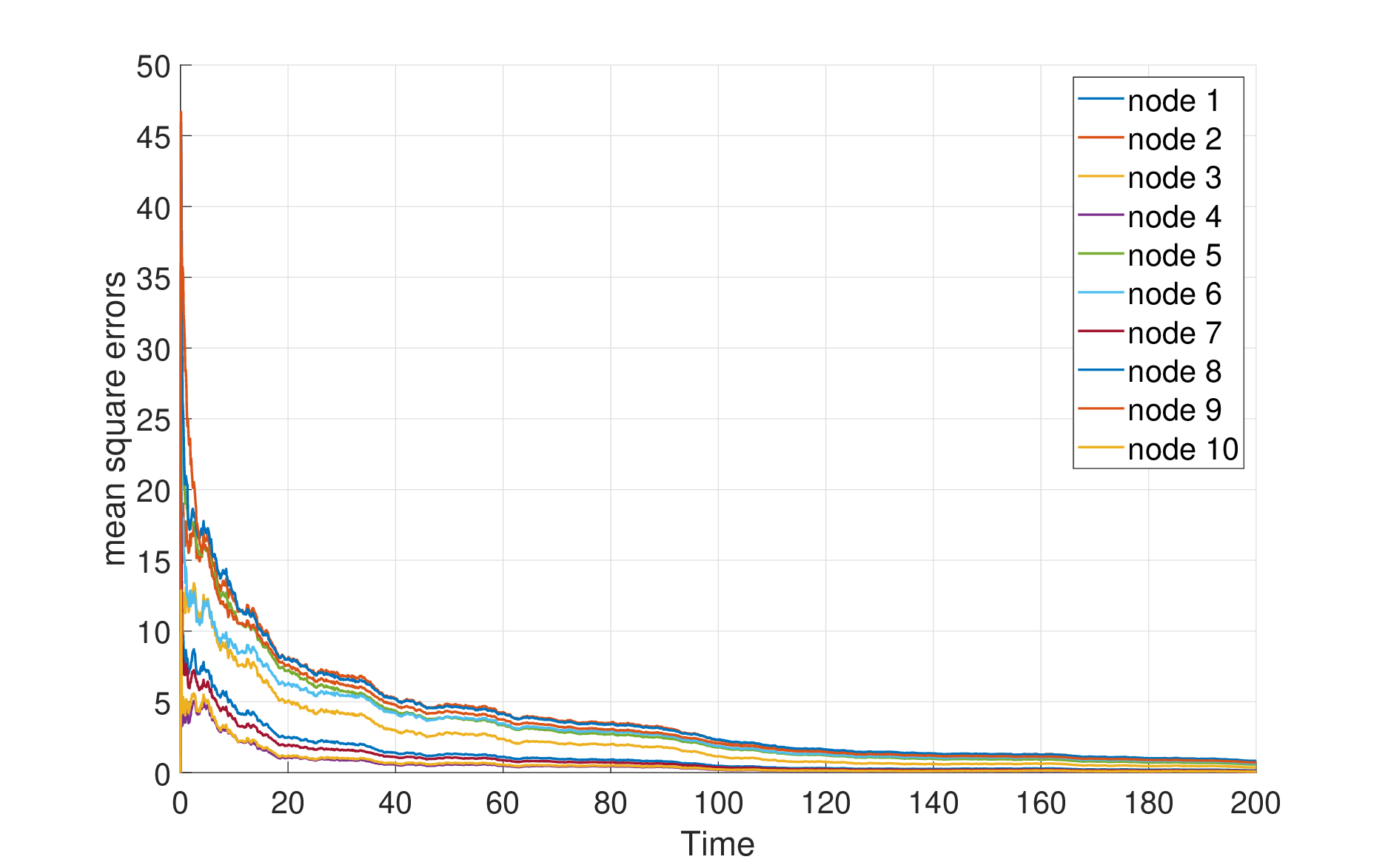}
    \caption{{Trajectories of mean square errors\\ for setting 2.}}
    \label{fig3}
  \end{minipage}%
  \hfill
  \begin{minipage}[t]{0.5\linewidth}
    \centering
    \includegraphics[width=8cm, height=5cm]{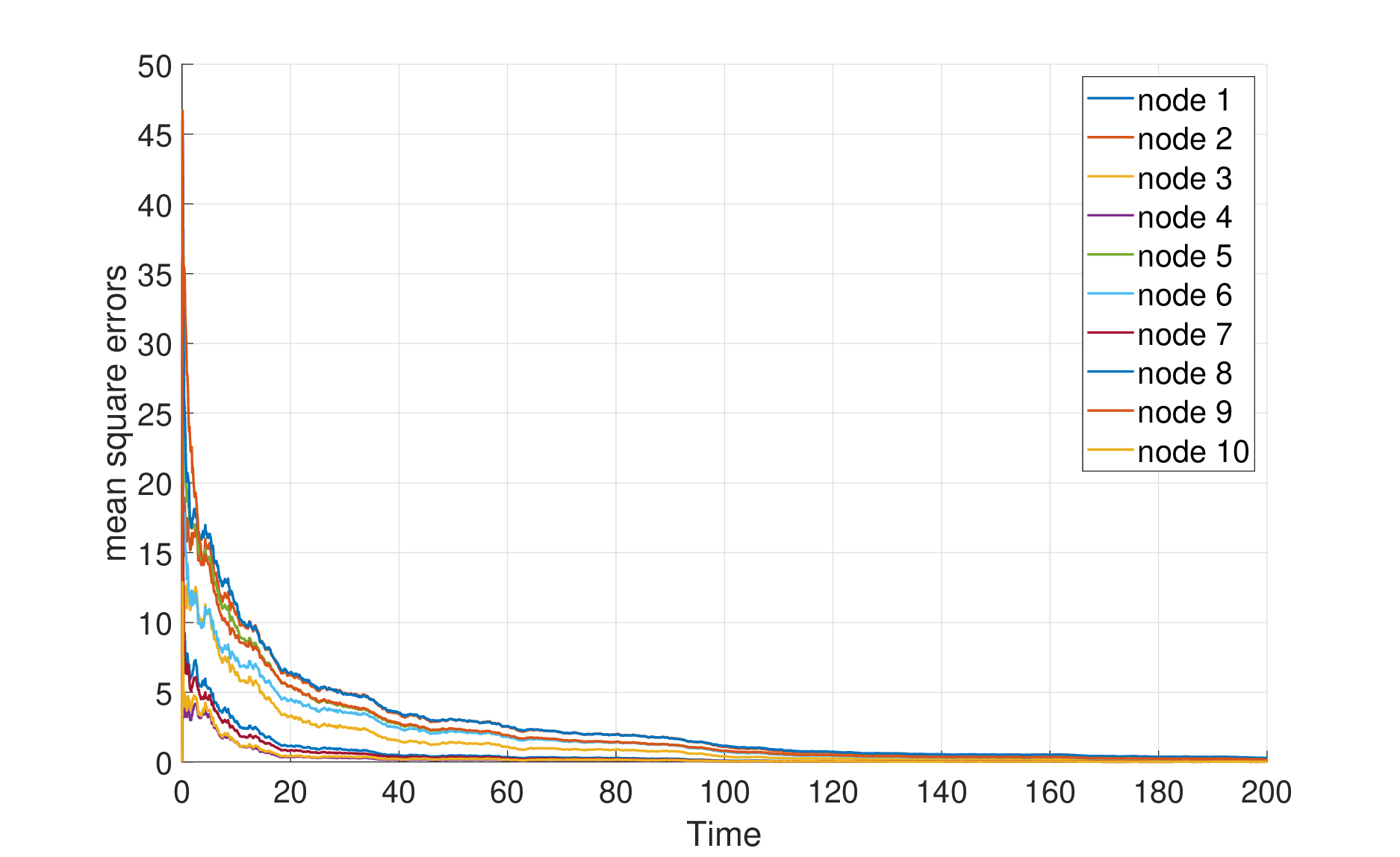}
    \caption{{Trajectories of mean square errors\\ for setting 3.}}
    \label{fig4}
  \end{minipage}
\end{figure}

\section{Conclusions}
\label{sec:conclusions}

In this work, convergence conditions have been examined for the continuous-time decentralized online estimation algorithm with the additive noises. 
By using stochastic differential equation theory and numerical approximation theory, we established  the stochastic stability of the SDEs with random time-varying coefficients.
It was proved that if the random measurement matrices and the graph satisfy some stochastic spatial-temporal persistence of excitation condition, then the algorithm gains can be properly designed to guarantee the mean square convergence.
Especially, it was proved that for a special case where the measurement matrices contain a Markov chain, the algorithms gains can be properly designed to guarantee the mean square convergence if the Markov chain is strongly $1$-exponentially ergodic. 

{In addition, some issues can be further investigated.
For the continuous-time decentralized online estimation with multiplicative noises, new techniques
need to be developed to separate the term coupling the random {measurement}  matrices and the multiplicative noise. 
To track the time-varying signals, new algorithms and techniques need to be developed to derive the bound of the tracking error.
A model-based algorithm is studied in this paper, and it is worthwhile to further study the model-free algorithm by the data-driven methods \cite{SPYX2024TAC,SWYX2025TAC}. 
It is also interesting to consider the model by using the reinforcement learning method in \cite{CWS2025SCIS}. }

\begin{appendices}

\section{Supplementary lemmas and proofs}
\label{appendix:A}
\def\thelemma{A.\arabic{lemma}}
\def\thecorollary{A.\arabic{corollary}}
\setcounter{equation}{0}
\def\theequation{A.\arabic{equation}}
\vskip 0.2cm



The proof of Lemma \ref{lemma5} needs the following Lemmas \ref{lemma1} and \ref{lemma4}.

\begin{lemma}[{See \cite{Guo2020time}}]\label{lemma1}
\rm{Suppose that $\{s_1(k),k\geq 0\}$ and $\{s_2(k),k\geq 0\}$ are sequences of real numbers, which satisfy $0\leq s_2(k)<1, \sum_{k=1}^{\infty}s_2(k)=\infty$, and $\lim_{k\to\infty}\frac{s_1(k)}{s_2(k)}$ exists, then
$\lim_{k\to\infty}\sum_{i=1}^ks_1(i)\prod_{l=i+1}^k(1-s_2(l))=\lim_{k\to\infty}\frac{s_1(k)}{s_2(k)}.$}
\end{lemma}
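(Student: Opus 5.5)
This is the classical discrete Toeplitz-type lemma, and I plan to prove it directly. Write $P(k)=\sum_{i=1}^k s_1(i)\prod_{l=i+1}^k(1-s_2(l))$ and $\ell=\lim_{k\to\infty} s_1(k)/s_2(k)$. The engine of the argument is the recursion
\[
P(k)=(1-s_2(k))P(k-1)+s_1(k),
\]
which follows by splitting off the $i=k$ term. Subtracting $\ell$ from both sides gives
\[
P(k)-\ell=(1-s_2(k))\big(P(k-1)-\ell\big)+s_2(k)\Big(\frac{s_1(k)}{s_2(k)}-\ell\Big).
\]
Here I use $(1-s_2(k))\ell+s_2(k)\ell=\ell$. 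When $s_2(k)=0$ the ratio is undefined, so I will read $s_1(k)=s_2(k)\,r(k)$ with $r(k)\to\ell$, which the hypothesis implicitly requires for large $k$.

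Fix $\epsilon>0$ and choose $K$ such that $|r(k)-\ell|<\epsilon$ for $k> K$. Set $u(k)=|P(k)-\ell|$. Because $0\le 1-s_2(k)\le 1$, the recursion yields
\[
u(k)\le (1-s_2(k))u(k-1)+\epsilon s_2(k)\quad\text{for } k>K.
\]
Subtracting $\epsilon$ turns this into $u(k)-\epsilon\le(1-s_2(k))(u(k-1)-\epsilon)$. I then expect to obtain
\[
u(k)\le \epsilon+\max\{u(K)-\epsilon,0\}\prod_{l=K+1}^k(1-s_2(l)).
\]
This needs a careful induction, since the sign of $u-\epsilon$ may change. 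The clean route is to note that whenever $u(k-1)\le\epsilon$ the recursion gives $u(k)\le\epsilon$, so one can track $\max\{u(k)-\epsilon,0\}$, which contracts multiplicatively.

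The final step is to show that the product vanishes. Using $1-x\le e^{-x}$ and $\sum s_2(l)=\infty$, we get $\prod_{l=K+1}^k(1-s_2(l))\le\exp\big(-\sum_{l=K+1}^k s_2(l)\big)\to0$. Hence $\limsup_k u(k)\le\epsilon$, and since $\epsilon$ is arbitrary, $P(k)\to\ell$.

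The main point of care is the bookkeeping around indices $k$ with $s_2(k)=0$, together with justifying that $s_1(k)=0$ there for large $k$. This is needed for the ratio hypothesis to make sense. Everything else is elementary.
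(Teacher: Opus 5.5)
Your proof is correct and complete. The paper does not prove this lemma; it only cites the reference. So the useful comparison is with the standard proof, which uses the Toeplitz lemma. There one writes $P(k)=\sum_{i=1}^k w_{k,i}\,r(i)$ with weights $w_{k,i}=s_2(i)\prod_{l=i+1}^k(1-s_2(l))\ge 0$. By telescoping, $\sum_{i=1}^k w_{k,i}=1-\prod_{l=1}^k(1-s_2(l))\to 1$, and $w_{k,i}\to 0$ for each fixed $i$; both facts follow from $\sum s_2=\infty$. Hence this asymptotically normalized weighted average of the convergent sequence $r(i)$ tends to $\ell$.

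You instead use the one-step recursion $P(k)-\ell=(1-s_2(k))(P(k-1)-\ell)+s_2(k)(r(k)-\ell)$ and show that $\max\{u(k)-\epsilon,0\}$ contracts by the factor $1-s_2(k)$. This needs no appeal to Toeplitz and is fully self-contained. The Toeplitz formulation, in exchange, makes the weighted-average structure explicit and adapts more easily to other weight kernels.

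Two small remarks. First, the induction you flag as delicate is immediate. Since $1-s_2(k)\ge 0$, one has $u(k)-\epsilon\le(1-s_2(k))(u(k-1)-\epsilon)\le(1-s_2(k))\max\{u(k-1)-\epsilon,0\}$, and the right-hand side is nonnegative. Therefore $\max\{u(k)-\epsilon,0\}\le(1-s_2(k))\max\{u(k-1)-\epsilon,0\}$ directly. Second, the bookkeeping worry about indices with $s_2(k)=0$ is moot. Existence of $\lim_{k} s_1(k)/s_2(k)$ already forces $s_2(k)>0$ for all large $k$, so it suffices to take $K$ beyond the last zero of $s_2$. All earlier indices, zero or not, enter only through the finite quantity $u(K)$, and the recursion itself holds for every $k$.
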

\vskip 0.2cm


Denote
\bna\label{BL5}
P(k+1)=I_n+\widetilde{A}(k+1).
\ena

\begin{lemma}\label{lemma4}
\rm{For (\ref{EM5}), 
if there exists an integer $h>0$, a positive real sequence $\{c(m),m\geq 0\}$ tending to zero 
and a positive real sequence $\{\rho(m),m\geq 0\}$ 
monotonically decreasing to zero,
such that \\
\rm{(i)}~ $\mu_2\left(\mathbb{E}\Big[\int_{mh\Delta}^{(m+1)h\Delta}A(s)ds\Big|\mathcal{F}(mh\Delta)\Big]\right) \leq -c(m)$~\rm{a.s.},$~m=0, 1, \cdots$,\\
\rm{(ii)}~ 
$\mathbb{E}\bigg[\Big(\max\limits_{k\Delta\leq s< (k+1)\Delta}\|A(s)\|_2\Big)^{2^{\max\{h,2\}}}\bigg|\mathcal{F}(k\Delta)\bigg]^{\frac{1}{2^{\max\{h,2\}}}}\leq \rho(k)$~\rm{a.s.},$~k=0,1, \cdots$,\\
then
\bna\label{lemma321}
&&\hspace{-1.8cm}\Big\|\mathbb{E}\Big[\Phi_P^{\top}((m+1)h,mh+1)\Phi_P((m+1)h,mh+1)\Big|\mathcal{F}(mh\Delta)\Big]\Big\|_2\cr
&&\hspace{-2.2cm}\leq 1-c(m)+\Big((1+\Delta)^{2h}-1-2h\Delta\Big)\rho^2(mh),~m\geq k_1~\rm{a.s.}
\ena
In particular, there exists a positive integer $k_2$ such that, for any $m\geq k_2$ and $mh+1\leq i\leq(m+1)h$,
\bna\label{lemma322}
\big\|\mathbb{E}\big[\Phi_P^{\top}((m+1)h, i+1)\Phi_P((m+1)h, i+1)\big|\mathcal{F}(i\Delta)\big]\big\|_2\leq 2~\text{a.s.}
\ena}
\end{lemma}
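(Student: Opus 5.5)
We prove Lemma \ref{lemma4} by expanding the product $\Phi_P((m+1)h,mh+1)=P((m+1)h)\cdots P(mh+1)$, with $P(k)=I_n+\widetilde{A}(k)$, according to how many factors $\widetilde{A}(k)$ it contains. This gives
\begin{align*}
\Phi_P((m+1)h,mh+1)=I_n+\sum_{k=mh+1}^{(m+1)h}\widetilde{A}(k)+R_m ,
\end{align*}
where $\sum_{k=mh+1}^{(m+1)h}\widetilde{A}(k)=\int_{mh\Delta}^{(m+1)h\Delta}A(s)ds=:S_m$ and $R_m$ is the sum of all products containing at least two factors $\widetilde{A}(\cdot)$. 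Hence
\begin{align*}
\Phi_P^{\top}\Phi_P=I_n+S_m+S_m^{\top}+\big(S_m^{\top}S_m+R_m+R_m^{\top}+S_m^{\top}R_m+R_m^{\top}S_m+R_m^{\top}R_m\big).
\end{align*}
We take $\mathbb{E}[\cdot|\mathcal{F}(mh\Delta)]$ term by term.

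\textbf{Linear part.} This symmetric part equals $I_n+\mathbb{E}[S_m|\mathcal{F}]+\mathbb{E}[S_m|\mathcal{F}]^{\top}$. Its largest eigenvalue is $1+2\mu_2(\mathbb{E}[S_m|\mathcal{F}])\le 1-2c(m)$ by (i). Since $c(m)\to0$, for $m$ large this matrix is positive definite, so its 2-norm equals $1-2c(m)$. This slack of $c(m)$ can be used in place of $1-c(m)$ if needed, so the bound $1-c(m)$ holds with room to spare.

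\textbf{Higher-order part.} Each factor satisfies $\|\widetilde{A}(k)\|_2\le\Delta\max_{(k-1)\Delta\le s<k\Delta}\|A(s)\|_2=:\Delta M_k$. We bound the norm of the remainder by the sum of the norms of all products with $j\ge2$ factors of $\widetilde{A}$. By the triangle inequality, the total second-and-higher-order contribution to $\|\Phi_P^{\top}\Phi_P\|$ is at most $\prod_k(1+\Delta M_k)^2-1-2\Delta\sum_k M_k$. It then remains to control conditional expectations of monomials $\mathbb{E}[M_{k_1}\cdots M_{k_j}|\mathcal{F}(mh\Delta)]$, with $j\le 2h$ and indices in $\{mh+1,\dots,(m+1)h\}$.

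\textbf{Moment bound (main obstacle).} The conditioning in (ii) is on $\mathcal{F}((k-1)\Delta)$, not on $\mathcal{F}(mh\Delta)$. We handle this with generalized H\"older: a product of $j$ factors is bounded by $\prod_i\mathbb{E}[M_{k_i}^{j}|\mathcal{F}(mh\Delta)]^{1/j}$. Each factor is then bounded using the tower property, the conditional Jensen inequality and (ii), since the exponent $2^{\max\{h,2\}}\ge 2h\ge j$. This yields $\mathbb{E}[M_{k}^{j}|\mathcal{F}(mh\Delta)]^{1/j}\le \mathbb{E}[\rho^{j}((k-1))|\mathcal{F}(mh\Delta)]^{1/j}$. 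The sequence $\rho$ is deterministic and decreasing, so this is at most $\rho(mh)$, up to an index shift absorbed by monotonicity.

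Thus each monomial of degree $j$ is bounded by $\Delta^j\rho^j(mh)$. For $m\ge k_1$ we have $\rho(mh)\le1$, so $\rho^j(mh)\le\rho^2(mh)$ for $j\ge2$. Summing over all monomials of degree at least $2$ gives at most $\big((1+\Delta)^{2h}-1-2h\Delta\big)\rho^2(mh)$. Combined with the linear part, this gives (\ref{lemma321}).

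\textbf{Bound (\ref{lemma322}).} The same expansion applies to $\Phi_P((m+1)h,i+1)$ conditioned on $\mathcal{F}(i\Delta)$, except that the first-order term is now bounded crudely in norm by $2h\Delta\rho(i)$ rather than via (i). This gives the bound $1+\big((1+\Delta)^{2h}-1\big)\rho(mh)$. This is at most $2$ once $\rho(mh)\big((1+\Delta)^{2h}-1\big)\le1$, which defines $k_2$.
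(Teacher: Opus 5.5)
Your plan follows the paper's structure: expand $\Phi_P^{\top}\Phi_P$, control the first-order part with (i), control the higher-order remainder with conditional moments from (ii), and handle (\ref{lemma322}) with a crude triangle-inequality bound. One step in the linear part is wrongly justified, however.

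\textbf{The gap.} You claim $I_n+\mathbb{E}[S_m|\mathcal{F}(mh\Delta)]+\mathbb{E}[S_m|\mathcal{F}(mh\Delta)]^{\top}$ is positive definite for large $m$ ``since $c(m)\to0$''. Condition (i) only bounds the \emph{largest} eigenvalue of $\mathbb{E}[S_m+S_m^{\top}|\mathcal{F}(mh\Delta)]$ from above by $-2c(m)$. It says nothing about the smallest eigenvalue, and $c(m)\to0$ plays no role there. The 2-norm is $\max_i|1+\lambda_i|$, which can exceed $1$. For example, $\mathbb{E}[S_m|\mathcal{F}(mh\Delta)]=-10I_n$ satisfies (i) for any small $c(m)$ but gives norm $19$.

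\textbf{The fix.} The missing ingredient is (ii), applied at first order. By conditional Jensen and the tower property,
$\|\mathbb{E}[S_m|\mathcal{F}(mh\Delta)]\|_2\le\Delta\sum_k\mathbb{E}[M_k|\mathcal{F}(mh\Delta)]\le h\Delta\rho(mh)$.
Hence every eigenvalue of $I_n+\mathbb{E}[S_m+S_m^{\top}|\mathcal{F}(mh\Delta)]$ is at least $1-2h\Delta\rho(mh)>0$ for large $m$. Only then is the 2-norm the largest eigenvalue, and that eigenvalue is \emph{at most} $1-2c(m)$, not equal to it. This is exactly the paper's estimate (\ref{L5-4-2}) and the role of $l_1$. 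Your $k_1$ must therefore also enforce $2h\Delta\rho(mh)<1$, not only $\rho(mh)\le 1$. The repair uses only tools you already have.

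\textbf{Comparison on the higher-order terms.} Here your route differs from the paper's, and for the better.
\begin{itemize}
\item The paper bounds products of the $\widetilde{A}(\cdot)$ by repeatedly applying the conditional Cauchy--Schwarz inequality, squaring the exponent at each step. The large moment order $2^{\max\{h,2\}}$ in (ii) is designed to support that.
\item You majorize the whole remainder by the scalar $\prod_k(1+\Delta M_k)^2-1-2\Delta\sum_k M_k$. You then apply the generalized conditional H\"older inequality with every exponent equal to the degree $j\le 2h$.
\item Your version needs only $2h$-th conditional moments. It also handles repeated indices such as $\widetilde{A}^{\top}(k)\widetilde{A}(k)$ and the count $C_{2h}^{j}$ cleanly.
\end{itemize}
Your argument for (\ref{lemma322}) is correct as written, since the crude bound $\|I_n+B\|_2\le 1+\|B\|_2$ needs no positive definiteness.
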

\noindent
\proof
From (\ref{BL5}), we have
{\setlength\abovedisplayskip{5pt}
\setlength\belowdisplayskip{5pt}
\ban
&&\hspace{-0.2cm}\Phi_P^{\top}((m+1)h,mh+1)\Phi_P((m+1)h,mh+1)\cr
&&\hspace{-0.6cm}=\Big(I_n+\widetilde{A}^{\top}(mh+1)\Big)\cdots\Big(I_n+\widetilde{A}^{\top}((m+1)h)\Big)\Big(I_n+\widetilde{A}((m+1)h)\Big)\cdots
\Big(I_n+\widetilde{A}(mh+1)\Big).
\ean
Then, we get}
{\setlength\abovedisplayskip{5pt}
\setlength\belowdisplayskip{5pt}
\begin{align}\label{L5-2}
&~\Big\|\mathbb{E}\Big[\Phi_P^{\top}((m+1)h,mh+1)\Phi_P((m+1)h,mh+1)\Big|\mathcal{F}(mh\Delta)\Big]\Big\|_2\cr
=&~\Big\|I_n+\sum\limits_{k=mh}^{(m+1)h-1}\mathbb{E}\Big[\widetilde{A}^{\top}(k+1)+\widetilde{A}(k+1)\Big|\mathcal{F}(mh\Delta)\Big]\cr
&~+\mathbb{E}[M_2(m)+\cdots+M_{2h}(m)\Big|\mathcal{F}(mh\Delta)]\Big\|_2\cr
\leq&~ \Big\|I_n+\sum\limits_{k=mh}^{(m+1)h-1}\mathbb{E}\Big[\widetilde{A}^{\top}(k+1)+\widetilde{A}(k+1)\Big|\mathcal{F}(mh\Delta)\Big]\Big\|_2\cr
&~+\|\mathbb{E}[M_2(m)+\cdots+M_{2h}(m)|\mathcal{F}(mh\Delta)]\|_2,
\end{align}
where $M_i(m), i=2,3,\cdots,2h$ denote the $i$-th order terms of the binomial expansion. For the first term on the r.h.s. of the above inequality, by the definition of the norm, we have}
{\setlength\abovedisplayskip{5pt}
\setlength\belowdisplayskip{5pt}
\bna\label{L5-3}
&&\hspace{-1.4cm}\left\|I_n+\sum\limits_{k=mh}^{(m+1)h-1}\mathbb{E}\Big[\widetilde{A}^{\top}(k+1)+\widetilde{A}(k+1)\Big|\mathcal{F}(mh\Delta)\Big]\right\|_2\cr
&&\hspace{-1.8cm}=\max\limits_{1\leq i\leq n}\left|1+\lambda_i\left(\sum\limits_{k=mh}^{(m+1)h-1}\mathbb{E}\Big[\widetilde{A}^{\top}(k+1)+\widetilde{A}(k+1)\Big|\mathcal{F}(mh\Delta)\Big]\right)\right|.
\ena
The conditional Lyapunov inequality and the condition (ii) lead to}
{\setlength\abovedisplayskip{5pt}
\setlength\belowdisplayskip{5pt}
\bna\label{L5-4-2}
&&\hspace{-1.4cm}\max\limits_{1\leq i\leq n}\lambda_i\left(\sum\limits_{k=mh}^{(m+1)h-1}\mathbb{E}\Big[-\widetilde{A}^{\top}(k+1)-\widetilde{A}(k+1)\Big|\mathcal{F}(mh\Delta)\Big]\right)\cr
&&\hspace{-1.8cm}\leq
\max\limits_{1\leq i\leq n}\left|\lambda_i\left(\sum\limits_{k=mh}^{(m+1)h-1}\mathbb{E}\Big[\widetilde{A}^{\top}(k+1)+\widetilde{A}(k+1)\Big|\mathcal{F}(mh\Delta)\Big]\right)\right|\cr
&&\hspace{-1.8cm}=
\left\|\sum\limits_{k=mh}^{(m+1)h-1}\mathbb{E}\Big[\widetilde{A}^{\top}(k+1)+\widetilde{A}(k+1)\Big|\mathcal{F}(mh\Delta)\Big]\right\|_2\cr
&&\hspace{-1.8cm}\leq 2 \sum\limits_{k=mh}^{(m+1)h-1}\mathbb{E}\left[\Big\|\widetilde{A}(k+1)\Big\|_2\Big|\mathcal{F}(mh\Delta)\right]
\leq 2h\Delta\rho(mh)~\text{a.s.}
\ena
As $\rho(k)$ monotonically decreases, there exists a positive integer $l_1$, such that} {\setlength\abovedisplayskip{5pt}
\setlength\belowdisplayskip{5pt}
\begin{align*}
&\lambda_i\left(\sum\limits_{k=mh}^{(m+1)h-1}\mathbb{E}\Big[- \widetilde{A}^{\top}(k+1)-\widetilde{A}(k+1)\Big|\mathcal{F}(mh\Delta)\Big]\right)\leq 1~\text{a.s.},
~i=1,2,\cdots,n,~m\geq l_1,
\end{align*} and then}
{\setlength\abovedisplayskip{5pt}
\setlength\belowdisplayskip{5pt}
\begin{align*}
&\lambda_i\left(\sum\limits_{k=mh}^{(m+1)h-1}\mathbb{E}\Big[\widetilde{A}^{\top}(k+1)+\widetilde{A}(k+1)\Big|\mathcal{F}(mh\Delta)\Big]\right)\geq -1~\text{a.s.},
~i=1,2,\cdots,n,~m\geq l_1.
\end{align*}
Then from (\ref{L5-3}), Lemma 5.4 in \cite{Xiong2008An} and the condition (i), we have for $\forall~m\geq l_1$,}
{\setlength\abovedisplayskip{5pt}
\setlength\belowdisplayskip{5pt}
\begin{align}\label{L5-6}
&\left\|I_n+\sum\limits_{k=mh}^{(m+1)h-1}\mathbb{E}\Big[\widetilde{A}^{\top}(k+1)+\widetilde{A}(k+1)\Big|\mathcal{F}(mh\Delta)\Big]\right\|_2\cr
=&1+2\mu_2\left(\mathbb{E}\bigg[\int_{mh\Delta}^{(m+1)h\Delta}A(s)ds\bigg|\mathcal{F}(mh\Delta)\bigg]\right)
\leq 1-c(m), 
~\text{a.s.}
\end{align}
By the conditional Lyapunov inequality and the condition (ii), we get}
{\setlength\abovedisplayskip{5pt}
\setlength\belowdisplayskip{5pt}
\ban
\mathbb{E}\Big[\big\|\widetilde{A}(k+1)\big\|_2^i\Big|\mathcal{F}(k\Delta)\Big]
\leq \mathbb{E}\Big[\big\|\widetilde{A}(k+1)\big\|_2^{2^h}\Big|\mathcal{F}(k\Delta)\Big]^{\frac{i}{2^h}}
\leq \Delta^i\rho^i(k)~\text{a.s.},~2\leq i\leq 2^h,
\ean
and from the monotonicity of $\rho(m)$, we get}
{\setlength\abovedisplayskip{5pt}
\setlength\belowdisplayskip{5pt}
\begin{align}\label{L5-7-2}
\mathbb{E}\bigg[\Big\|\widetilde{A}(k+1)\Big\|_2^i\bigg|\mathcal{F}(mh\Delta)\bigg]
=&~\mathbb{E}\bigg[\mathbb{E}\bigg[\Big\|\widetilde{A}(k+1)\Big\|_2^i\bigg|\mathcal{F}(k\Delta)\bigg]\bigg|\mathcal{F}(mh\Delta)\bigg]\cr
\leq&~ \Delta^i\rho^i(k)\leq \Delta^i\rho^i(mh)~\text{a.s.},~
k\geq mh,~2\leq i\leq2^h.
\end{align}
Notice that
$
\Phi_P^{\top}((m+1)h,mh+1)\Phi_P((m+1)h,mh+1)
= \sum_{i=0}^{2h}M_i(m),
$
where $M_i(m)=\sum_{p+q=i}\prod_{n=1}^p \widetilde{A}^{\top}(r_n)
\prod_{n=1}^q \widetilde{A}(s_{q+1-n}),~mh+1\leq r_1<\cdots<r_p\leq (m+1)h, mh+1\leq s_1<\cdots<s_q\leq(m+1)h$. From the conditional H\"{o}lder inequality and (\ref{L5-7-2}), it follows that}
{\setlength\abovedisplayskip{5pt}
\setlength\belowdisplayskip{5pt}
\begin{align*}
&~\mathbb{E}\left[\left\|\prod_{r=1}^j\widetilde{A}(s_r)\right\|_2\Bigg|\mathcal{F}(mh\Delta)\right]\cr
\leq&~ \mathbb{E}\left[\left\|\prod_{r=1}^{j-1}\widetilde{A}(s_r)\right\|_2\big\|\widetilde{A}(s_j)\big\|_2\Bigg|\mathcal{F}(mh\Delta)\right]\cr
\leq&~ \mathbb{E}\left[\left\|\prod_{r=1}^{j-1}\widetilde{A}(s_r)\right\|_2^2\Bigg|\mathbb{F}(mh\Delta)\right]^{\frac{1}{2}}\mathbb{E}\Bigg[\Big\|\widetilde{A}(s_j)\Big\|_2^2\Bigg|\mathcal{F}(mh\Delta)\Bigg]^{\frac{1}{2}}\cr
\leq&~ \mathbb{E}\left[\left\|\prod_{r=1}^{j-1}\widetilde{A}(s_r)\right\|_2^2\Bigg|\mathcal{F}(mh\Delta)\right]^{\frac{1}{2}}\Delta\rho(mh)
\leq \Delta^j\rho^j(mh).
\end{align*}
Then, we have}
{\setlength\abovedisplayskip{5pt}
\setlength\belowdisplayskip{5pt}
\begin{align*}
&~\mathbb{E}[\|M_i(m)\|_2|\mathcal{F}(mh\Delta)]\cr
=&~ \mathbb{E}\left[\left\|\sum\limits_{p+q=i}\prod_{n=1}^p\widetilde{A}^{\top}(r_n)\prod_{n=1}^q\widetilde{A}(s_{q+1-n})\right\|_2\Bigg|\mathcal{F}(mh\Delta)\right]\cr
\leq&~ \sum\limits_{p+q=i}\mathbb{E}\left[\left\|\prod_{n=1}^p\widetilde{A}^{\top}(r_n)\right\|_2\left\|\prod_{n=1}^q\widetilde{A}(s_{q+1-n})\right\|_2\Bigg|\mathcal{F}(mh\Delta)\right]\cr
\leq&~ \sum\limits_{p+q=i}\Delta^{p+q}\rho^{p+q}(mh)
=C_{2h}^i\Delta^i\rho^{i}(mh).
\end{align*}
As $\rho(m)$ is a positive sequence decaying to zero, there exists a positive constant $l_2$ such that $0<\rho(mh)\leq 1, m\geq l_2$, then for $2\leq i\leq 2h, m\geq l_2$, we have $\rho^i(mh)\leq\rho^2(mh)$. Thus, we have
\bna\label{L5-9}
\mathbb{E}[\|M_i(m)\|_2|\mathcal{F}(mh\Delta)]\leq \rho^2(mh)C_{2h}^i\Delta^i,~m\geq l_2,~2\leq i\leq 2h.
\ena
Then we get
\begin{align*}
\left\|\mathbb{E}\left[\sum\limits_{i=2}^{2h}M_i(m)\Bigg|\mathcal{F}(mh\Delta)\right]\right\|_2
\leq&~ \mathbb{E}\left[\left\|\sum\limits_{i=2}^{2h}M_i(m)\right\|_2\Bigg|\mathcal{F}(mh\Delta)\right]
\leq \sum\limits_{i=2}^{2h}\mathbb{E}[\|M_i(m)\|_2|\mathcal{F}(mh\Delta)]\|\cr
\leq&~ \rho^2 (mh)\sum\limits_{i=2}^{2h}C_{2h}^i\Delta^i
= \Big((1+\Delta)^{2h}-1-2h\Delta\Big)\rho^2(mh),\cr
&~~~~~~~~~~~~~~~~~~~~~~~~~~~~~~~~~~~~~~~~m\geq l_2,~2\leq i\leq 2h.
\end{align*}
Denote $k_1=\max\{l_1,l_2\}$. From (\ref{L5-6}) and the above inequality, we get (\ref{lemma321}).
Let $Q_j(m,i)$ be the $j$th order term in the binomial expansion of $\Phi_P^{\top}((m+1)h, i+1)\Phi_P((m+1)h, i+1)$. Similar to (\ref{L5-9}), we get}
$
\mathbb{E}[\|Q_j(m,i)\|_2|\mathcal{F}(i\Delta)]\leq C_{2mh+2h-2i}^j\rho^2(mh)\Delta^j~\text{a.s.}, mh+1\leq i\leq(m+1)h,~
 2\leq j\leq 2mh+2h-2i,~m\geq l_2.
$
Therefore, by (\ref{L5-4-2}), we have
{\setlength\abovedisplayskip{7pt}
\setlength\belowdisplayskip{7pt}
\ban
&&\hspace{-0.8cm}\Big\|\mathbb{E}\Big[\Phi_P^{\top}((m+1)h, i+1)\Phi_P((m+1)h, i+1)\Big|\mathcal{F}(i\Delta)\Big]\Big\|_2\cr
&&\hspace{-1.2cm}=\Bigg\|I_n+\sum\limits_{j=i}^{(m+1)h-1}\mathbb{E}\Big[\widetilde{A}^{\top}(j+1)+\widetilde{A}(j+1)\Big|\mathcal{F}(i\Delta)\Big]
+\mathbb{E}\Bigg[\sum\limits_{j=2}^{2mh+2h-2i}Q_j(m,i)\bigg|\mathcal{F}(i\Delta)\Bigg]\Bigg\|_2\cr
&&\hspace{-1.2cm}\leq
1+\left\|\sum\limits_{j=i}^{(m+1)h-1}\mathbb{E}\Big[\widetilde{A}^{\top}(j+1)+\widetilde{A}(j+1)\Big|\mathcal{F}(i\Delta)\Big]\right\|_2
+\left\|\mathbb{E}\left[\sum\limits_{j=2}^{2mh+2h-2i}Q_j(m,i)\bigg|\mathcal{F}(i\Delta)\right]\right\|_2\cr
&&\hspace{-1.2cm}\leq
1+\sum\limits_{j=mh}^{(m+1)h-1}\Big\|\mathbb{E}\Big[\widetilde{A}^{\top}(j+1)+\widetilde{A}(j+1)\Big|\mathcal{F}(i\Delta)\Big]\Big\|_2
+\left\|\mathbb{E}\left[\sum\limits_{j=2}^{2mh+2h-2i}Q_j(m,i)\bigg|\mathcal{F}(i\Delta)\right]\right\|_2\cr
&&\hspace{-1.2cm}\leq
1+2\Delta\sum\limits_{j=mh}^{(m+1)h-1}\rho(j)
+\sum\limits_{j=2}^{2mh+2h-2i}C_{2mh+2h-2i}^j\rho^2(mh)\Delta^j\cr
&&\hspace{-1.2cm}\leq
1+2\Delta\sum\limits_{j=mh}^{(m+1)h-1}\rho(j)+\Big((1+\Delta)^{2h}-1-2h\Delta\Big)\rho^2(mh)\cr
&&\hspace{-1.2cm}\leq 1+2h\Delta\rho(mh)+\Big((1+\Delta)^{2h}-1-2h\Delta\Big)\rho^2(mh)~\text{a.s.},~m\geq k_1.
\ean
As $\rho(\cdot)$ tends to zero monotonically, there exists a positive integer $k_2\geq k_1$ such that $1+2h\Delta\rho(mh)+\Big((1+\Delta)^{2h}-1-2h\Delta\Big)\rho^2(mh)\leq2$, that is,   (\ref{lemma322}) holds.}
\qed
\vskip 0.2cm

\noindent
{\bf Proof of Lemma \ref{lemma5}: }
From (\ref{EM5}), we have
{\setlength\abovedisplayskip{7pt}
\setlength\belowdisplayskip{7pt}
\begin{align*}
X\big((m+1)h\Delta\big)
=&~\Phi_P((m+1)h,mh+1)X(mh\Delta)
+\sum\limits_{i=mh+1}^{(m+1)h}\Phi_P((m+1)h,i+1)\xi(i),~m\geq 0.
\end{align*}
Then, we have}
{\setlength\abovedisplayskip{7pt}
\setlength\belowdisplayskip{7pt}
\bna\label{T2-2}
&&\hspace{-0.2cm}\mathbb{E}[V((m+1)h\Delta)|\mathcal{F}(mh\Delta)]\cr
&&\hspace{-0.6cm}= \mathbb{E}\Big[X^{\top}(mh\Delta)\Phi_P^{\top}((m+1)h,mh+1)\Phi_P((m+1)h,mh+1)
X(mh\Delta)\Big|\mathcal{F}(mh\Delta)\Big]\cr
&&\hspace{-0.2cm}+\mathbb{E}\Bigg[\sum\limits_{i=mh+1}^{(m+1)h}[\Phi_P((m+1)h,i+1)\xi(i)]^{\top}
\Phi_P((m+1)h,mh+1)
 X(mh\Delta)\Bigg|\mathcal{F}(mh\Delta)\Bigg]\cr
&&\hspace{-0.2cm}+\mathbb{E}\Bigg[[\Phi_P((m+1)h,mh+1)X(mh\Delta)]^{\top}\sum\limits_{i=mh+1}^{(m+1)h}\Phi_P((m+1)h,i+1)
\xi(i)\Bigg|\mathcal{F}(mh\Delta)\Bigg]\cr
&&\hspace{-0.2cm}+\mathbb{E}\Bigg[\left(\sum\limits_{i=mh+1}^{(m+1)h}\Phi_P((m+1)h,i+1)\xi(i)\right)^{\top}\cr
&&\hspace{0.2cm}\times\left(\sum\limits_{i=mh+1}^{(m+1)h}\Phi_P((m+1)h,i+1)\xi(i)\right)\Bigg|\mathcal{F}(mh\Delta)\Bigg].
\ena
For the first term on the r.h.s. of the above equality, from Lemma \ref{lemma4}, we have for $m\geq k_1$,}
{\setlength\abovedisplayskip{7pt}
\setlength\belowdisplayskip{7pt}
\bna\label{T2-6}
&&\hspace{-0.8cm}\mathbb{E}\Big[X^{\top}(mh\Delta)\Phi_P^{\top}((m+1)h,mh+1)
\Phi_P((m+1)h,mh+1)X(mh\Delta)\Big|\mathcal{F}(mh\Delta)\Big]\cr
&&\hspace{-1.2cm}= X^{\top}(mh\Delta)\mathbb{E}\Big[\Phi_P^{\top}((m+1)h,mh+1)
\Phi_P((m+1)h,mh+1)\Big|\mathcal{F}(mh\Delta)\Big]X(mh\Delta)\cr
&&\hspace{-1.2cm}\leq \Big\|\mathbb{E}\Big[\Phi_P^{\top}((m+1)h,mh+1)\Phi_P((m+1)h,mh+1)\Big|\mathcal{F}(mh\Delta)\Big]\Big\|_2
 X^{\top}(mh\Delta)X(mh\Delta)\cr
&&\hspace{-1.2cm}\leq \bigg(1-c(m)+\Big((1+\Delta)^{2h}-1-2h\Delta\Big)\rho^2(mh)\bigg)X^{\top}(mh\Delta)X(mh\Delta).
\ena
By Theorem 1.32 in \cite{Mao2006}, we have}
{\setlength\abovedisplayskip{7pt}
\setlength\belowdisplayskip{7pt}
\bna\label{T2-3-1}
\mathbb{E}[\xi(k+1)|\mathcal{F}(k\Delta)]=\mathbb{E}\left[\int_{k\Delta}^{(k+1)\Delta}D(s)dw(s)\Bigg|\mathcal{F}(k\Delta)\right]=0.
\ena
From Assumption \ref{as1}, 
we know that
$\tilde{A}(k)$ is independent with $\xi(k)$, and then  $\Phi_P^{\top}((m+1)h,i+1)\Phi_P((m+1)h,mh+1)$ is independent with $\xi(i)$. It follows from Lemma A.1 in \cite{li2018distributed} that  $\Phi_P^{\top}((m+1)h,i+1)\Phi_P((m+1)h,mh+1)$ and $\xi(i)$ are conditionally independent given $\mathcal{F}(mh\Delta)$. Then by (\ref{T2-3-1}), we have}
{\setlength\abovedisplayskip{7pt}
\setlength\belowdisplayskip{5pt}
\begin{align*}
&\mathbb{E}\Big[\xi^{\top}(i)\Phi_P^{\top}((m+1)h,i+1)\Phi_P((m+1)h,mh+1)\Big|\mathcal{F}(mh\Delta)\Big]X(mh\Delta)\cr
=&~\mathbb{E}\big[\xi^{\top}(i)\big|\mathcal{F}(mh\Delta)\big]\mathbb{E}\Big[\Phi_P^{\top}((m+1)h,i+1)
\Phi_P((m+1)h,mh+1)\Big|\mathcal{F}(mh\Delta)\Big]
 X(mh\Delta)\cr
=&~0.
\end{align*}
For the second term on the r.h.s. of (\ref{T2-2}), $X(mh\Delta)\in\mathcal{F}(mh\Delta)$ leads to}
{\setlength\abovedisplayskip{5pt}
\setlength\belowdisplayskip{5pt}
\begin{align}\label{T2-3}
&&\hspace{-0.2cm}\mathbb{E}\Bigg[\sum\limits_{i=mh+1}^{(m+1)h}[\Phi_P((m+1)h,i+1)\xi(i)]^{\top}
\Phi_P((m+1)h,mh+1)X(mh\Delta)\Bigg|\mathcal{F}(mh\Delta)\Bigg]\cr
&&\hspace{-0.6cm}=\sum\limits_{i=mh+1}^{(m+1)h}\mathbb{E}\Big[\xi^{\top}(i)
\Phi_P^{\top}((m+1)h,i+1)\Phi_P((m+1)h,mh+1)\Big|\mathcal{F}(mh\Delta)\Big]
 X(mh\Delta)=0.
\end{align}
For the third term on the r.h.s. of (\ref{T2-2}), similarly, we have}
{\setlength\abovedisplayskip{5pt}
\setlength\belowdisplayskip{5pt}
\bna\label{T2-3-3}
&&\hspace{-0.4cm}\mathbb{E}\Bigg[[\Phi_P((m+1)h,mh+1)X(mh\Delta)]^{\top}\cr
&&\hspace{-0.4cm}\times\sum\limits_{i=mh+1}^{(m+1)h}\Phi_P((m+1)h,i+1)\xi(i)\Bigg|\mathcal{F}(mh\Delta)\Bigg]=0.
\ena
From Assumption \ref{as1}, Lemma A.1 in \cite{li2018distributed} and (\ref{T2-3-1}), we have}
{\setlength\abovedisplayskip{3pt}
\setlength\belowdisplayskip{3pt}
\ban
&&\hspace{-0.2cm}\mathbb{E}\Big[\xi^{\top}(i)\Phi_P^{\top}((m+1)h,i+1)\Phi_P((m+1)h,j+1)\xi(j)\Big|\mathcal{F}(mh\Delta)\Big]\cr
&&\hspace{-0.6cm}=\mathbb{E}\Big[\mathbb{E}\big[\xi^{\top}(i)\Phi_P^{\top}((m+1)h,i+1)\Phi_P((m+1)h,j+1)\xi(j)\big|\mathcal{F}(j\Delta)\big]\Big|
\mathcal{F}(mh\Delta)\Big]\cr
&&\hspace{-0.6cm}=\mathbb{E}\Big[\mathbb{E}\big[\xi^{\top}(i)\big|\mathcal{F}(j\Delta)\big]\mathbb{E}\big[\Phi_P^{\top}((m+1)h,i+1)\Phi_P((m+1)h,j+1)\big|\mathcal{F}(j\Delta)\big]
\xi(j)\Big|\mathcal{F}(mh\Delta)\Big]\cr
&&\hspace{-0.6cm}=0,~mh+1\leq j<i\leq(m+1)h.
\ean
By {Assumption \ref{as1} and} Theorem 1.32 in \cite{Mao2006}, we get}
{\setlength\abovedisplayskip{3pt}
\setlength\belowdisplayskip{3pt}
\bna\label{T2-4-2}
&&\hspace{-0.8cm}\mathbb{E}\Big[\|\xi(k+1)\|_2^2|\mathcal{F}(k\Delta)\Big]
=\mathbb{E}\Bigg[\int_{k\Delta}^{(k+1)\Delta}\|D(s)\|_2^2ds\Bigg|\mathcal{F}(k\Delta)\Bigg]\cr
&&\hspace{3.4cm}\leq\int_{k\Delta}^{(k+1)\Delta}d^2(s)ds
\leq d^2(k\Delta)\Delta.
\ena
Therefore, for the fourth term on the r.h.s. of (\ref{T2-2}), from Assumption \ref{as1}, Lemma \ref{lemma4} and the above inequality, we have}
{\setlength\abovedisplayskip{3pt}
\setlength\belowdisplayskip{3pt}
\ban
&&\hspace{-0.2cm}\mathbb{E}\Bigg[\Bigg(\sum\limits_{i=mh+1}^{(m+1)h}
\Phi_P((m+1)h,i+1)\xi(i)\Bigg)^{\top}
\left(\sum\limits_{i=mh+1}^{(m+1)h}\Phi_P((m+1)h,i+1)\xi(i)\right)\Bigg|\mathcal{F}(mh\Delta)\Bigg]\cr
&&\hspace{-0.6cm}\leq \sum\limits_{i=mh+1}^{(m+1)h}\Big\|\mathbb{E}\Big[\xi^{\top}(i)\mathbb{E}\Big[\Phi_P^{\top}((m+1)h,i+1)
\Phi_P((m+1)h,i+1)\big|\mathcal{F}(i\Delta)\Big]
\xi(i)\Big|\mathcal{F}(mh\Delta)\Big]\Big\|_2\cr
&&\hspace{-0.6cm}\leq \sum\limits_{i=mh+1}^{(m+1)h}\mathbb{E}\bigg[\Big\|\mathbb{E}\Big[\Phi_P^{\top}((m+1)h,i+1)
\Phi_P((m+1)h,i+1)\Big|\mathcal{F}(i\Delta)\Big]\Big\|_2
\|\xi(i)\|_2^2\Big|\mathcal{F}(mh\Delta)\bigg]\cr
&&\hspace{-0.6cm}\leq2\sum\limits_{i=mh+1}^{(m+1)h}\mathbb{E}\Big[\mathbb{E}\Big[\|\xi(i)\|_2^2\Big|\mathcal{F}((i-1)\Delta)\Big]\Big|\mathcal{F}(mh\Delta)\Big]
\leq 2h\Delta d^2(mh\Delta),~m\geq k_2.
\ean
Substituting (\ref{T2-6}), (\ref{T2-3}), (\ref{T2-3-3}) and the above inequality into (\ref{T2-2}), we get}
{\setlength\abovedisplayskip{3pt}
\setlength\belowdisplayskip{3pt}
\bna\label{T2-7}
&&\hspace{-2.0cm}\mathbb{E}[V((m+1)h\Delta)|\mathcal{F}(mh\Delta)]
\leq \Big(1-c(m)+\big((1+\Delta)^{2h}-1-2h\Delta\big)\rho^2(mh)\Big)\cr
&&\hspace{3.4cm}\times X^{\top}(mh\Delta)X(mh\Delta)+ 2h\Delta d^2(mh\Delta),~m\geq k_2.
\ena
It follows from $\rho^2(mh)=o(c(m))$ that there exists a positive integer $k_3\geq k_2$, such that
$\big((1+\Delta)^{2h}-1-2h\Delta\big)\rho^2(mh)\leq \frac{1}{2}c(m),~m\geq k_3.$
Taking expectations on both sides of (\ref{T2-7}), we get}
{\setlength\abovedisplayskip{3pt}
\setlength\belowdisplayskip{3pt}
\ban
&&\hspace{-0.1cm}\mathbb{E}[V((m+1)h\Delta)]\leq\bigg(1-\frac{1}{2}c(m)\bigg)\mathbb{E}[V(mh\Delta)]+2h\Delta d^2(mh\Delta),~m\geq k_3.
\ean
Then, for any $L\geq0,~m\geq k_3$,} 
{\setlength\abovedisplayskip{3pt}
\setlength\belowdisplayskip{3pt}
\bna\label{T2-8-2}
&&\hspace{-1.2cm}\mathbb{E}[V((m+1)h\Delta)]\leq\prod_{i=L}^m\bigg(1-\frac{1}{2}c(i)\bigg)\mathbb{E}[V(Lh\Delta)]\cr
&&\hspace{2.4cm}+2h\Delta\sum_{i=L}^md^2(ih\Delta)\prod_{j=i+1}^m\bigg(1-\frac{1}{2}c(j)\bigg).
\ena
From (\ref{EM5}), we have
$
V\big((k+1)\Delta\big)= V(k\Delta)+X^{\top}(k\Delta)\widetilde{A}^{\top}(k+1)\widetilde{A}(k+1)X(k\Delta)+\xi^{\top}(k+1)\xi(k+1)
+2\xi^{\top}(k+1)\Big(I_n+\widetilde{A}(k+1)\Big)X(k\Delta)
+X^{\top}(k\Delta)\Big(\widetilde{A}(k+1)+\widetilde{A}^{\top}(k+1)\Big)X(k\Delta).
$
Denote $m_k=\lfloor\frac{k}{h}\rfloor$. By the above 
equality, we have} 
{\setlength\abovedisplayskip{3pt}
\setlength\belowdisplayskip{3pt}
\begin{align}\label{T2-9}
&~\mathbb{E}[V\big((k+1)\Delta\big)|\mathcal{F}(m_{k}h\Delta)]\cr
=&~ \mathbb{E}[V(k\Delta)|\mathcal{F}(m_{k}h\Delta)]
+\mathbb{E}\Big[X^{\top}(k\Delta)\widetilde{A}^{\top}(k+1)\widetilde{A}(k+1)
X(k\Delta)|\mathcal{F}(m_{k}h\Delta)\Big]\cr
&~+\mathbb{E}\Big[\xi^{\top}(k+1)\xi(k+1)|\mathcal{F}(m_{k}h\Delta)\Big]
+2\mathbb{E}\Big[\xi^{\top}(k+1)\Big(I_n+\widetilde{A}(k+1)\Big)
 X(k\Delta)|\mathcal{F}(m_{k}h\Delta)\Big]\cr
&~+\mathbb{E}\Big[X^{\top}(k\Delta)\Big(\widetilde{A}(k+1)
+\widetilde{A}^{\top}(k+1)\Big)
X(k\Delta)|\mathcal{F}(m_{k}h\Delta)\Big].
\end{align}
From $\mathcal{F}(m_{k}h\Delta)\subseteq\mathcal{F}(k\Delta)$,  $X(k\Delta)\in\mathcal{F}(k\Delta)$, Condition \ref{condition1} (ii), it follows that}
{\setlength\abovedisplayskip{3pt}
\setlength\belowdisplayskip{3pt}
\begin{align}\label{T2-10}
&~\mathbb{E}\Big[X^{\top}(k\Delta)\widetilde{A}^{\top}(k+1)\widetilde{A}(k+1)X(k\Delta)\Big|\mathcal{F}(m_{k}h\Delta)\Big]\cr
\leq&~ \mathbb{E}\Bigg[\Big\|\widetilde{A}(k+1)\Big\|_2^2\Big\|X(k\Delta)\Big\|_2^2\Bigg|\mathcal{F}(m_{k}h\Delta)\Bigg]\cr
=& ~ \mathbb{E}\Bigg[\mathbb{E}\Bigg[\Big\|\widetilde{A}(k+1)\Big\|_2^2\Bigg|\mathcal{F}(k\Delta)\Bigg]V(k\Delta)\Bigg|\mathcal{F}(m_{k}h\Delta)\Bigg]\cr
\leq&~ \Delta^2\rho^2(k)\mathbb{E}[V(k\Delta)|\mathcal{F}(m_{k}h\Delta)].
\end{align}
From $\mathcal{F}(m_{k}h\Delta)\subseteq\mathcal{F}(k\Delta)$ and (\ref{T2-4-2}), we get}
{\setlength\abovedisplayskip{3pt}
\setlength\belowdisplayskip{3pt}
\bna\label{T2-11}
&&\hspace{-0.8cm}\mathbb{E}\Big[\xi^{\top}(k+1)\xi(k+1)\Big|\mathcal{F}(m_{k}h\Delta)\Big]
\leq \mathbb{E}\Big[\mathbb{E}\Big[\|\xi(k+1)\|_2^2\Big|\mathcal{F}(k\Delta)\Big]\Big|\mathcal{F}(m_{k}h\Delta)\Big]\cr
&&\hspace{5.1cm}\leq \Delta d^2(k\Delta).
\ena
From $\mathcal{F}(m_{k}h\Delta)\subseteq\mathcal{F}(k\Delta)$,  $X(k\Delta)\in\mathcal{F}(k\Delta)$, Assumption \ref{as1}, Lemma A.1 in \cite{li2018distributed} and (\ref{T2-3-1}), we have}
{\setlength\abovedisplayskip{3pt}
\setlength\belowdisplayskip{3pt}
\bna\label{T2-12}
&&\hspace{-0.4cm}\mathbb{E}\Big[\xi^{\top}(k+1)\Big(I_n+\widetilde{A}(k+1)\Big)X(k\Delta)\Big|\mathcal{F}(m_{k}h\Delta)\Big]\cr
&&\hspace{-0.8cm}= \mathbb{E}\Big[\mathbb{E}\Big[\xi^{\top}(k+1)\Big(I_n+\widetilde{A}(k+1)\Big)X(k\Delta)\Big|\mathcal{F}(k\Delta)\Big]\Big|\mathcal{F}(m_{k}h\Delta)\Big]\cr
&&\hspace{-0.8cm}= \mathbb{E}\Big[\mathbb{E}\Big[\xi^{\top}(k+1)\Big|\mathcal{F}(k\Delta)\Big]\mathbb{E}\Big[\Big(I_n+\widetilde{A}(k+1)\Big)\Big|\mathcal{F}(k\Delta)\Big]
X(k\Delta)\Big|\mathcal{F}(m_{k}h\Delta)\Big]\cr
&&\hspace{-0.8cm}= 0.
\ena
From $\mathcal{F}(m_{k}h\Delta)\subseteq\mathcal{F}(k\Delta)$,  $X(k\Delta)\in\mathcal{F}(k\Delta)$, the conditional H$\rm{\ddot{o}}$lder inequality and Condition \ref{condition1} (ii), we have}
{\setlength\abovedisplayskip{3pt}
\setlength\belowdisplayskip{3pt}
\bna\label{T2-13}
&&\hspace{-0.8cm}\mathbb{E}\Big[X^{\top}(k\Delta)\Big(\widetilde{A}(k+1)+\widetilde{A}^{\top}(k+1)\Big)X(k\Delta)\Big|\mathcal{F}(m_{k}h\Delta)\Big]\cr
&&\hspace{-1.2cm}\leq \mathbb{E}\Big[\Big\|X^{\top}(k)\Big(\widetilde{A}(k+1)+\widetilde{A}^{\top}(k+1)\Big)X(k\Delta)\Big\|_2\Big|\mathcal{F}(m_{k}h\Delta)\Big]\cr
&&\hspace{-1.2cm}\leq 2\mathbb{E}\Big[\big\|\widetilde{A}(k+1)\big\|_2V(k\Delta)\Big|\mathcal{F}(m_{k}h\Delta)\Big]\cr
&&\hspace{-1.2cm}= 2\mathbb{E}\Big[\mathbb{E}\Big[\big\|\widetilde{A}(k+1)\big\|_2\Big|\mathcal{F}(k\Delta)\Big]V(k\Delta)\Big|\mathcal{F}(m_{k}h\Delta)\Big]\cr
&&\hspace{-1.2cm}\leq 2\mathbb{E}\Big[\mathbb{E}\Big[\big\|\widetilde{A}(k+1)\big\|_2^2\Big|\mathcal{F}(k\Delta)\Big]^{\frac{1}{2}}V(k\Delta)\Big|\mathcal{F}(m_{k}h\Delta)\Big]\cr
&&\hspace{-1.2cm}\leq 2\Delta\rho(k)\mathbb{E}[V(k\Delta)|\mathcal{F}(m_{k}h\Delta)].
\ena
By (\ref{T2-9})-(\ref{T2-13}), we have}
{\setlength\abovedisplayskip{5pt}
\setlength\belowdisplayskip{5pt}
\ban
&&\hspace{-0.4cm}\mathbb{E}[V\big((k+1)\Delta\big)|\mathcal{F}(m_{k}h\Delta)]
\leq \Delta d^2(k\Delta)+\big(1+2\Delta\rho(k)+\Delta^2\rho^2(k)\big) \mathbb{E}[V(k\Delta)|\mathcal{F}(m_{k}h\Delta)]
~\text{a.s.}
\ean
Then 
$
\mathbb{E}[V\big((k+1)\Delta\big)]
\leq \big(1+2\Delta\rho(k)+\Delta^2\rho^2(k)\big) \mathbb{E}[V(k\Delta)]
+\Delta d^2(k\Delta).
$
Iterating the above inequality leads to}
{\setlength\abovedisplayskip{5pt}
\setlength\belowdisplayskip{5pt}
\begin{align*}
\mathbb{E}[V\big((k+1)\Delta\big)]\leq& \prod_{i=m_kh}^k\Big(1+2\Delta\rho(i)+\Delta^2\rho^2(i)\Big)\mathbb{E}[V(m_kh\Delta)]\cr
&+\Delta\sum_{i=m_kh}^kd^2(i\Delta)\prod_{j=i+1}^k\Big(1+2\Delta\rho(j)+\Delta^2\rho^2(j)\Big).
\end{align*}
Combining the above inequality with (\ref{T2-8-2}), by the monotonicity of $\rho(\cdot)$ and $d(\cdot)$, we get
\begin{align*}
&\mathbb{E}[V\big((k+1)\Delta\big)]\cr
\leq& \prod_{i=m_kh}^k\Big(1+2\Delta\rho(i)+\Delta^2\rho^2(i)\Big)
\Bigg[\prod_{i=L}^{m_k-1}\left(1-\frac{1}{2}c(i)\right)\mathbb{E}[V(Lh\Delta)]\cr
&+2h\Delta\sum_{i=L}^{m_k-1}d^2(ih\Delta)\prod_{j=i+1}^{m_k-1}\left(1-\frac{1}{2}c(j)\right)\Bigg]
+\Delta\sum_{i=m_kh}^kd^2(i\Delta)
\prod_{j=i+1}^k\Big(1+2\Delta\rho(j)+\Delta^2\rho^2(j)\Big)\cr
\leq& \Big(1+2\Delta\rho(m_kh)+\Delta^2\rho^2(m_kh)\Big)^h\prod_{i=L}^{m_k-1}\left(1-\frac{1}{2}c(i)\right)\mathbb{E}[V(Lh\Delta)]\cr
&+2h\Delta\Big(1+2\Delta\rho(m_kh)+\Delta^2\rho^2(m_kh)\Big)^h\sum_{i=L}^{m_k-1}d^2(ih\Delta)\prod_{j=i+1}^{m_k-1}\left(1-\frac{1}{2}c(j)\right)\cr
&+h\Delta d^2(m_kh\Delta)\Big(1+2\Delta\rho(m_kh)+\Delta^2\rho^2(m_kh)\Big)^h,~L\geq0.
\end{align*}
Specially, for the inequality above, if $L=0$, then we have
\begin{align}\label{T2-17}
&~\mathbb{E}[V\big((k+1)\Delta\big)]\cr
\leq&~ \Big(1+2\Delta\rho(m_kh)+\Delta^2\rho^2(m_kh)\Big)^h\prod_{i=0}^{m_k-1}\bigg(1-\frac{1}{2}c(i)\bigg)\mathbb{E}[V(0)]\cr
&~+2h\Delta\Big(1+2\Delta\rho(m_kh)
+\Delta^2\rho^2(m_kh)\Big)^h
\sum_{i=0}^{m_k-1}d^2(ih\Delta)\prod_{j=i+1}^{m_k-1}\bigg(1-\frac{1}{2}c(j)\bigg)\cr
&~+h\Delta d^2(m_kh\Delta)\Big(1+2\Delta\rho(m_kh)+\Delta^2\rho^2(m_kh)\Big)^h.
\end{align}
For the first term on the r.h.s. of the inequality above, as $\rho(\cdot)$ tends to zero monotonically and $\sum_{m=0}^{\infty}c(m)=\infty$, we have
\bna\label{T2-17-1}
&&\hspace{-1.0cm}\lim_{k\to\infty}\Big(1+2\Delta\rho(m_kh)+\Delta^2\rho^2(m_kh)\Big)^h\prod_{i=0}^{m_k-1}\left(1-\frac{1}{2}c(i)\right)\mathbb{E}[V(0)]\cr
&&\hspace{-1.4cm}\leq\lim_{k\to\infty}\Big(1+2\Delta\rho(m_kh)+\Delta^2\rho^2(m_kh)\Big)^h\exp\left(-\frac{1}{2}\sum_{i=0}^{m_k-1}c(i)\right)\mathbb{E}[V(0)]\cr
&&\hspace{-1.4cm}=0.
\ena
For the second term on the r.h.s. of (\ref{T2-17}), as $c(m)$ tends to zero, we know that there exists $i_0$, such that if $i> i_0$, then $\frac{1}{2}c(i)\in(0,1)$. Then we have}
{\setlength\abovedisplayskip{7pt}
\setlength\belowdisplayskip{7pt}
\bna\label{T2-17-add}
&&\hspace{-1.2cm}\sum_{i=0}^{m_k-1}d^2(ih\Delta)\prod_{j=i+1}^{m_k-1}\left(1-\frac{1}{2}c(j)\right)\cr
&&\hspace{-1.6cm}=\sum_{i=0}^{i_0}d^2(ih\Delta)\prod_{j=i+1}^{m_k-1}\left(1-\frac{1}{2}c(j)\right)+\sum_{i=i_0+1}^{m_k-1}d^2(ih\Delta)\prod_{j=i+1}^{m_k-1}\left(1-\frac{1}{2}c(j)\right).
\ena
For the second term on the r.h.s. of the above equality, we have
\ban
&&\hspace{-0.4cm}\sum_{i=i_0+1}^{m_k-1}d^2(ih\Delta)\prod_{j=i+1}^{m_k-1}\left(1-\frac{1}{2}c(j)\right)
\leq\sum_{i=1}^{m_k-1}d^2(ih\Delta)\prod_{j=i+1}^{m_k-1}\left(1-s_2(j)\right),
\ean
where
{
\baselineskip=0pt
$
s_2(k)=\frac{1}{2}c(i_0+1),~k\leq i_0+1,$ and $s_2(k)=\frac{1}{2}c(k),~k> i_0+1.$
}
From $d(m\Delta)=\mathcal O(\rho(m))$ and $\rho^2(mh)=o(c(m))$, we get
\ban
&&\hspace{-0.4cm}\lim_{k\to\infty}\frac{d^2((m_k-1)h\Delta)}{\frac{1}{2}c(m_k-1)}
=\lim_{k\to\infty}\frac{d^2((m_k-1)h\Delta)}{\rho^2((m_k-1)h)}\frac{\rho^2((m_k-1)h)}{\frac{1}{2}c(m_k-1)}
=0,
\ean
then by Lemma \ref{lemma1},we have
\ban
&&\hspace{-0.8cm}\lim_{k\to\infty}\sum_{i=1}^{m_k-1}d^2(ih\Delta)\prod_{j=i+1}^{m_k-1}\left(1-s_2(j)\right)
=\lim_{k\to\infty}\frac{d^2((m_k-1)h\Delta)}{s_2(m_k-1)}
=\lim_{k\to\infty}\frac{d^2((m_k-1)h\Delta)}{\frac{1}{2}c(m_k-1)}
=0,
\ean
which gives
\bna\label{T2-17-add-4}
&&\hspace{-0.4cm}\lim_{k\to\infty}\sum_{i=i_0+1}^{m_k-1}d^2(ih\Delta)\prod_{j=i+1}^{m_k-1}\left(1-\frac{1}{2}c(j)\right)=0.
\ena
For the first term on the r.h.s. of (\ref{T2-17-add}), we have
\ban
&&\hspace{-0.2cm}\sum_{i=0}^{i_0}d^2(ih\Delta)\prod_{j=i+1}^{m_k-1}\left(1-\frac{1}{2}c(j)\right)
=\prod_{j=i_0+1}^{m_k-1}\left(1-\frac{1}{2}c(j)\right)\sum_{i=0}^{i_0}d^2(ih\Delta)\prod_{j=i+1}^{i_0}\left(1-\frac{1}{2}c(j)\right).
\ean
As $\prod_{j=i_0+1}^{m_k-1}\left(1-\frac{1}{2}c(j)\right)\leq\exp\left(-\sum_{i=i_0+1}^{m_k-1}\frac{1}{2}c(i)\right)$, and $\sum_{m=0}^\infty c(m)=\infty$, 
we know that
$$
\lim_{k\to\infty}\sum_{i=0}^{i_0}d^2(ih\Delta)\prod_{j=i+1}^{m_k-1}\left(1-\frac{1}{2}c(j)\right)=0.
$$
From (\ref{T2-17-add-4}) and the above equality, we have
\begin{align}\label{T2-17-2}
&&\hspace{-0.8cm}\lim_{k\to\infty}2h\Delta\Big(1+2\Delta\rho(m_kh)+\Delta^2\rho^2(m_kh)\Big)^h
\sum_{i=0}^{m_k-1}d^2(ih\Delta)\prod_{j=i+1}^{m_k-1}\left(1-\frac{1}{2}c(j)\right)
=0.
\end{align}
For the third term on the r.h.s. of (\ref{T2-17}), as $\rho(\cdot)$ and $d(\cdot)$ monotonically decrease to zero, we get}
{\setlength\abovedisplayskip{5pt}
\setlength\belowdisplayskip{5pt}
\ban
\lim_{k\to\infty}h\Delta d^2(m_kh\Delta)\Big(1+2\Delta\rho(m_kh)+\Delta^2\rho^2(m_kh)\Big)^h=0.
\ean
From (\ref{T2-17}), (\ref{T2-17-1}), (\ref{T2-17-2}) and the above equality, we get $\lim_{k\to\infty}\mathbb{E}[V(k\Delta)]=0$.}
\qed
\vskip 0.2cm

%

\noindent
{\bf Proof of Lemma \ref{lemma2}: }
Clearly, if the CTNAS is mean square asymptotically stable, then the DTNAS is mean square asymptotically stable.
Conversely, if the DTNAS is mean square asymptotically stable,
noting that $n_{\Delta,t}=\lfloor\frac{t}{\Delta}\rfloor$, then for any $t\geq n_{\Delta,T}\Delta$, we have $t\in[n_{\Delta,t}\Delta, (n_{\Delta,t}+1)\Delta)$ and $n_{\Delta,t}\Delta\geq n_{\Delta,T}\Delta$.
From (\ref{EM6}), we get for $t\geq n_{\Delta,T}\Delta$,
{\setlength\abovedisplayskip{8pt}
\setlength\belowdisplayskip{8pt}
\ban
&&\hspace{-0.45cm}\widehat{X}_{n_{\Delta,T}\Delta}(t)
=\widehat{X}_{n_{\Delta,T}\Delta}(n_{\Delta,t}\Delta)+\int_{n_{\Delta,t}\Delta}^tA(s)X(n_{\Delta,t}\Delta)ds+\int_{n_{\Delta,t}\Delta}^tD(s)dw(s).
\ean
Using the It$\rm{\hat{o}}$'s isometry, the H\"{o}lder inequality and Assumption \ref{as1}, we have}
{\setlength\abovedisplayskip{9pt}
\setlength\belowdisplayskip{9pt}
\begin{align*}
&~\mathbb{E}\bigg[\Big\|\widehat{X}_{n_{\Delta,T}\Delta}(t)\Big\|_2^2\bigg]\cr
\leq&~ 3\mathbb{E}\bigg[\Big\|\widehat{X}_{n_{\Delta,T}\Delta}(n_{\Delta,t}\Delta)\Big\|_2^2\bigg]+3\mathbb{E}\left[\Bigg\|\int_{n_{\Delta,t}\Delta}^tA(s)X(n_{\Delta,t}\Delta)ds\Bigg\|_2^2\right]
+3\mathbb{E}\left[\Bigg\|\int_{n_{\Delta,t}\Delta}^tD(s)dw(s)\Bigg\|_2^2\right]\cr
\leq&~ 3\mathbb{E}\bigg[\Big\|\widehat{X}_{n_{\Delta,T}\Delta}(n_{\Delta,t}\Delta)\Big\|_2^2\bigg]+3\mathbb{E}\Bigg[\Bigg(\int_{n_{\Delta,t}\Delta}^{t}\sqrt{n}\|A(s)\|_2ds\Bigg)^2\|X(n_{\Delta,t}\Delta)\|_2^2\Bigg]\cr
&~+3\Bigg[\int_{n_{\Delta,t}\Delta}^{t}\mathbb{E}\|D(s)\|_2^2 ds\Bigg]\cr
\leq&~ 3\mathbb{E}\bigg[\Big\|\widehat{X}_{n_{\Delta,T}\Delta}(n_{\Delta,t}\Delta)\Big\|_2^2\bigg]+3(t-n_{\Delta,t}\Delta)\mathbb{E}\Bigg[\int_{n_{\Delta,t}\Delta}^{t}n\|A(s)\|_2^2ds\|X(n_{\Delta,t}\Delta)\|_2^2\Bigg]\cr
&~+3\Bigg[\int_{n_{\Delta,t}\Delta}^{t}\mathbb{E}\|D(s)\|_2^2 ds\Bigg]\cr
\leq&~ 3\mathbb{E}\bigg[\Big\|\widehat{X}_{n_{\Delta,T}\Delta}(n_{\Delta,t}\Delta)\Big\|_2^2\bigg]
+3(t-n_{\Delta,t}\Delta)n\int_{n_{\Delta,t}\Delta}^ta^2(s)ds\mathbb{E}\big[\|X(n_{\Delta,t}\Delta)\|_2^2\big]\cr
&~+3\int_{n_{\Delta,t}\Delta}^{t}d^2(s)ds,~t\geq n_{\Delta,T}\Delta,
\end{align*}
which further gives}
{\setlength\abovedisplayskip{8pt}
\setlength\belowdisplayskip{8pt}
\bna\label{L2-4}
&&\hspace{-1.2cm}\sup\limits_{t\in[n_{\Delta,t}\Delta, (n_{\Delta,t}+1)\Delta)}\mathbb{E}\bigg[\Big\|\widehat{X}_{n_{\Delta,T}\Delta}(t)\Big\|_2^2\bigg]\cr
&&\hspace{-1.6cm}\leq 3\Bigg(1+n\Delta\int_{n_{\Delta,t}\Delta}^{(n_{\Delta,t}+1)\Delta}a^2(s)ds\Bigg)\mathbb{E}\Big[\|X(n_{\Delta,t}\Delta)\|_2^2\Big]
+3\int_{n_{\Delta,t}\Delta}^{(n_{\Delta,t}+1)\Delta}d^2(s)ds.
\ena
As $\int_{0}^\infty d^2(t)dt<\infty$, then $\sum_{k=0}^{\infty}\int_{k\Delta}^{(k+1)\Delta} d^2(t)dt<\infty$, we get $\lim_{k\to\infty}\int_{k\Delta}^{(k+1)\Delta}d^2(s)ds=0$, so  $\lim_{t\to\infty}\int_{n_{\Delta,t}\Delta}^{(n_{\Delta,t}+1)\Delta}d^2(s)ds=0$.
If $\lim_{t\to\infty}\mathbb{E}\Big[\|X(n_{\Delta,t}\Delta)\|_2^2\Big]=0$, then by (\ref{L2-4}), we have  $\lim_{t\to\infty}\mathbb{E}\Big[\|\widehat{X}_{n_{\Delta,T}\Delta}(t)\|_2^2\Big]=0$, i.e. the CTNAS is mean square asymptotically stable.}
\qed
\vskip 0.2cm

\noindent
{\bf Proof of Lemma \ref{lemma6}: }
By Lemma \ref{lemma5}, we know that the DTNAS (\ref{EM5}) is mean square asymptotically stable. Furthermore, it follows that the CTNAS (\ref{EM6}) is mean square asymptotically stable by Lemma \ref{lemma2}.
Let $L=n_{\Delta,T}$, from Lemma \ref{lemma5}, we have
{\setlength\abovedisplayskip{6pt}
\setlength\belowdisplayskip{6pt}
\bna\label{L6-add}
&&\hspace{-0.6cm}\mathbb{E}[V((k+1)\Delta)]\cr
&&\hspace{-1.0cm}\leq \Big(1+2\Delta\rho(m_kh)+\Delta^2\rho^2(m_kh)\Big)^h\exp\Bigg(-\frac{1}{2}\sum_{i=n_{\Delta,T}}^{m_k-1}c(i)\Bigg)\mathbb{E}[V(n_{\Delta,T}h\Delta)]\cr
&&\hspace{-0.6cm}+2h\Delta\Big(1+2\Delta\rho(m_kh)+\Delta^2\rho^2(m_kh)\Big)^h\sum_{i=n_{\Delta,T}}^{m_k-1}d^2(ih\Delta) \prod_{j=i+1}^{m_k-1}\Bigg(1-\frac{1}{2}c(j)\Bigg)\cr
&&\hspace{-0.6cm}+h\Delta d^2(m_kh\Delta)\Big(1+2\Delta\rho(m_kh)+\Delta^2\rho^2(m_kh)\Big)^h.
\ena
From (\ref{EM6}), we have}
{\setlength\abovedisplayskip{6pt}
\setlength\belowdisplayskip{6pt}
\ban
&&\hspace{-0.2cm}\widehat{X}_{n_{\Delta,T}\Delta}(t)=\widehat{X}_{n_{\Delta,T}\Delta}(k\Delta)+\int_{k\Delta}^tA(s)\overline{X}(s)ds+\int_{k\Delta}^tD(s)dw(s),
~k\Delta\leq t\leq (k+1)\Delta.
\ean
From the above equality, $\overline{X}(s)=\widehat{X}_{n_{\Delta,T}\Delta}(k\Delta),~s\in[k\Delta, (k+1)\Delta)$, the H\"{o}lder inequality, (\ref{L6-add}), Assumption \ref{as1} and the It$\rm{\hat{o}}$'s isometry, we get}
{\setlength\abovedisplayskip{5pt}
\setlength\belowdisplayskip{5pt}
\begin{align}\label{L6-4}
&~\mathbb{E}\Bigg[\Big\|\widehat{X}_{n_{\Delta,T}\Delta}(t)\Big\|_2^2\Bigg]\cr
\leq&~ 3\mathbb{E}\Bigg[\Big\|\widehat{X}_{n_{\Delta,T}\Delta}(k\Delta)\Big\|_2^2\Bigg]+3\mathbb{E}\left[\left\|\int_{k\Delta}^tA(s)\overline{X}(s)ds\right\|_2^2\right]+3\mathbb{E}\left[\left\|\int_{k\Delta}^tD(s)dw(s)\right\|_2^2\right]\cr
\leq&~ 3\mathbb{E}\Bigg[\Big\|\widehat{X}_{n_{\Delta,T}\Delta}(k\Delta)\Big\|_2^2\Bigg]
+3n\Delta\mathbb{E}\left[\int_{k\Delta}^t\|A(s)\|_2^2ds\Big\|\widehat{X}_{n_{\Delta,T}\Delta}(k\Delta)\Big\|_2^2\right]\cr
&~+3\mathbb{E}\left[\left\|\int_{k\Delta}^tD(s)dw(s)\right\|_2^2\right]\cr
=&~ 3\mathbb{E}\Bigg[\Big\|\widehat{X}_{n_{\Delta,T}\Delta}(k\Delta)\Big\|_2^2\Bigg]+3n\Delta\mathbb{E}\Bigg[\int_{k\Delta}^ta^2(s)ds\Big\|\widehat{X}_{n_{\Delta,T}\Delta}(k\Delta)\Big\|_2^2\Bigg]
+3\mathbb{E}\left[\left\|\int_{k\Delta}^tD(s)dw(s)\right\|_2^2\right]\cr
\leq&~ 3\Bigg(1+n\Delta\int_{k\Delta}^{(k+1)\Delta}a^2(s)ds\Bigg)\mathbb{E}\bigg[\Big\|\widehat{X}_{n_{\Delta,T}\Delta}(k\Delta)\Big\|_2^2\bigg]
+3\mathbb{E}\left[\left\|\int_{k\Delta}^tD(s)dw(s)\right\|_2^2\right]\cr
\leq&~ 3\Bigg(1+n\Delta\int_{k\Delta}^{(k+1)\Delta}a^2(s)ds\Bigg)\Big(1+2\Delta\rho(m_{k-1}h)+\Delta^2\rho^2(m_{k-1}h)\Big)^h\cr
&~\times\Bigg[\exp\Bigg(-\frac{1}{2}\sum_{i=n_{\Delta,T}}^{m_{k-1}-1}c(i)\Bigg)
\mathbb{E}[V(n_{\Delta,T} h\Delta)]+2h\Delta\sum_{i=n_{\Delta,T}}^{m_{k-1}-1}d^2(ih\Delta)
\prod_{j=i+1}^{m_{k-1}-1}\Bigg(1-\frac{1}{2}c(j)\Bigg)\cr
&~+h\Delta d^2(m_{k-1}h\Delta)\Bigg]+3\Delta d^2(k\Delta),
~t\in[k\Delta, (k+1)\Delta),~k\geq n_{\Delta,T}.
\end{align}
From $d(t+\Delta)=\mathcal{O}(d(t))$ and $k-m_{k-1}h\leq h+1$, we know that $d^2(k\Delta)=\mathcal O(d^2(m_{k-1}h\Delta))$, then there exists a constant $C$, such that $3\Delta d^2(k\Delta)\leq Ch\Delta d^2(m_{k-1}h\Delta),~k\geq n_{\Delta,T}$.
From (\ref{L6-4}), we further have}
{\setlength\abovedisplayskip{7pt}
\setlength\belowdisplayskip{3pt}
\ban
&&\hspace{-0.2cm}\mathbb{E}\left[\Big\|\widehat{X}_{n_{\Delta,T}\Delta}(t)\Big\|_2^2\right]\cr
&&\hspace{-0.6cm}\leq 3\Bigg(1+n\Delta\int_{k\Delta}^{(k+1)\Delta}a^2(s)ds\Bigg)\Big(1+2\Delta\rho(m_{k-1}h)+\Delta^2\rho^2(m_{k-1}h)\Big)^h
\Bigg[\exp\Bigg(-\frac{1}{2}\sum_{i=n_{\Delta,T}}^{m_{k-1}-1}c(i)\Bigg)\cr
&&\hspace{-0.2cm}\times\mathbb{E}[V(n_{\Delta,T} h\Delta)]+2h\Delta\sum_{i=n_{\Delta,T}}^{m_{k-1}-1}d^2(ih\Delta) 
\prod_{j=i+1}^{m_{k-1}-1}\Bigg(1-\frac{1}{2}c(j)\Bigg)
+Ch\Delta d^2(m_{k-1}h\Delta)\Bigg]\cr
&&\hspace{-0.6cm}\leq \varpi(t)\Bigg[
\mathbb{E}[V(n_{\Delta,T}h\Delta)]+2\int_{n_{\Delta,T}}^{\frac{k-1}{h}-1}\iota(sh\Delta)\exp\Bigg(-\int_{s+1}^{\frac{k-1}{h}-2}\vartheta(\tau)d\tau\Bigg)ds
+C\iota(t-(h+1)\Delta)\Bigg]\cr
&&\hspace{-0.6cm}\leq \varpi(t)\Bigg[
\mathbb{E}[V(n_{\Delta,T}h\Delta)]+2\int_{\frac{n_{\Delta,T}}{h}}^{\frac{t}{h\Delta}}\iota(sh\Delta)\exp\Bigg(-\int_{s+1}^{\frac{t}{h\Delta}-4}\vartheta(\tau)d\tau\Bigg)ds
+C\iota(t-(h+1)\Delta)\Bigg],\cr
&&\hspace{7.4cm} \ \ \  \ \  \ \ \ \ \ \  \ \ \ \ \ \
~t\in[k\Delta,(k+1)\Delta),~k\geq n_{\Delta,T}.
\ean
This completes the proof. }
\qed
\vskip 0.2cm

\noindent
{\bf Proof of Lemma \ref{lemma3}: }
For any $T\geq 0$, from (\ref{eq1}), we have
$x(t)=x(n_{\Delta,T}\Delta)+\int_{n_{\Delta,T}\Delta}^tA(s) x(s)ds+\int_{n_{\Delta,T}\Delta}^tD(s)dw(s), t\geq n_{\Delta,T}\Delta.$
From the above equality, (\ref{EM6}), the H\"{o}lder inequality and Assumption \ref{as1}, we have
{\setlength\abovedisplayskip{7pt}
\setlength\belowdisplayskip{7pt}
\bna\label{L3-2}
&&\hspace{0.4cm}\mathbb{E}\Big[\big\|\widehat{X}_{n_{\Delta,T}\Delta}(t)-x(t)\big\|_2^2\Big]\cr
&&\hspace{-0.0cm}=\mathbb{E}\bigg[\Big\|\int_{n_{\Delta,T}\Delta}^tA(s)\Big[\overline{X}(s)-\widehat{X}_{n_{\Delta,T}\Delta}(s)+\widehat{X}_{n_{\Delta,T}\Delta}(s)-x(s)]ds\Big\|_2^2\bigg]\cr
&&\hspace{-0.0cm}\leq\mathbb{E}\bigg[2\Big\|\int_{n_{\Delta,T}\Delta}^tA(s)\Big[\overline{X}(s)-\widehat{X}_{n_{\Delta,T}\Delta}(s)\Big]ds\Big\|_2^2
+2\Big\|\int_{n_{\Delta,T}\Delta}^tA(s)\Big[\widehat{X}_{n_{\Delta,T}\Delta}(s)-x(s)\Big]ds\Big\|_2^2\bigg]\cr
&&\hspace{-0.0cm}\leq 2n(t-n_{\Delta,T}\Delta)\mathbb{E}\bigg[\int_{n_{\Delta,T}\Delta}^t\|A(s)\|_2^2\big\|\overline{X}(s)-\widehat{X}_{n_{\Delta,T}\Delta}(s)\big\|_2^2ds\cr
&&\hspace{0.4cm}+\int_{n_{\Delta,T}\Delta}^t\|A(s)\|_2^2\big\|\widehat{X}_{n_{\Delta,T}\Delta}(s)-x(s)\big\|_2^2ds\bigg]\cr
&&\hspace{-0.0cm}\leq 2n(T'-n_{\Delta,T}\Delta)\mathbb{E}\bigg[\int_{n_{\Delta,T}\Delta}^{T'}a^2(s)\big\|\overline{X}(s)-\widehat{X}_{n_{\Delta,T}\Delta}(s)\big\|_2^2ds\bigg]\cr
&&\hspace{0.4cm}+2n(T'-n_{\Delta,T}\Delta)\mathbb{E}\bigg[\int_{n_{\Delta,T}\Delta}^{t}a^2(s)\big\|\widehat{X}_{n_{\Delta,T}\Delta}(s)-x(s)\big\|_2^2ds\bigg],
~n_{\Delta,T}\Delta\leq t\leq T'.
\ena
Denote $n_{\Delta,s}=\lfloor\frac{s}{\Delta}\rfloor$. If $s\geq n_{\Delta,T}\Delta$, we know that $n_{\Delta,s}\Delta=\lfloor \frac{s}{\Delta}\rfloor\Delta\geq n_{\Delta,T}\Delta$. Then we have}
{\setlength\abovedisplayskip{7pt}
\setlength\belowdisplayskip{7pt}
\bna\label{L3-4}
&&\hspace{-0.4cm}\widehat{X}_{n_{\Delta,T}\Delta}(s)=\widehat{X}_{n_{\Delta,T}\Delta}(n_{\Delta,s}\Delta)+\int_{n_{\Delta,s}\Delta}^sA(\tau)\overline{X}(\tau)d\tau+\int_{n_{\Delta,s}\Delta}^sD(\tau)dw(\tau),\cr
&&\hspace{7.7cm}
~s\in[n_{\Delta,s}\Delta, (n_{\Delta,s}+1)\Delta).
\ena
By the definition of $\overline{X}(s)$, we have}
{\setlength\abovedisplayskip{7pt}
\setlength\belowdisplayskip{5pt}
\ban
&&\hspace{-0.0cm}\mathbb{E}\bigg[\big\|\overline{X}(s)\big\|_2^2\bigg]\cr
&&\hspace{-0.4cm}= \mathbb{E}\bigg[\Big\|\sum\limits_{n_{\Delta,s}=0}^\infty X(n_{\Delta,s}\Delta) I_{[n_{\Delta,s}\Delta,(n_{\Delta,s}+1)\Delta)}(s)\Big\|_2^2\bigg]\cr
&&\hspace{-0.4cm}= \sum\limits_{n_{\Delta,s}=0}^\infty\mathbb{E} \bigg[\Big\|X(n_{\Delta,s}\Delta) I_{[n_{\Delta,s}\Delta,(n_{\Delta,s}+1)\Delta)}(s)\Big\|^{2}_2\bigg]\cr
&&\hspace{-0.0cm}+\sum\limits_{i\neq j}\mathbb{E}\Big[X^{\top}(n_{\Delta,i}\Delta)X(n_{\Delta,j}\Delta) I_{[n_{\Delta,i}\Delta,(n_{\Delta,i}+1)\Delta)}(s) I_{[n_{\Delta,j}\Delta,(n_{\Delta,j}+1)\Delta)}(s)\Big]\cr
&&\hspace{-0.4cm}= \sum\limits_{n_{\Delta,s}=0}^\infty\mathbb{E} \bigg[\Big\|X(n_{\Delta,s}\Delta) I_{[n_{\Delta,s}\Delta,(n_{\Delta,s}+1)\Delta)}(s)\Big\|^{2}_2\bigg],~s\in[n_{\Delta,s}\Delta, (n_{\Delta,s}+1)\Delta).
\ean
Therefore,}
{\setlength\abovedisplayskip{5pt}
\setlength\belowdisplayskip{5pt}
\ban
&&\hspace{-0.4cm}\mathbb{E}\bigg[\big\|\overline{X}(s)\big\|_2^2\bigg]\leq \sup\limits_{n_{\Delta,s}\Delta\leq r\leq s}\mathbb{E}\Big[\|X(r)\|_2^2\Big]\leq \sup\limits_{n_{\Delta,T}\Delta\leq t\leq T'}\mathbb{E}\Big[\|X(t)\|_2^2\Big].
\ean
Noting that $\overline{X}(s)=\widehat{X}_{n_{\Delta,T}\Delta}(n_{\Delta,s}\Delta),~s\in[n_{\Delta,s}\Delta,(n_{\Delta,s}+1)\Delta)$, by  (\ref{L3-4}), the H\"{o}lder inequality, the It$\rm{\hat{o}}$'s isometry, Assumption \ref{as1} and the above inequality, we get for any $s\in[n_{\Delta,s}\Delta, (n_{\Delta,s}+1)\Delta)$,}
{\setlength\abovedisplayskip{3pt}
\setlength\belowdisplayskip{3pt}
\ban
&&\hspace{0.1cm}\mathbb{E}\bigg[\Big\|\widehat{X}_{n_{\Delta,T}\Delta}(s)-\overline{X}(s)\Big\|_2^2\bigg]\cr
&&\hspace{-0.3cm}=\mathbb{E}\bigg[\bigg\|\int_{n_{\Delta,s}\Delta}^sA(\tau)\overline{X}(\tau)d\tau+\int_{n_{\Delta,s}\Delta}^sD(\tau)dw(\tau)\bigg\|_2^2\bigg]\cr
&&\hspace{-0.3cm}\leq 2\mathbb{E}\bigg[\bigg\|\int_{n_{\Delta,s}\Delta}^{(n_{\Delta,s}+1)\Delta}A(\tau)\overline{X}(\tau)d\tau\bigg\|_2^2\bigg]+2\mathbb{E}\bigg[\bigg\|\int_{n_{\Delta,s}\Delta}^{(n_{\Delta,s}+1)\Delta}D(\tau)dw(\tau)\bigg\|_2^2\bigg]\cr
&&\hspace{-0.3cm}\leq 2n\Delta\mathbb{E}\bigg[\int_{n_{\Delta,s}\Delta}^{(n_{\Delta,s}+1)\Delta}\|A(\tau)\|_2^2d\tau\Big\|\widehat{X}_{n_{\Delta,T}\Delta}(n_{\Delta,s}\Delta)\Big\|_2^2\bigg]
+2\int_{n_{\Delta,s}\Delta}^{(n_{\Delta,s}+1)\Delta}\mathbb{E}\big[\|D(\tau)\|_2^2\big]d\tau\cr
&&\hspace{-0.3cm}\leq 2n\Delta\int_{n_{\Delta,s}\Delta}^{(n_{\Delta,s}+1)\Delta}a^2(\tau)d\tau\mathbb{E}\bigg[\Big\|\widehat{X}_{n_{\Delta,T}\Delta}(n_{\Delta,s}\Delta)\Big\|_2^2\bigg]+2\int_{n_{\Delta,s}\Delta}^{(n_{\Delta,s}+1)\Delta}d^2(\tau)d\tau\cr
&&\hspace{-0.3cm}\leq 2n\Delta\int_{n_{\Delta,s}\Delta}^{(n_{\Delta,s}+1)\Delta}a^2(\tau)d\tau\sup\limits_{n_{\Delta,T}\Delta\leq r\leq T'}\mathbb{E}\bigg[\Big\|\widehat{X}_{n_{\Delta,T}\Delta}(r)\Big\|_2^2\bigg]
+2\int_{n_{\Delta,s}\Delta}^{(n_{\Delta,s}+1)\Delta}d^2(\tau)d\tau.
\ean
Substituting the above inequality into (\ref{L3-2}), we have for any $t\in[n_{\Delta,T}\Delta,T']$,}
{\setlength\abovedisplayskip{3pt}
\setlength\belowdisplayskip{3pt}
\ban
&&\hspace{-0.4cm}\mathbb{E}\bigg[\Big\|\widehat{X}_{n_{\Delta,T}\Delta}(t)-x(t)\Big\|_2^2\bigg]\cr
&&\hspace{-0.8cm}\leq 2n(T'-n_{\Delta,T}\Delta)\int_{n_{\Delta,T}\Delta}^{T'}a^2(s)\Bigg(2n\Delta\int_{n_{\Delta,s}\Delta}^{(n_{\Delta,s}+1)\Delta}a^2(\tau)d\tau
\sup\limits_{n_{\Delta,T}\Delta\leq r\leq T'}\mathbb{E}\bigg[\Big\|\widehat{X}_{n_{\Delta,T}\Delta}(r)\Big\|_2^2\bigg]\cr
&&\hspace{-0.4cm}+2\int_{n_{\Delta,s}\Delta}^{(n_{\Delta,s}+1)\Delta}d^2(\tau)d\tau\Bigg)ds
+2n(T'-n_{\Delta,T}\Delta)\mathbb{E}\bigg[\int_{n_{\Delta,T}\Delta}^{t}a^2(s)\Big\|\widehat{X}_{n_{\Delta,T}\Delta}(s)-x(s)\Big\|_2^2ds\bigg].
\ean
From the Gronwall inequality, we have}
{\setlength\abovedisplayskip{3pt}
\setlength\belowdisplayskip{3pt}
\ban
&&\hspace{0.2cm}\mathbb{E}\bigg[\Big\|\widehat{X}_{n_{\Delta,T}\Delta}(t)-x(t)\Big\|_2^2\bigg]\cr
&&\hspace{-0.2cm}\leq \bigg[4n(T'-n_{\Delta,T}\Delta)\int_{n_{\Delta,T}\Delta}^{T'}a^2(s)\bigg(n\Delta\int_{n_{\Delta,s}\Delta}^{(n_{\Delta,s}+1)\Delta}a^2(\tau)d\tau
\sup\limits_{n_{\Delta,T}\Delta\leq r\leq T'}\mathbb{E}\bigg[\Big\|\widehat{X}_{n_{\Delta,T}\Delta}(r)\Big\|_2^2\bigg]\cr
&&\hspace{0.2cm}+\int_{n_{\Delta,s}\Delta}^{(n_{\Delta,s}+1)\Delta}d^2(\tau)d\tau\bigg)ds\bigg]
\exp\bigg(2n(T'-n_{\Delta,T}\Delta)\int_{n_{\Delta,T}\Delta}^{t}a^2(s)ds\bigg),~t\in[n_{\Delta,T}\Delta,T'].
\ean
Then (\ref{L3-1}) is obtained by taking the supremum on both sides of the above inequality.}
\vskip 0.3cm

The proof of Theorem \ref{theorem2} needs the following Lemma \ref{lemma7}.

\begin{lemma}\label{lemma7}
\rm{Let $a(t)=\frac{a}{(t+1)^{\frac{1}{2}+\varepsilon_1}}$, $d(t)=\frac{d}{(t+1)^{\frac{1}{2}+\varepsilon_2}}$, $a>0$, $d>0$, $\Delta>0$, $\varepsilon_1\in\left(0,\frac{1}{2}\right)$, $\varepsilon_2\in\left(0,\frac{1}{2}\right)$, $T_k=4hk^{2+\eta}$, $\eta=\frac{4\varepsilon_1}{1-2\varepsilon_1}$, $h$ is a positive integer, 
$\vartheta(t)=\frac{1}{2}c(k)$, $t\in[k\Delta,(k+1)\Delta)$, where $c(k)=\frac{c}{(1+kh\Delta)^{\frac{1}{2}+\varepsilon_1}}$, and $\iota(t)=h\Delta d^2(k\Delta)$, $t\in[k\Delta,(k+1)\Delta)$, 
then we have the following conclusions:\\
$
\text{\rm{(i)}}~\lim\limits_{k\to\infty}(T_{k+1}\Delta-T_{k}\Delta)\int_{T_{k}\Delta}^{T_{k+1}\Delta}a^2(s)ds=\frac{4a^2(4h\Delta)^{1-2\varepsilon_1}}{(1-2\varepsilon_1)^2},\\
\text{\rm{(ii)}}~\lim\limits_{k\to\infty}(T_{k+1}\Delta-T_{k}\Delta)\int_{T_{k}\Delta}^{T_{k+1}\Delta}a^2(s)\int_{n_{\Delta,s}\Delta}^{(n_{\Delta,s}+1)\Delta}d^2(\tau)d\tau ds= 0,\\
\text{\rm{(iii)}}~\lim\limits_{k\to\infty}(T_{k+1}\Delta-T_{k}\Delta)\int_{T_{k}\Delta}^{T_{k+1}\Delta}a^2(s)\int_{n_{\Delta,s}\Delta}^{(n_{\Delta,s}+1)\Delta}a^2(\tau)d\tau ds= 0,\\
\text{\rm{(iv)}}~\lim\limits_{k\to\infty}\int_{\frac{T_{k}}{h}}^{\frac{T_{k+1}}{h}}\vartheta(\tau)d\tau=\frac{2c}{(1-2\varepsilon_1)4^{\varepsilon_1}h ^{\frac{1}{2}+\varepsilon_1}},\\
\text{\rm{(v)}}~\lim\limits_{k\to\infty}\int_{\frac{T_{k}}{h}}^{\frac{T_{k+1}}{h}}\exp\left(\int_{\frac{T_{k}}{h}-4}^{s+1}\vartheta(\tau)d\tau\right)\iota(sh\Delta)ds=0.
$}
\end{lemma}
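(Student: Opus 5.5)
The five claims are all explicit power-law computations, and the whole proof rests on one exponent identity. Since $\eta=\frac{4\varepsilon_1}{1-2\varepsilon_1}$, we have $2+\eta=\frac{2}{1-2\varepsilon_1}$. Hence
$$(T_k/h)^{\frac12-\varepsilon_1}=4^{\frac12-\varepsilon_1}k, \qquad T_k^{1+2\varepsilon_1}=(4h)^{1+2\varepsilon_1}k^{2+2\eta}.$$
In addition, $T_{k+1}-T_k\sim 4h(2+\eta)k^{1+\eta}$ and $T_{k+1}/T_k\to1$. Throughout, I will compare the piecewise constant functions $\vartheta$, $\iota$ and $d^2(n_{\Delta,s}\Delta)$ with their continuous counterparts. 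The comparison uses monotonicity to sandwich each floor between its neighbours: shifting the argument by $\Delta$ or $h\Delta$ changes each quantity by a factor tending to $1$ on the relevant windows, which lie far out at $T_k\to\infty$.

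For (i), I will compute $\int_{T_k\Delta}^{T_{k+1}\Delta}a^2(s)ds=\frac{a^2}{2\varepsilon_1}\big[(T_k\Delta+1)^{-2\varepsilon_1}-(T_{k+1}\Delta+1)^{-2\varepsilon_1}\big]$. By the mean value theorem and $T_{k+1}/T_k\to1$, this is asymptotic to $a^2(T_{k+1}-T_k)\Delta\,(T_k\Delta)^{-1-2\varepsilon_1}$. Multiplying by $(T_{k+1}-T_k)\Delta$ gives the expression $a^2\Delta^{1-2\varepsilon_1}(T_{k+1}-T_k)^2T_k^{-1-2\varepsilon_1}$. By the exponent identity this equals $a^2\Delta^{1-2\varepsilon_1}(4h)^{1-2\varepsilon_1}(2+\eta)^2$ in the limit, which is the stated constant $\frac{4a^2(4h\Delta)^{1-2\varepsilon_1}}{(1-2\varepsilon_1)^2}$.

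For (ii) and (iii), I will bound the inner integral uniformly for $s\ge T_k\Delta$. Since $n_{\Delta,s}\Delta\ge T_k\Delta-\Delta$ and $d,a$ are decreasing, it is at most $\Delta d^2(T_k\Delta-\Delta)$, respectively $\Delta a^2(T_k\Delta-\Delta)$, and both tend to $0$. Pulling this factor out leaves exactly the bounded quantity of (i), so each product tends to $0$.

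For (iv), for large $\tau$ I will write $\vartheta(\tau)=\frac c2(1+\lfloor\tau/\Delta\rfloor h\Delta)^{-\frac12-\varepsilon_1}$, which lies between $\frac c2(1+h\tau)^{-\frac12-\varepsilon_1}$ and $\frac c2(1+h\tau-h\Delta)^{-\frac12-\varepsilon_1}$. Integrating either bound over $[T_k/h,T_{k+1}/h]$ gives
$$\frac{c}{(1-2\varepsilon_1)h^{\frac12+\varepsilon_1}}\Big[(T_{k+1}/h)^{\frac12-\varepsilon_1}-(T_k/h)^{\frac12-\varepsilon_1}\Big]+o(1).$$
The $o(1)$ term covers the constant shifts, whose effect is of order $T_k^{-\frac12-\varepsilon_1}\to0$. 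By the exponent identity the bracket equals $4^{\frac12-\varepsilon_1}=2\cdot4^{-\varepsilon_1}$, which yields the stated limit.

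For (v), the exponent $\int_{T_k/h-4}^{s+1}\vartheta$ with $s\le T_{k+1}/h$ is at most the quantity from (iv) plus $\int$ of $\vartheta$ over two windows of length $4$ and $1$, on which $\vartheta\to0$. So the exponential is bounded by a constant $M$ for all large $k$. What remains is $M h\Delta\int_{T_k/h}^{T_{k+1}/h}d^2(\lfloor sh\rfloor\Delta)\,ds$, which by monotonicity is dominated by a constant multiple of the tail $\int_{T_k\Delta-h\Delta}^\infty d^2(t)dt$. This tail tends to $0$ because $1+2\varepsilon_2>1$.

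The main obstacle is bookkeeping rather than conceptual. In (iv) and (v) one must check that replacing the floor functions by continuous ones, and shifting the integration limits by constants, perturbs the integrals only by $o(1)$. I will handle this by the two-sided monotone sandwich described above, combined with $T_{k+1}/T_k\to1$.
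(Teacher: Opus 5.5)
Your proof is correct and follows essentially the paper's route. Part (i) rests on the mean value theorem together with the identity $(2+\eta)(1+2\varepsilon_1)=2+2\eta$, part (iv) on a monotone sandwich of the step function $\vartheta$, and part (v) on bounding the exponential via (iv). The only differences are small economies in your favour: in (ii)--(iii) you pull out the uniform bound $\Delta d^2(T_k\Delta-\Delta)\to 0$ and reuse (i), and in (v) you use the vanishing tail of the integrable $d^2$. The paper instead recomputes these integrals explicitly, which costs it the extra exponent comparisons $2+2\eta<(2+\eta)(2+2\varepsilon_1+2\varepsilon_2)$ and $(2+\eta)(1+2\varepsilon_2)>1+\eta$.
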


\noindent
\proof
(i) From the definitions of $a(t)$ and $T_k$, we have
{\setlength\abovedisplayskip{3pt}
\setlength\belowdisplayskip{3pt}
\ban
&&\hspace{-0.0cm}\lim\limits_{k\to\infty}(T_{k+1}\Delta-T_{k}\Delta)\int_{T_{k}\Delta}^{T_{k+1}\Delta}a^2(s)ds\cr
&&\hspace{-0.4cm}=\lim\limits_{k\to\infty}-2h\Delta a^2\varepsilon_1^{-1}\Big[(k+1)^{2+\eta}-k^{2+\eta}\Big]\Big[\big(1+4h(k+1)^{2+\eta}\Delta\big)^{-2\varepsilon_1}
-\big(1+4hk^{2+\eta}\Delta\big)^{-2\varepsilon_1}\Big].
\ean
Denote $g_1(k)=k^{2+\eta},~ g_2(k)=\left(1+4hk^{2+\eta}\Delta\right)^{-2\varepsilon_1}$. By the differential mean-value theorem, there exist $\theta_1(k)\in(0,1),~\theta_2(k)\in(0,1)$, such that} 
{\setlength\abovedisplayskip{3pt}
\setlength\belowdisplayskip{3pt}
\begin{align}\label{L4-3}
&g_1(k+1)-g_1(k)
=(2+\eta)(k+\theta_1(k))^{1+\eta},\cr
&g_2(k+1)-g_2(k)
=\frac{-8\varepsilon_1h\Delta(2+\eta)(k+\theta_2(k))^{1+\eta}}{\left(1+4h\Delta(k+\theta_2(k))^{2+\eta}\right)^{1+2\varepsilon_1}},
\end{align}
which further gives}
\begin{align*}
&\lim\limits_{k\to\infty}(T_{k+1}\Delta-T_{k}\Delta)\int_{T_{k}\Delta}^{T_{k+1}\Delta}a^2(s)ds\cr
=&\lim\limits_{k\to\infty}\frac{16h^2a^2\Delta^2(2+\eta)^2{(k+\theta_2(k))}^{2+2\eta}}{\left(1+4h\Delta{(k+\theta_2(k))}^{2+\eta}\right)^{1+2\varepsilon_1}}
\left(\frac{k+\theta_1(k)}{k+\theta_2(k)}\right)^{1+\eta}.
\end{align*}
As $\theta_1(k)\in(0,1),~\theta_2(k)\in(0,1)$, we have
{\setlength\abovedisplayskip{3pt}
\setlength\belowdisplayskip{3pt}
\ban
\lim\limits_{k\to\infty}\left(\frac{k+\theta_1(k)}{k+\theta_2(k)}\right)^{1+\eta}
=\lim\limits_{k\to\infty}\left(\frac{1+\frac{\theta_1(k)}{k}}{1+\frac{\theta_2(k)}{k}}\right)^{1+\eta}
=1.
\ean
By $\eta=\frac{4\varepsilon_1}{1-2\varepsilon_1}$, we have $2+2\eta=(2+\eta)(1+2\varepsilon_1)$, which leads to
$$
\lim\limits_{k\to\infty}\frac{16h^2\Delta^2{(k+\theta_2(k))}^{2+2\eta}}{\left(1+4h\Delta{(k+\theta_2(k))}^{2+\eta}\right)^{1+2\varepsilon_1}}=(4h\Delta)^{1-2\varepsilon_1}.
$$
Therefore, we have (i).}
\vskip 0.2cm

(ii) From $n_{\Delta,s}=\lfloor\frac{s}{\Delta}\rfloor$, we have $\frac{s}{\Delta}-1\leq n_{\Delta,s}\leq \frac{s}{\Delta}$, which gives $s-\Delta\leq n_{\Delta,s}\Delta\leq s$.
If $\tau\in[n_{\Delta,s}\Delta, (n_{\Delta,s}+1)\Delta]$, we get $d(\tau)\leq d(n_{\Delta,s}\Delta)\leq d(s-\Delta)$. Thus, we have
$
\int_{n_{\Delta,s}\Delta}^{(n_{\Delta,s}+1)\Delta}d^2(\tau)d\tau
\leq d^2(n_{\Delta,s}\Delta)\Delta\leq d^2(s-\Delta)\Delta.
$
Denote $a(t)=a,~ d(t)=d$, $t\in[-\Delta, 0)$. We further have
{\setlength\abovedisplayskip{3pt}
\setlength\belowdisplayskip{3pt}
\begin{align}\label{L4-7}
&\lim\limits_{k\to\infty}(T_{k+1}\Delta-T_{k}\Delta)\int_{T_{k}\Delta}^{T_{k+1}\Delta}a^2(s)\int_{n_{\Delta,s}\Delta}^{(n_{\Delta,s}+1)\Delta}d^2(\tau)d\tau ds\notag\\
\leq & \lim\limits_{k\to\infty}(T_{k+1}\Delta-T_{k}\Delta)\int_{T_{k}\Delta}^{T_{k+1}\Delta}a^2(s-\Delta)d^2(s-\Delta)\Delta ds\notag\\
=&\lim\limits_{k\to\infty}4h\Delta\left[(k+1)^{2+\eta}-k^{2+\eta}\right]\left(-\frac{a^2d^2\Delta}{1+2\varepsilon_1+2\varepsilon_2}\right)
\bigg[\Big(1+4h(k+1)^{2+\eta}\Delta
-\Delta\Big)^{-(1+2\varepsilon_1+2\varepsilon_2)}\notag\\
&-\Big(1+4hk^{2+\eta}\Delta-\Delta\Big)^{-(1+2\varepsilon_1+2\varepsilon_2)}\bigg].
\end{align}
Denote $g_3(k)=\left(1+4h\Delta k^{2+\eta}-\Delta\right)^{-1-2\varepsilon_1-2\varepsilon_2}$. There exists $\theta_3(k)\in(0,1)$, such that 
$g_3(k+1)-g_3(k)
=-(1+2(\varepsilon_1+\varepsilon_2))(1+4h\Delta(k+\theta_3(k))^{2+\eta}-\Delta)^{-(2+2\varepsilon_1+2\varepsilon_2)}
4h\Delta(2+\eta) (k+\theta_3(k))^{1+\eta}.$
Substituting (\ref{L4-3}) and the above equality into (\ref{L4-7}), we obtain}
{\setlength\abovedisplayskip{3pt}
\setlength\belowdisplayskip{3pt}
\begin{align*}
&\lim\limits_{k\to\infty}(T_{k+1}\Delta-T_{k}\Delta)\int_{T_{k}\Delta}^{T_{k+1}\Delta}a^2(s)\int_{n_{\Delta,s}\Delta}^{(n_{\Delta,s}+1)\Delta}d^2(\tau)d\tau ds\\
\leq&\lim\limits_{k\to\infty}\frac{16h^2\Delta^3a^2d^2(2+\eta)^2{(k+\theta_1(k))}^{1+\eta}{(k+\theta_3(k))}^{1+\eta}}{\left(1+4h\Delta{(k+\theta_3(k))}^{2+\eta}-\Delta\right)^{2+2\varepsilon_1+2\varepsilon_2}}\\
=&\lim\limits_{k\to\infty}\frac{16h^2\Delta ^3a^2d^2(2+\eta)^2{(k+\theta_3(k))}^{2+2\eta}}{\left(1+{4h\Delta(k+\theta_3(k))}^{2+\eta}-\Delta\right)^{2+2\varepsilon_1+2\varepsilon_2}}.
\end{align*}
By $\eta=\frac{4\varepsilon_1}{1-2\varepsilon_1}$, 
we have
$2+2\eta-(2+\eta)(2+2\varepsilon_1+2\varepsilon_2)=-2-\eta-4\varepsilon_2-2\eta\varepsilon_2<0,$
which further gives}
{\setlength\abovedisplayskip{3pt}
\setlength\belowdisplayskip{3pt}
\ban
\lim\limits_{k\to\infty}\frac{16h^2\Delta^2{(k+\theta_3(k))}^{2+2\eta}}{\left(1+4h\Delta{(k+\theta_3(k))}^{2+\eta}-\Delta\right)^{2+2\varepsilon_1+2\varepsilon_2}}=0.
\ean
Thus,
$\lim_{k\to\infty}(T_{k+1}\Delta-T_{k}\Delta)\int_{T_{k}\Delta}^{T_{k+1}\Delta}a^2(s)\int_{n_{\Delta,s}\Delta}^{(n_{\Delta,s}+1)\Delta}d^2(\tau)d\tau ds\leq 0.$
As  $(T_{k+1}\Delta-T_{k}\Delta)\int_{T_{k}\Delta}^{T_{k+1}\Delta}\\a^2(s) \int_{n_{\Delta,s}\Delta}^{(n_{\Delta,s}+1)\Delta}d^2(\tau)d\tau ds\geq 0$,
we have (ii).}
\vskip 0.2cm

(iii) Take $d=a,~ \varepsilon_2=\varepsilon_1$ in (ii), we directly get the conclusion.
\vskip 0.2cm

(iv) 
From the definitions of $\vartheta(t)$ and $c(k)$, we know that
$
\vartheta(t)
=\frac{1}{2}c\Big(1+\lfloor\frac{t}{\Delta}\rfloor h\Delta\Big)^{-\frac{1}{2}-\varepsilon_1},~t\in[k\Delta,(k+1)\Delta).
$
On one hand, from the monotonicity of $\vartheta(t)$, we have
{\setlength\abovedisplayskip{3pt}
\setlength\belowdisplayskip{3pt}
\begin{align*}
\int_{\frac{T_{k}}{h}}^{\frac{T_{k+1}}{h}}\vartheta(t)dt
\leq&~\frac{c}{2\Big(1+\lfloor\frac{T_k}{h\Delta}\rfloor h\Delta \Big)^{\frac{1}{2}+\varepsilon_1}}\left(4(k+1)^{2+\eta}-4k^{2+\eta}\right)\cr
\leq&~\frac{c}{2\left(1+\left(\frac{4k^{2+\eta}}{\Delta}-1\right)h\Delta\right)^{\frac{1}{2}+\varepsilon_1}}\left(4(k+1)^{2+\eta}-4 k^{2+\eta}\right)\cr
\leq&~\frac{2 c(2+\eta)(k+1)^{1+\eta}}{\Big(1+4h k^{2+\eta}-h\Delta\Big)^{\frac{1}{2}+\varepsilon_1}}.
\end{align*}
From $\eta=\frac{4\varepsilon_1}{1-2\varepsilon_1}$, we have
$
\lim\limits_{k\to\infty}\frac{2 c(2+\eta)(k+1)^{1+\eta}}{\Big(1+4h k^{2+\eta}-h\Delta\Big)^{\frac{1}{2}+\varepsilon_1}}
=\frac{c(2+\eta)}{4^{\varepsilon_1}h^{\frac{1}{2}+\varepsilon_1}},
$
which further gives}
{\setlength\abovedisplayskip{3pt}
\setlength\belowdisplayskip{3pt}
\bna\label{L4-12}
\lim\limits_{k\to\infty}\int_{\frac{T_{k}}{h}}^{\frac{T_{k+1}}{h}}\vartheta(t)dt
\leq\frac{c(2+\eta)}{4^{\varepsilon_1}h^{\frac{1}{2}+\varepsilon_1}}.
\ena
On the other hand, from the monotonicity of $\vartheta(t)$, we have}
{\setlength\abovedisplayskip{3pt}
\setlength\belowdisplayskip{3pt}
\begin{align*}
\int_{\frac{T_{k}}{h}}^{\frac{T_{k+1}}{h}}\vartheta(t)dt
\geq&~
\frac{c}{2\Big(1+\lfloor\frac{T_{k+1}}{h\Delta}\rfloor h\Delta \Big)^{\frac{1}{2}+\varepsilon_1}}\left(4(k+1)^{2+\eta}-4k^{2+\eta}\right)\cr
\geq&~\frac{c}{2\Big(1+4(k+1)^{2+\eta}
h\Big)^{\frac{1}{2}+\varepsilon_1}}\left(4(k+1)^{2+\eta}-4 k^{2+\eta}\right)\cr
\geq&~\frac{2 c(2+\eta)k^{1+\eta}}{\Big(1+4h(k+1)^{2+\eta}\Big)^{\frac{1}{2}+\varepsilon_1}},
\end{align*}
which further leads to
$
\lim\limits_{k\to\infty}\int_{\frac{T_{k}}{h}}^{\frac{T_{k+1}}{h}}\vartheta(t)dt
\geq\frac{c(2+\eta)}{4^{\varepsilon_1}h^{\frac{1}{2}+\varepsilon_1}}.
$
Combining (\ref{L4-12}) and the above inequality yields (iv).}
\vskip 0.2cm

(v) By    $\int_{\frac{T_{k}}{h}}^{\frac{T_{k+1}}{h}}\exp\Big(\int_{\frac{T_{k}}{h}-4}^{s+1}\vartheta(\tau)d\tau\Big)\iota(sh\Delta)ds\leq \exp\Big(\int_{\frac{T_{k}}{h}-4}^{\frac{T_{k+1}}{h}+1}\vartheta(\tau)d\tau\Big)
\int_{\frac{T_{k}}{h}}^{\frac{T_{k+1}}{h}} \iota(sh\Delta)ds$ and (iv),  we know that  $\lim\limits_{k\to\infty}\exp\bigg(\int_{\frac{T_{k}}{h}-4}^{\frac{T_{k+1}}{h}+1}\vartheta(\tau)d\tau\bigg)$ is finite.
Then by the definitions of $\iota(t)$ and $d(t)$, we get
$
\iota(t)=\frac{h\Delta d^2}{(1+k\Delta)^{1+2\varepsilon_2}}, 
~t\in[k\Delta,(k+1)\Delta).
$
Then we have
{\setlength\abovedisplayskip{3pt}
\setlength\belowdisplayskip{3pt}
\begin{align*}
\int_{\frac{T_{k}}{h}}^{\frac{T_{k+1}}{h}}\iota(sh\Delta)ds
\leq&~
\frac{h\Delta d^2}{\Big(1+T_{k}\Delta\Big)^{1+2\varepsilon_2}}\left(\frac{T_{k+1}}{h}-\frac{T_{k}}{h}\right)\cr
\leq&~\frac{h\Delta d^2}{\Big(1+4hk^{2+\eta}\Delta\Big)^{1+2\varepsilon_2}}\left(4(k+1)^{2+\eta}-4k^{2+\eta}\right)\cr
\leq&~\frac{4h\Delta d^2(2+\eta)(k+1)^{1+\eta}}{\Big(1+4hk^{2+\eta}\Delta\Big)^{1+2\varepsilon_2}}.
\end{align*}
From $(2+\eta)(1+2\varepsilon_2)>1+\eta$, we obtain
$
\lim\limits_{k\to\infty}4h\Delta d^2(2+\eta)(k+1)^{1+\eta}\Big(1+4hk^{2+\eta}\Delta\Big)^{-1-2\varepsilon_2}=0,
$
which further gives $\lim\limits_{k\to\infty}\int_{\frac{T_{k}}{h}}^{\frac{T_{k+1}}{h}}\iota(sh\Delta)ds=0.$
Therefore, we have (v). } 
\qed
\vskip 0.2cm

\noindent {\bf Proof of Corollary \ref{coro2}: }
As for any symmetric matrices $A$ and $B$, $\lambda_{\min}(A+B)\geq \lambda_{\min}(A)+\lambda_{\min}(B)$, we have
{\setlength\abovedisplayskip{5pt}
\setlength\belowdisplayskip{5pt}
\begin{align}
&\mu_2\left(-\int_{mh\Delta}^{(m+1)h\Delta}\mathbb{E}[\alpha(s)r(s)+\beta(s)(\mathcal{L}_{\mathcal{G}}\otimes I_n)|\mathcal{F}(mh\Delta)]ds
\right)\notag\\
=&   -\frac{1}{2}\lambda_{\min}\Bigg(\int_{mh\Delta}^{(m+1)h\Delta}\mathbb{E}\big[2\alpha(s)
 r(s)
+\beta(s)
\left(\mathcal{L}_{\mathcal{G}}+\mathcal{L}_{\mathcal{G}}^{\top}\right)\otimes I_n\big|\mathcal{F}(mh\Delta)\big]ds\Bigg)\notag\\
\leq & -\lambda_{\min}\left(\int_{mh\Delta}^{(m+1)h\Delta}\mathbb{E}\left[\alpha(s)
 r(s)\big|\mathcal{F}(mh\Delta)\right]ds\right)\notag\\
&-\frac{1}{2}\lambda_{\min}
\left(\int_{mh\Delta}^{
(m+1)h\Delta}\mathbb{E}\left[\beta(s)
\left(\mathcal{L}_{\mathcal{G}}+\mathcal{L}_{\mathcal{G}}^{\top}\right)\otimes I_n\big|\mathcal{F}(mh\Delta)\right]ds\right).\label{MARKOV0}
\end{align}
For the second term on the r.h.s. of the above inequality, by Condition \ref{condition3} (iii) and (iv), we have}
{\setlength\abovedisplayskip{5pt}
\setlength\belowdisplayskip{5pt}
\begin{align}
&\frac{1}{2}\lambda_{\min}
\left(\int_{mh\Delta}^{
(m+1)h\Delta}\mathbb{E}\left[\beta(s)
\left(\mathcal{L}_{\mathcal{G}}+\mathcal{L}_{\mathcal{G}}^{\top}\right)\otimes I_n\big|\mathcal{F}(mh\Delta)\right]ds\right)
\geq  -\alpha_{2}\frac{h}{2}\Delta \beta(mh\Delta).\label{MARKOV01}
\end{align}
As for any symmetric matrices $A$ and $B$, $\lambda_{\min}(A+B)\geq \lambda_{\min}(A)+\lambda_{\min}(B)$, we have}
{\setlength\abovedisplayskip{5pt}
\setlength\belowdisplayskip{5pt}
\begin{align}
&~\lambda_{\min}\left(\sum_{k=mh}^{(m+1)h-1}\int_{k\Delta}^{(k+1)\Delta}\mathbb{E}
\left[r(s)\big|\mathcal{F}(mh\Delta)\right]ds\right)\notag\\
\geq &~ \lambda_{\min}\left(\sum_{k=mh}^{(m+1)h-1}\int_{k\Delta}^{(k+1)\Delta}
\mathbb{E}\left[r(s)-r(k\Delta)\big|\mathcal{F}(mh\Delta)
\right]ds\right)\notag\\
&~+ \Delta\lambda_{\min}\left(\sum_{k=mh}^{(m+1)h-1}\mathbb{E}\left[
r(k\Delta)\big|\mathcal{F}(mh\Delta)\right]\right).\label{DEPARTMINEIGENVA}
\end{align}
By Definition \ref{exponentialergodic} and the strong 1-exponential ergodicity of $\{r(t),\ t \geq 0\}$ in Condition \ref{condition3} (ii), there exists $R_{1}>0$ and $\delta_{1}>0$, such that  $\sum_{r_{l}\in E}|P\{r(k\Delta)=r_{l}| r(0)=r_{i}\}-\pi_{l}| \leq R_{1} e^{-\delta_{1} k\Delta},\ \forall \ r_{i} \in E$.
By $r_l\leq \alpha_1 I_{Nn}, \ \forall \ r_l\in E$ in Condition \ref{condition3} (ii), we know that $\sup_{r_{l}\in E}\left\|r_{l}\right\|_{2}<\infty$. Then, for the second term on the  r.h.s. of (\ref{DEPARTMINEIGENVA}), similar to the proof of Theorem 2 in \cite{li2018distributed}, we know that there exists $h_{0}>0$, such that if $h \geq h_{0}$, then we have}
{\setlength\abovedisplayskip{5pt}
\setlength\belowdisplayskip{5pt}
\begin{align}
&~ \inf_{m\geq 0}\Delta\lambda_{\min}\left(\sum_{k=mh}^{(m+1)h-1}\mathbb{E}\left[
r(k\Delta)\big|\mathcal{F}(mh\Delta)\right]\right)
\geq~\frac{\Delta}{2} h_{0}\lambda_{\min}\left(\sum_{j=0}^{\infty} \pi_{j}r_{j}\right).\label{DEPARTMINEIGENVA2}
\end{align}
By the Markov property, we have}
{\setlength\abovedisplayskip{5pt}
\setlength\belowdisplayskip{5pt}
\begin{align}
 \mathbb{E}\left[r(s)-r(k\Delta)\big|\mathcal{F}(mh\Delta)\right]
=&~\mathbb{E}\left[\mathbb{E}\left[r(s)-r(k\Delta)\big|\mathcal{F}(k\Delta)\right]\big|\mathcal{F}
(mh\Delta)\right]\notag\\
=&~\mathbb{E}\left[\mathbb{E}\left[r(s)-r(k\Delta)\big|r(k\Delta)
\right]\big|\mathcal{F}(mh\Delta)\right], \notag\\
&~~~~~~~~~~~~~~~~~~~~~\
\forall \ s\in [k\Delta, (k+1)\Delta ].\label{DEPARTMINEIGENVA10}
\end{align}
For any $r_{i}\in E$, suppose $r(k\Delta)=r_{i}$ and denote $T_{i,k}^{\Delta}=\inf\{t\geq k\Delta|r(t)\neq r_{i}\},$ if this set is not empty, otherwise,  $T_{i,k}^{\Delta}=+\infty$. By $\mathbf{0}_{Nn\times Nn}\leq r_{l}\leq \alpha_{1}I_{Nn}, \ \forall \ r_{l}\in E$  in Condition \ref{condition3} (ii)  and Proposition 2.8 in \cite{Anderson1991}, we have}
{\setlength\abovedisplayskip{5pt}
\setlength\belowdisplayskip{5pt}
\begin{align}
\mathbb{E}\left[r(s)-r(k\Delta)\big|r(k\Delta)
\right]
=&\sum_{r_{i}\in E}\mathbb{E}\left[r(s)-r(k\Delta)\big|r(k\Delta)=r_{i}
\right]I_{r(k\Delta)=r_{i}}\notag\\
\geq & \sum_{r_{i}\in E} \Big(P\{T_{i,k}^{\Delta} >(k+1)\Delta|r(k\Delta)=r_{i}\}\times\mathbf{0}_{Nn\times Nn}\notag\\
&+ P\{T_{i,k}^{\Delta} \leq(k+1)\Delta|r(k\Delta)=r_{i}\} \times (-\alpha_{1}I_{Nn})\Big)I_{r(k\Delta)=r_{i}}\notag\\
= & \sum_{r_{i}\in E} \left((1-e^{\gamma_{ii}\Delta})\times (-\alpha_{1}I_{Nn})\right)I_{r(k\Delta)=r_{i}}\notag\\
\geq&  -\alpha_{1}(1-e^{\inf_{r_{i}\in E}\gamma_{ii}\Delta})I_{Nn}, 
\ \forall \ s\in [k\Delta, (k+1)\Delta ].\notag
\end{align}
Then, for the first term on the  r.h.s. of (\ref{DEPARTMINEIGENVA}), by (\ref{DEPARTMINEIGENVA10}), we get}
{\setlength\abovedisplayskip{5pt}
\setlength\belowdisplayskip{5pt}
\begin{align*}
&\lambda_{\min}\bigg(\sum\limits_{k=mh}^{(m+1)h-1} \int_{k\Delta}^{(k+1)\Delta}
\mathbb{E}[r(s)-r(k\Delta)|\mathcal{F}(mh\Delta)
]ds\bigg)
\geq  -h \Delta \alpha_{1}(1-e^{\inf_{r_{i}\in E}\gamma_{ii}\Delta}).\notag
\end{align*}
Then, by the above inequality, $\sup_{r_{i}\in E}|\gamma_{ii}|<\infty$  in Condition  \ref{condition3} (ii), (\ref{DEPARTMINEIGENVA}) and (\ref{DEPARTMINEIGENVA2}), we know that there exists $\Delta_{1}>0$, such that if $0\leq \Delta\leq \Delta_{1}$ and $h \geq h_{0}$, then we have}
{\setlength\abovedisplayskip{5pt}
\setlength\belowdisplayskip{5pt}
\begin{align}
&~\lambda_{\min}\left(\sum_{k=mh}^{(m+1)h-1}\int_{k\Delta}^{(k+1)\Delta}\mathbb{E}
\left[r(s)\Big|\mathcal{F}(mh\Delta)\right]ds\right)
\geq~  \frac{\Delta}{4} h_{0}\lambda_{\min}\left(\sum_{j=0}^{\infty} \pi_{j}r_{j}\right)>0.\notag
\end{align}
By the above inequality and Condition \ref{condition3} (ii) and (iv), we have}
{\setlength\abovedisplayskip{5pt}
\setlength\belowdisplayskip{5pt}
\begin{align*}
 &~\lambda_{\min}\left(\int_{mh\Delta}^{(m+1)h\Delta}\mathbb{E}[\alpha(s)
r(s)|\mathcal{F}(mh\Delta)]ds\right)
\geq~   \frac{\Delta}{4} h_{0} \lambda_{\min}\left(\sum_{j=0}^{\infty} \pi_{j}r_{j} \right)\alpha((m+1)h\Delta).
\end{align*}
Denote $c(m)=\frac{\Delta}{4} h_{0}\lambda_{\min}\big(\sum_{j=0}^{\infty} \pi_{j} r_{j}\big)\alpha((m+1)h\Delta)-\frac{h}{2}\alpha_{2}\Delta \beta(mh\Delta)$.
Then, by (\ref{MARKOV0}), (\ref{MARKOV01}) and the above inequality, we have
$
\mu_2\Big(-\int_{mh\Delta}^{(m+1)h\Delta} \mathbb{E}[\alpha(s)r(s)+\beta(s)(\mathcal{L}_{\mathcal{G}}\otimes I_n)|\mathcal{F}
(mh\Delta)]ds\Big)
\leq -c(m), \notag
$
where $ 0\leq \Delta\leq \Delta_{1}, \ h \geq h_{0}$, and by Condition \ref{condition3} (iv), we have
$\liminf_{m\to\infty}c(m)(1+mh\Delta)^{\frac{1}{2}+\varepsilon_1}>0$.
By Condition \ref{condition3} (ii), we have
$\|\mathcal H(t)\|_2=\sqrt{\lambda_{\max}(\mathcal H^{\top}(t)\mathcal H(t))}\leq\sqrt{\alpha_1}$.}
Then the conditions in Theorem \ref{asycontheorem} are satisfied, so the algorithm converges in mean square.
\qed



\end{appendices}

\end{CJK}
\end{document}